\documentclass[11pt,letterpaper]{article}
\usepackage[margin=1in]{geometry}
\usepackage[T1]{fontenc}
\usepackage{amsmath,amsthm,mathtools}
\usepackage{lmodern,amssymb}
\usepackage{microtype}
\usepackage{nicefrac}
\usepackage{enumitem,booktabs,array,multirow}
\usepackage{tabularx,graphicx}
\usepackage{needspace,float}
\usepackage{algpseudocode}
\floatstyle{ruled}
\newfloat{algorithm}{tbp}{loa}
\floatname{algorithm}{Algorithm}
\usepackage[table,dvipsnames]{xcolor}
\definecolor{linkblue}{HTML}{254B78}
\definecolor{accentblue}{HTML}{315F7D}
\definecolor{accentlight}{HTML}{EAF1F5}
\definecolor{slate}{HTML}{5D6872}
\definecolor{softgray}{HTML}{F2F4F5}
\usepackage{tikz}
\usetikzlibrary{arrows.meta,calc,decorations.pathreplacing,positioning}
\usepackage[square]{natbib}
\definecolor{citepurple}{HTML}{2A2AA5}
\usepackage[colorlinks=true,linkcolor=linkblue,citecolor=citepurple,urlcolor=linkblue,bookmarksnumbered=true]{hyperref}
\makeatletter
\providecommand{\theHALG@line}{}
\renewcommand{\theHALG@line}{\arabic{algorithm}.\arabic{ALG@line}}
\makeatother
\usepackage{aliascnt}
\usepackage[nameinlink,capitalise,noabbrev]{cleveref}
\crefname{algorithm}{Algorithm}{Algorithms}
\AtBeginDocument{%
  \setlength{\abovedisplayskip}{8pt plus 2pt minus 2pt}%
  \setlength{\belowdisplayskip}{8pt plus 2pt minus 2pt}%
  \setlength{\abovedisplayshortskip}{3pt plus 2pt}%
  \setlength{\belowdisplayshortskip}{6pt plus 2pt minus 2pt}%
}
\setlist{topsep=4pt,itemsep=2pt,parsep=0pt}
\newtheorem{theorem}{Theorem}[section]
\newaliascnt{lemma}{theorem}
\newtheorem{lemma}[lemma]{Lemma}
\aliascntresetthe{lemma}
\newaliascnt{proposition}{theorem}
\newtheorem{proposition}[proposition]{Proposition}
\aliascntresetthe{proposition}
\newaliascnt{corollary}{theorem}
\newtheorem{corollary}[corollary]{Corollary}
\aliascntresetthe{corollary}
\theoremstyle{definition}
\newaliascnt{definition}{theorem}

\aliascntresetthe{definition}
\theoremstyle{remark}
\newaliascnt{remark}{theorem}

\aliascntresetthe{remark}
\crefname{lemma}{Lemma}{Lemmas}
\crefname{proposition}{Proposition}{Propositions}
\crefname{corollary}{Corollary}{Corollaries}
\crefname{definition}{Definition}{Definitions}
\crefname{remark}{Remark}{Remarks}
\crefname{appendix}{Appendix}{Appendices}
\Crefname{appendix}{Appendix}{Appendices}
\numberwithin{equation}{section}
\newcommand{\E}{\mathbb E}
\newcommand{\R}{\mathbb R}
\newcommand{\Prob}{\mathbb P}
\newcommand{\one}{\mathbf 1}
\newcommand{\OPT}{\operatorname{OPT}}
\newcommand{\Poi}{\operatorname{Poi}}

\newcommand{\gap}{\operatorname{gap}}

\newcommand{\clip}{\operatorname{clip}}

\newcommand{\poly}{\operatorname{poly}}

\newcommand{\calH}{\mathcal H}
\newcommand{\ip}[2]{\langle #1,#2\rangle}

\newcommand{\bconst}{\beta}
\newcommand{\aconst}{\alpha_0}

\hypersetup{pdftitle={Consistent Submodular Maximization Revisited: Beating the 2-sqrt(2) Barrier Requires Exponential Queries or Linear Recourse},pdfauthor={Shi Fu; Qixin Zhang; Dacheng Tao},pdfsubject={Submodular maximization, oracle lower bounds, recourse, universal future certificates}}
\BeforeBeginEnvironment{theorem}{\Needspace{4\baselineskip}}
\BeforeBeginEnvironment{lemma}{\Needspace{3\baselineskip}}
\title{A Sharp Barrier for Consistent Submodular Maximization: Any Improvement over $2-\sqrt{2}$ Entails Exponential Queries or Linear Recourse}
\author{Shi Fu \qquad Qixin Zhang \qquad Dacheng Tao\\[4pt]
{\small Nanyang Technological University, Singapore}}
\date{}
\begin{document}
\hypersetup{pageanchor=false}
\maketitle
\begin{abstract}
\noindent
Consistent submodular maximization studies the tradeoff between solution quality and stability when elements arrive over time. For a monotone submodular objective, which models diminishing returns, an algorithm maintains a set of at most $k$ available elements and changes only $O(1)$ elements after each insertion. \citet{DFL+25} established a tight $\nicefrac{2}{3}$ approximation with unrestricted computation and a polynomial-time $0.51$ approximation. They left open at STOC 2025 whether efficient algorithms can match the offline $1-\nicefrac{1}{e}$ guarantee. We resolve this problem by proving that the supremum approximation achievable with polynomially many value queries and worst-case constant recourse is
\[
\bconst=2-\sqrt2\approx0.5858<1-\nicefrac{1}{e}.
\]

\noindent
For every $\varepsilon>0$, our randomized algorithm attains $\bconst-\varepsilon$ with $O(\varepsilon^{-2})$ changes per insertion. Any fixed improvement requires exponentially many queries before one critical insertion or linear recourse of $\Omega(k)$ changes at that insertion, even with unlimited queries afterwards. This gap quantifies the cost of consistency: the current oracle hides which elements will be needed after an arrival. We also determine the exact curvature-dependent threshold $1-(\sqrt2-1)\vartheta$, attain $1-\nicefrac{1}{e}-\varepsilon$ for weighted coverage with $O(\varepsilon^{-1})$ recourse, and separate the existence of universal future-price certificates from their efficient computation. Our algorithm has a bounded-bit polynomial-time implementation for polynomial-bit rational oracle answers; the lower bound uses only logarithmic-bit rational answers.
\end{abstract}
\thispagestyle{empty}
\pagenumbering{gobble}
\clearpage
\pagenumbering{arabic}
\hypersetup{pageanchor=true}
\section{Introduction}

How much value must an algorithm lose when its solution must remain stable? In \emph{consistent submodular maximization}, elements arrive one at a time, and an algorithm maintains a high-value set of at most $k$ elements seen so far. The objective is monotone and submodular: adding an element cannot decrease value, and its marginal contribution decreases as the selected set grows. Maximum coverage is a basic example. Each available element covers a collection of features, and the value of a selection is the total weight of the features it covers.

The consistency requirement captures the cost of revising a maintained solution. Consider a representative selection that is updated as new candidates become available. Replacing many representatives at once can cause substantial reconfiguration, even if the new selection has higher value. We therefore require a constant number of changes after every arrival, with both insertions and removals counted. This bound is called \emph{worst-case constant recourse}. Recomputing an offline solution after each arrival need not satisfy it: a single new element can change which of the previous elements are useful complements.

Without consistency, the classical greedy algorithm achieves a $1-\nicefrac{1}{e}$ approximation for monotone submodular maximization under a cardinality constraint \citep{NWF78,NW78}, and this factor is optimal with polynomially many value queries \citep{Von13}. For consistent algorithms, \citet{DFL+25} proved a tight $\nicefrac{2}{3}$ approximation with unrestricted computation and gave a polynomial-time $0.51$ approximation. They left open whether efficient randomized algorithms incur a ``cost of consistency'' \citep[Section~1.1]{DFL+25}: can they attain the offline $1-\nicefrac{1}{e}$ benchmark while making only constantly many changes per insertion?

We resolve this STOC 2025 open problem. The supremum approximation achievable with polynomially many value queries and worst-case constant recourse is
\[
\bconst:=2-\sqrt2=0.585786\ldots<1-\nicefrac{1}{e}.
\]
Every coefficient below $\bconst$ is attainable. Any fixed improvement requires either exponentially many queries before a critical arrival or a linear number of changes at that arrival. The obstruction concerns the timing of information: the current oracle hides which elements will complement the new arrival, and discovering them afterwards leaves too little time to revise the solution. This establishes a strict computational cost of consistency even when the algorithm may store all previous elements and perform unlimited computation after the critical arrival.

\subsection{Our Results}

Write $X_t$ for the elements available at time $t$, $S_t\subseteq X_t$ for the maintained set, and $\OPT_k(X_t)=\max\{f(O):O\subseteq X_t,\ |O|\le k\}$. The stream and the objective are fixed in advance. Approximation is measured in expectation at each fixed time, while the recourse bound holds for every realization of the algorithm's random bits. The precise oracle and encoding conventions appear at the end of this introduction.

\begin{theorem}[Sharp query--recourse threshold]\label{thm:main}
For every rational $\varepsilon\in(0,\bconst)$, a randomized algorithm in the exact value-oracle model uses polynomially many current queries and satisfies
\begin{equation}\label{eq:model}
\E f(S_t)\ge(\bconst-\varepsilon)\OPT_k(X_t),
\qquad |S_t\triangle S_{t-1}|=O(\varepsilon^{-2}).
\end{equation}
Under polynomial-bit rational oracle answers, the algorithm has a bounded-bit randomized polynomial-time implementation. Conversely, for every fixed $\zeta>0$, attaining $\bconst+\zeta$ requires exponentially many queries before one critical arrival or $\Omega_\zeta(k)$ changes at that arrival, on arbitrarily large instances.
\end{theorem}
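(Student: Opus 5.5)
The plan has two parts: an algorithm achieving $\bconst-\varepsilon$, and a lower-bound instance family. The constant $\bconst=2-\sqrt2$ should come out of the same one-parameter optimization in both. The curvature threshold $1-(\sqrt2-1)\vartheta$ stated in the abstract suggests the following. The loss is $(\sqrt2-1)$ times the ``non-modular'' part of the value that the algorithm cannot anticipate. This part is the value contributed by the elements that complement a new arrival, which the current oracle hides. So I would first isolate an abstract two-scenario game. Before a critical arrival the algorithm holds a set $S$. After it, the algorithm may swap only $o(k)$ elements. The adversary chooses whether the future optimum reuses the current optimal structure or needs fresh complements of the arrival. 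I would parametrize the algorithm's hedge by the fraction $x$ of the budget it keeps ``uncommitted'', that is, filled with elements chosen to be robust to either continuation. Optimizing $\min$ over the two scenarios of the resulting value, a quadratic-type expression in $x$, should give the threshold $2-\sqrt2$ at $x=\sqrt2-1$.

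\textbf{Algorithm.} I would maintain a fractional solution $y_t\in[0,1]^{X_t}$ with $\|y_t\|_1\le k$ and evaluate it through the multilinear extension $F$, estimated by polynomially many samples. The budget is split into a \emph{core} of mass $(1-x)k$ and a \emph{reserve} of mass $xk$, with $x\approx\sqrt2-1$. The core is refreshed by a continuous-greedy step against $F$ over a discretized schedule of $O(1/\varepsilon)$ layers. The reserve is refreshed greedily with respect to the newest arrivals. The aim is that after each insertion only $O(1/\varepsilon)$ coordinates per layer change by $O(1/\varepsilon)$ granularity. Rounding would be done by independent block-wise sampling. For the worst-case recourse bound, I would use a coupled (dependent) rounding in which each fractional change of $\delta$ mass moves at most $\lceil\delta\rceil$ elements, giving $O(\varepsilon^{-2})$ changes per insertion. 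The analysis would use a potential $\Phi_t=F(y_t)-(\bconst-\varepsilon)\OPT_k(X_t)$. When a new element $e$ arrives, $\OPT_k$ can jump, and I would bound the jump by the marginal of $e$ plus the value of $O$'s elements already in the span of the reserve. The bound would use submodularity and the standard inequality $\sum_{o\in O}\partial_oF(y)\ge f(O)-F(y)$. Integrating the greedy progress against the jump then gives an ODE whose fixed point is $\bconst$, with the error from discretization and sampling absorbed into $\varepsilon$. Bounded-bit implementation follows by rounding the fractional masses to multiples of $\varepsilon^2/k$.

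\textbf{Lower bound.} I would use a symmetry-gap style hidden-partition construction in the style of \citep{Von13}, placed at one critical time $t^*$. Before $t^*$ the ground set consists of blocks whose function is a smoothed symmetric version. Distinguishing the hidden ``good'' subset from its symmetrized counterpart requires exponentially many queries, by the usual concentration argument on random partitions. So with polynomially many queries the algorithm's distribution over $S_{t^*-1}$ is (up to negligible error) invariant under the hidden symmetry. At $t^*$ a new element arrives whose complements are exactly the hidden subset. The new optimum uses the arrival plus the hidden elements. With $o(k)$ changes the algorithm can add only $o(k)$ newly identified complements, even though it has unlimited queries afterwards. I would then show that any symmetric pre-arrival distribution, together with the constraint that it be $(\bconst+\zeta)$-competitive at $t^*-1$, forces value at most $(\bconst+o(1))\OPT$ at $t^*$. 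This reduces to the same two-scenario optimization with the roles of $x$ fixed by the algorithm's commitment. Oracle answers need only $O(\log k)$ bits after the instance is rounded to a grid.

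\textbf{Main obstacle.} I expect the hardest step to be matching the constants exactly: showing that the algorithm's potential analysis and the lower bound's two-scenario game produce the same optimization, so that both yield exactly $2-\sqrt2$. On the algorithmic side, the delicate step is charging the jump of $\OPT_k$ at an arrival to the reserve without letting recourse grow with $k$. I would first try to prove the continuous-time inequality $\frac{d}{dt}F\ge(\bconst)\frac{d}{dt}\OPT$ with a reserve fraction exactly $\sqrt2-1$, and only then discretize.
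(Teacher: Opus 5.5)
Your proposal has genuine gaps in both halves: the upper bound lacks the future-robust core and the bounded-recourse mechanism, and the lower bound lacks the explicit profiles that produce $2-\sqrt2$.

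\textbf{The algorithm has no future-robust core.} What is needed is a random set, computed from current values alone, whose expected value stays at least $(2-\sqrt2)$ times the benchmark after \emph{every} compatible future is added. Continuous greedy on the current $F$, together with $\sum_{o\in O}\partial_oF(y)\ge f(O)-F(y)$, gives only two certificates for a future branch $h$ (recall $h\ge f$ and its marginals are dominated by those of $f$). These are the current value, and the benchmark minus the residual gradient mass. If the normalized greedy value grows linearly as $t/2$, both certificates equal $1/2$ at $t=1$, so nothing beyond $1/2$ follows. The paper gets past $1/2$ by interpolating across greedy scales. The anchored sampler keeps the prefix $G_a$ and draws $\kappa-a$ elements of $G_b\setminus G_a$, with $a<\kappa<b$. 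This certifies the chord $(1-\theta)h(G_a)+\theta h(G_b)$ (\cref{lem:anchor}). A two-action linear program over labelled chords then has a dual supporting line forcing $(1-\gamma)^2\le\gamma^2/2$ (\cref{thm:anchored}). The scale potential with $T=\sqrt2$ gives the same constant $T/(1+T)$ (\cref{lem:scale}). Your reserve of mass $(\sqrt2-1)k$, refreshed greedily on the newest arrivals, hedges over future elements rather than over current greedy scales. The quadratic two-scenario game you optimize at $x=\sqrt2-1$ is never defined, so neither the value nor the agreement of the two constants is established. In the paper the recent-arrival reserve is only a $2/B$ fraction of the budget and costs $O(\varepsilon)$; the constant comes entirely from the static core.

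\textbf{The recourse argument fails as described.} The inequality $\frac{d}{dt}F\ge\bconst\frac{d}{dt}\OPT$ cannot hold arrival by arrival. In the paper's hard instance, one arrival raises $\OPT_k$ from about $0.71$ to $1$. A balanced held $k$-set rises only from $v(1)\approx0.56$ to $w(1)\approx0.59$, so $\bconst\OPT_k$ gains about $0.17$ while the algorithm gains about $0.03+O(C/k)$. The post-arrival guarantee must therefore come from how the pre-arrival set already combines with the unknown arrival, which is again the static robustness property. Recomputing continuous greedy after each insertion is also unstable, since one insertion can reorder the whole trajectory. So your claim that only $O(1/\varepsilon)$ coordinates per layer move is unsupported, and no rounding is given that preserves the fixed-time guarantee under such moves. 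The paper instead computes a fresh robust core of capacity $\kappa=k-2L$ only at block boundaries. It keeps every arrival of the current and preceding block, applying the core guarantee to $h(S)=f(S\cup R)$. It migrates inside one uniformly random window, so a fixed time is mid-migration with probability at most $1/B$. The lazy-superset rule then gives recourse $8B^2+2$ (\cref{lem:checkpoint,lem:lazy}).

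\textbf{The lower bound is missing its crux.} Your outline has the paper's shape: a hidden $k$-set, one critical arrival, concentration applied to a reference transcript fixed independently of $A$, and $O(1/k)$ gain per post-arrival insertion. What is absent are explicit profiles satisfying all of the following:
\begin{itemize}
\item every balanced $k$-set has future value near $2-\sqrt2$, while $A$ together with the arrival has value $1$;
\item the joint function is monotone submodular, i.e.\ $K\ge F$ and $\nabla K\le\nabla F$;
\item current answers are exactly flat near the balanced line.
\end{itemize}
The paper takes $w$ equal to the supporting line $c+bt$ until it meets the quadratic $v$ tangentially at $T\approx\sqrt2$. This gives $c+v'(0)=1$ and $w(1)=R(T)\to2-\sqrt2$. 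Rounding to a grid to get $O(\log k)$-bit answers would generally break both exact hiding and submodularity; the paper uses rational polynomial profiles with certified derivative margins instead. Competitiveness at time $t^*-1$ is not needed for the lower bound.
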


The lower bound applies to all randomized value-oracle algorithms and permits unlimited queries and computation after the critical arrival. It holds even with exact rational answers of logarithmic bit length. \Cref{prop:finite-hard-instance} gives the quantitative tradeoff, and \cref{cor:expected-hard-instance} extends it to expected resource bounds. The upper bound has recourse at most $8\lceil6/\varepsilon\rceil^2+2$. The insertion-count convention of \citet{DFL+25} converts to symmetric difference within a factor of two by \cref{lem:lazy}.

\paragraph{How the threshold depends on the objective.}
The general threshold need not persist under additional structure. Total curvature measures how much an element's marginal contribution can decrease. A known full-stream curvature bound $\vartheta\in[0,1]$ means
\begin{equation}\label{eq:curvature-def}
f(i\mid S)\ge(1-\vartheta)f(\{i\})
\qquad(S\subseteq V,\ i\in V\setminus S),
\end{equation}
where $f(i\mid S)=f(S\cup\{i\})-f(S)$. The case $\vartheta=0$ is modular, and $\vartheta=1$ allows all monotone submodular functions.

\begin{theorem}[Exact curvature law]\label{thm:curvature-main}
For every known full-stream curvature bound $\vartheta\in[0,1]$, the supremum approximation with polynomially many value queries and worst-case constant recourse is $\rho_\vartheta=1-(\sqrt2-1)\vartheta$. For every $\varepsilon>0$, the coefficient $\rho_\vartheta-\varepsilon$ is attainable with $O(\varepsilon^{-2})$ recourse. For every fixed $\vartheta>0$, any fixed improvement requires linear recourse or exponentially many queries. At $\vartheta=0$, recourse two maintains an exact optimum.
\end{theorem}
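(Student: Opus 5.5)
The plan is to reduce \cref{thm:curvature-main} to the machinery behind \cref{thm:main} by splitting the objective into a modular part and a ``fully curved'' part. For a known curvature bound $\vartheta$, write $f=(1-\vartheta)\ell+\vartheta g$, where $\ell(S)=\sum_{i\in S}f(\{i\})$ is modular and $g=(f-(1-\vartheta)\ell)/\vartheta$. Condition \eqref{eq:curvature-def} makes $g$ monotone, and $g$ is submodular because it differs from $f$ by a modular term. Both $\ell$ and $g$ can be queried using the value oracle for $f$.

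For the upper bound, I will run the algorithm of \cref{thm:main} on a surrogate objective. The natural candidate is the Sviridenko--Vondr\'ak--Ward style combination $\Phi=\lambda\,\ell+\vartheta g$. The weight $\lambda$ will be chosen so that the guarantee $\beta\cdot\text{value}$ on the curved part combines with an exact guarantee on the modular part to give
\[
\E f(S_t)\ge\bigl((1-\vartheta)+\vartheta\beta-\varepsilon\bigr)\OPT_k(X_t).
\]
Since $(1-\vartheta)+\vartheta(2-\sqrt2)=1-(\sqrt2-1)\vartheta$, this is exactly $\rho_\vartheta$.

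This step requires an analysis of the \cref{thm:main} algorithm that is \emph{linear in the objective}. Concretely, I need a statement of the form: for every feasible $O\subseteq X_t$,
\[
\E f(S_t)\ge \beta\, g\text{-part of }O+\text{(modular part of }O).
\]
I expect the primal–dual or certificate analysis underlying the main algorithm to give a bound of this kind. The modular component can be handled with recourse two by swapping in any element whose singleton value exceeds the current minimum. For $\vartheta=0$ this swap rule alone maintains an exact top-$k$ set, which settles the last claim. I expect that proving the mixed guarantee, rather than the guarantee separately for $\ell$ and for $g$, is the main obstacle. Splitting $k$ slots between the two parts loses too much, so the curved analysis must tolerate a modular bonus inside the same potential argument. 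I would first try to reprove the $\beta$ analysis with $f$ replaced by $\Phi$ and then track the modular term, which should pass through every inequality with coefficient $1$.

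For the lower bound, I will take the hard instance of \cref{prop:finite-hard-instance}, with objective $h$ of curvature $1$, and form $f=\vartheta h+(1-\vartheta)\ell_h$ with $\ell_h(S)=\sum_{i\in S}h(\{i\})$. This function has curvature at most $\vartheta$. I will arrange the instance so that all singletons have equal value, so that $\ell_h$ contributes the same amount to every size-$k$ set and reveals no information through queries. Both the algorithm's value and $\OPT$ then shift affinely, and the gap $\beta+\zeta'$ on $h$ becomes $\rho_\vartheta+\vartheta\zeta'$ on $f$. This requires checking that the optimum of $h$ is attained at size-$k$ sets with full singleton mass, and that padding preserves this. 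The query and recourse tradeoff of \cref{prop:finite-hard-instance} then transfers verbatim, because the oracle for $f$ is simulated by the oracle for $h$.
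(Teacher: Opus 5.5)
\textbf{The lower bound is essentially the paper's.} Up to the factor $\vartheta$, your $\vartheta h+(1-\vartheta)\ell_h$ is the paper's $f_\lambda=f_0+\lambda\sum_{e\in S}f_0(\{e\})$ with $\lambda=(1-\vartheta)/\vartheta$. Your $\vartheta=0$ swap rule is also fine.

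One claim needs correcting: the modular term does not add the same amount to every size-$k$ set. The final element has singleton $c+1/(2m)$, while every current singleton equals a common value $\sigma_k\le5/k$ (they all lie in the flat band once $k\ge64m$). What the argument actually uses is twofold. First, current answers shift by $\lambda\sigma_k|S|$ independently of $A$. Second, $D_k=k\sigma_k+c+1/(2m)\approx v'(0)+c=1$. So the hidden comparator $\{r\}\cup(A\setminus\{a_0\})$ has both value and singleton sum $\approx1$, while any output has modular part at most $\lambda D_k$. This identity is what yields $(R_T+\lambda)/(1+\lambda)$.

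\textbf{The genuine gap is the upper bound.} Running the algorithm of \cref{thm:main} on a surrogate $\Phi$ yields only $\E\Phi(A\cup R)\ge\beta\max_O\Phi(O\cup R)$. Since $f-\Phi=(1-\vartheta-\lambda)\ell$ and $\ell(A)$ is uncontrolled, this says nothing about $f$. Everything therefore rests on the mixed inequality $\E f(A\cup R)\ge\ell'(O\cup R)+\beta g'(O\cup R)$, with $\ell'=(1-\vartheta)\ell$ and $g'=\vartheta g$. You only conjecture this inequality, and the anchored analysis does not give it.

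That analysis is curvature-blind. Its only certificates are $Y_j\ge v_j$ and $Y_j\ge P_h-D_j$, which hold for every legal future. Its LP may also select chord actions $A_{a,b}$ with $b>\kappa$, and these realize only a chord of the concave modular curve along the greedy chain. Concretely, take a purely modular current function with $\kappa$ unit weights followed by $\kappa+1$ weights $\tfrac12$. The LP value, which is the best bound this analysis certifies, is only $\tfrac23$. One optimal mixture puts $\tfrac13$ on $G_\kappa$ and $\tfrac23$ on $A_{0,2\kappa}$, and it keeps only $\tfrac56$ of the top-$\kappa$ value even with an empty future. So the modular term does not pass through with coefficient one. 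You would need new, curvature-aware certificates, and the proposal does not supply them.

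\textbf{The missing idea is to keep the modular part linear inside a fractional relaxation, instead of sampling greedy prefixes.} The paper (\cref{thm:hybrid}) uses the potential $\Xi(x)=\ell'\cdot x+\Phi^{g'}_T(x)/(1+T)$. Applying \cref{lem:scale} to $g'$ and using exact linearity of $\ell'$ gives, for every future $R$ disjoint from the snapshot and every $O$,
\[
\ell'(R)+\ell'\cdot x+H^{g'}_R(x)-\Bigl[\ell'(O\cup R)+\tfrac{T}{1+T}\,g'(O\cup R)\Bigr]\ge\ip{\nabla\Xi(x)}{x-\one_O}.
\]
The remaining steps are then:
\begin{itemize}
\item An approximate stationary point of $\Xi$ makes the right side small for all futures simultaneously.
\item Pipage rounding preserves $F^f_R(x)\ge\ell'(R)+\ell'\cdot x+H^{g'}_R(x)$.
\item \cref{lem:hybrid-checkpoint}, rather than the multiplicative \cref{lem:checkpoint}, converts this additive static guarantee online with $O(\varepsilon^{-2})$ recourse.
\end{itemize}
Finally, $\ell'(U)+\beta g'(U)\ge\rho_\vartheta f(U)$ follows from $f(U)\le\sum_{i\in U}f(\{i\})$.
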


The offline coefficient is $1-\nicefrac{\vartheta}{e}$ \citep{SVW17}, so the additional loss is exactly $(\sqrt2-1-\nicefrac{1}{e})\vartheta$. The algorithm uses recourse at most $8\lceil16/\varepsilon\rceil^2+2$ and is Turing polynomial time when $\vartheta$ and the oracle answers are polynomial-bit rationals. For weighted coverage, \cref{thm:coverage} attains $1-\nicefrac{1}{e}-\varepsilon$ with $O(\varepsilon^{-1})$ recourse using only the aggregate value oracle. Thus coverage recovers the offline polynomial-time coefficient without requiring its representation. A fixed improvement would imply $\mathsf{NP}\subseteq\mathsf{BPP}$. Appendix~\ref{app:principal} gives the same positive result for matroid-rank sums with persistent component-rank oracles.

\paragraph{Universal certificates and their computation.}
The threshold also appears in a proof framework based on \emph{future-price certificates}. Such a certificate assigns bounded prices to current elements and certifies a prescribed randomized response for every compatible future. For the Poisson response defined in \cref{sec:prices}, certificates at $1-\nicefrac{1}{e}$ always exist. Constructing them with constant success probability at any fixed coefficient in $(\bconst,1-\nicefrac{1}{e}]$ requires exponentially many current queries, even allowing fixed additive error in singleton units. Every coefficient below $\bconst$ is efficiently constructible. \Cref{thm:price-existence,thm:price-hardness,lem:scale} state these results precisely. They explain the computational limit of this certificate framework; \cref{thm:main} establishes the threshold for all online algorithms.

\subsection{Proof Overview}

A good current solution may interact poorly with future elements. We therefore construct a small random set, called a \emph{core}, whose expected value remains large after any fixed future is added. Its distribution is computed from current values alone, and the same distribution must work for every compatible future.

\paragraph{An anchored greedy core.}
A greedy prefix provides both its known current value and a residual-marginal bound on the value still missing. We interpolate between these guarantees by retaining a short prefix and sampling additional elements from a longer one; see \cref{fig:anchor}. The retained anchor keeps every sampled set feasible. A linear program mixes at most two such samplers, and its supporting-line geometry forces $2-\sqrt2$. \Cref{sec:cores} proves this bound. The checkpoint reduction of \citet{DFL+25} then installs successive cores gradually. A random migration window makes any fixed time unlikely to fall in a partial transition, while bounding changes at every update.

\paragraph{A matching obstruction.}
The lower bound reverses this geometry. A hidden group among the current elements becomes the useful complement of one final arrival. The current and future value profiles meet at the upper bound's equality case, yielding the same constant. Exact hiding requires an algebraically flat band where answers depend only on query size. Rational polynomial profiles preserve this band and make the full function monotone submodular. The main text explains the geometry and proves the query--recourse tradeoff; Appendix~\ref{app:profiles} verifies the derivative and encoding conditions.

\paragraph{Beyond the general threshold.}
A decoupling inequality and minimax give certificate existence at $1-\nicefrac{1}{e}$; the hidden-group family prevents efficient computation above $2-\sqrt2$. A computable potential reaches every smaller coefficient and preserves modular contributions, giving the curvature law. For coverage, concavity preserves value during migration. Independent categorical slots implement this interpolation with a fixed replacement schedule and $O(\varepsilon^{-1})$ recourse.

\subsection{Related Work}\label{sec:related-work}

The addition-robust primitive and checkpoint reduction of \citet{DFL+25} build on the deterministic model of \citet{DFL+24}. Their unrestricted $\nicefrac{2}{3}$ guarantee uses minimax and an independently drawn whole comparator. We determine the polynomial-query threshold and study a prescribed product response with bounded coordinate prices; \cref{sec:prices} compares the certificate frameworks.

The constant $2-\sqrt2$ also arises in randomized composable coresets. In their random-partition model, \citet[Theorems~4.1 and~4.8]{MZ15} obtain coreset-quality bounds approaching this coefficient for enlarged greedy summaries and prove a matching limitation for \textsc{Greedy}. The quality guarantee concerns the best solution in the union of the summaries; their efficient \textsc{PseudoGreedy} postprocessing has a smaller guarantee \citep[Theorem~4.9]{MZ15}. Our algorithm must instead compute a distribution that works against every compatible future using current values alone. The matching lower bound applies to every randomized value-oracle algorithm.

In the one-way communication model, \citet{FNSZ23} obtain $\nicefrac{2}{3}$ with unrestricted computation and $0.514$ efficiently under a message budget. Their $\nicefrac{1}{2}$ oracle lower bound restricts queries to feasible sets and therefore concerns a weaker oracle model. Related appearances of $2-\sqrt2$ in streaming concern space or adversarial injections \citep{HKMY22,WCJ+26}. Fully dynamic consistency permits deletions \citep{DFL+26}, while competitive recourse measures total movement against supplied targets \citep{BNW25}. These information and movement constraints differ from the pathwise per-update guarantee in \cref{eq:model}.

\paragraph{Model and conventions.}
An oblivious adversary fixes the finite ground set $V$, insertion order, and normalized monotone submodular function $f:2^V\to\R_{\ge0}$. Initially $S_0=\varnothing$. The algorithm may retain all old elements and query any subset of $X_t$, even if its size exceeds $k$, but cannot query unseen elements. There is no storage or sublinear-update-time restriction. As in \cref{eq:model}, approximation is in expectation at each fixed time, and recourse holds on every random path; no simultaneous high-probability guarantee is asserted.

In the ideal exact-oracle model, we count exact value queries and allow arithmetic on returned reals and sampling from finite distributions. Turing polynomial-time claims assume exact rational oracle answers of polynomial encoding length; runtime is polynomial in that length, the observed prefix size, $k$, and $1/\varepsilon$. Appendix~\ref{app:algorithms} gives deterministic bounds on work and random bits, absorbing arbitrarily small sampling losses into $\varepsilon$. A real curvature parameter is exact in the ideal model and rationally encoded for the Turing implementation. Curvature and coverage promises concern the full stream. Only the represented matroid-rank-sum extension requires component oracles.

\section{A Future-Robust Core and Its Online Implementation}\label{sec:cores}

Fix a current set $X$ and an integer core capacity $\kappa\ge1$. A legal future branch is a function $h:2^X\to\R_{\ge0}$ such that adjoining one symbol $r$ with values $f(S\cup\{r\})=h(S)$ gives a monotone submodular function. Equivalently, $h$ is monotone and submodular, $h\ge f$, and
\begin{equation}\label{eq:legal}
0\le h(S+i)-h(S)\le f(S+i)-f(S)\qquad(i\notin S).
\end{equation}
Any fixed collection of future elements induces such a branch by adjoining the entire collection as one symbol. The algorithm computes its core from current values alone; the future branch is used only in the analysis.

We use the stronger free-future benchmark $P_h=\max_{O\subseteq X,\ |O|\le\kappa}h(O)$.
If $h(S)=f(S\cup R)$, this dominates the ordinary size-$\kappa$ optimum on $X\cup R$. It is only an analytical benchmark: the online algorithm must still fit every displayed element into its capacity $k$.

\subsection{Anchoring a Greedy Prefix}

Run ordinary greedy for $2\kappa+1$ selections. Let $G_j$ be the first $j$ selected elements and define
\[
v_j=f(G_j),\qquad D_j=\kappa(v_{j+1}-v_j),\qquad 0\le j\le2\kappa.
\]
If the current set is exhausted, append conceptual null elements. They are omitted from the output and never queried. If $|X|\le\kappa$, simply returning $X$ is sufficient.

For a fixed future write $Y_j=h(G_j)$. Two bounds are available at every endpoint:
\begin{equation}\label{eq:endpoints}
Y_j\ge v_j,\qquad Y_j\ge P_h-D_j.
\end{equation}
The first is monotonicity. For the second, let $O$ attain $P_h$. By \cref{eq:legal}, diminishing returns, and the greedy choice,
\[
P_h\le Y_j+\sum_{i\in O\setminus G_j}f(i\mid G_j)
\le Y_j+D_j.
\]

For $0\le a<\kappa<b\le2\kappa$, retain all of $G_a$ and choose a uniform $(\kappa-a)$-subset of $G_b\setminus G_a$. Call this sampler $A_{a,b}$ and put $\theta=(\kappa-a)/(b-a)$.
For example, when $\kappa=4$, $a=2$, and $b=6$, the sampler retains the first two greedy elements and chooses two of the next four uniformly. Every output has four positions, and $\theta=\nicefrac{1}{2}$.

\begin{lemma}[Anchored interpolation]\label{lem:anchor}
For every fixed legal future,
\[
\E h(A_{a,b})\ge(1-\theta)Y_a+\theta Y_b.
\]
\end{lemma}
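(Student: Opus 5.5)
Fix the legal future $h$ and condition on the anchor $G_a$, which is always retained. Define $g(T)=h(G_a\cup T)$ for $T\subseteq U:=G_b\setminus G_a$. Because $h$ is monotone and submodular, so is $g$. Moreover $g(\varnothing)=Y_a$ and $g(U)=Y_b$. The sampler $A_{a,b}$ outputs $G_a\cup T$, where $T$ is a uniform subset of $U$ of size $\kappa-a$. Since $|U|=b-a$, every element of $U$ lies in $T$ with probability exactly $\theta=(\kappa-a)/(b-a)$. The lemma is therefore the statement
\[
\E\, g(T)\ge(1-\theta)g(\varnothing)+\theta g(U)
\]
for a uniformly random subset of fixed size $m=\kappa-a$.

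The plan is a telescoping argument along a random ordering. Let $u_1,\dots,u_n$ be a uniformly random ordering of $U$, with $n=b-a$, and put $T=\{u_1,\dots,u_m\}$. Write $\Delta_j=g(\{u_1,\dots,u_j\})-g(\{u_1,\dots,u_{j-1}\})$. Then
\[
g(T)-g(\varnothing)=\sum_{j\le m}\Delta_j
\qquad\text{and}\qquad
g(U)-g(\varnothing)=\sum_{j\le n}\Delta_j .
\]
So it suffices to show that $\E\Delta_j$ is nonincreasing in $j$. Given that, the average of the first $m$ expected increments is at least the average of all $n$, which is exactly
\[
\E\,g(T)-g(\varnothing)\ge \frac{m}{n}\bigl(g(U)-g(\varnothing)\bigr).
\]
Rearranging gives the claim.

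The key step is the monotonicity of $\E\Delta_j$. Swapping the positions $j$ and $j+1$ is a measure-preserving bijection on orderings, so $\E\Delta_{j+1}$ equals the expected marginal of $u_j$ added on top of $\{u_1,\dots,u_{j-1},u_{j+1}\}$. By submodularity this marginal is at most the marginal of $u_j$ on top of $\{u_1,\dots,u_{j-1}\}$, which is $\Delta_j$. Hence $\E\Delta_{j+1}\le\E\Delta_j$.

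An equivalent route is that $\E g(T_m)$ is concave in $m$; this is the known fact that the multilinear extension restricted to uniform sizes is concave along the diagonal. The swap argument is the cleanest way to establish it. I expect the swap argument to be the only nontrivial step. The rest of the proof is bookkeeping, together with a check of the edge conventions. The null elements appended when $X$ is exhausted contribute zero marginal and are consistent with the same calculation, since $g$ simply ignores them.
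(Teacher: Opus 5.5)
Your proof is correct, but it argues differently from the paper. You pass to a uniformly random ordering of $G_b\setminus G_a$ and use the swap of positions $j$ and $j+1$ to show that the expected increments $\E\Delta_j$ are nonincreasing. This means $m\mapsto\E g(T_m)$ is concave under uniform fixed-size sampling, and the chord bound follows by comparing prefix averages.

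The paper instead fixes one deterministic order of $G_b\setminus G_a$, for instance the greedy order. It writes $h(A_{a,b})-Y_a$ as a telescoping sum over the sampled elements. It then lower-bounds each sampled element's marginal against $G_a$ and its sampled predecessors by its marginal against $G_a$ and all its predecessors. Linearity of expectation with $\Prob\{u\in T\}=\theta$ finishes the argument.

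The paper's route is shorter and uses only the one-dimensional inclusion probabilities of the sampler. It therefore applies verbatim to any random subset of $G_b\setminus G_a$ whose elements are each included with probability $\theta$, with arbitrary dependence. This is the point of its remark that no independence between inclusion indicators is needed.

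Your route genuinely needs the exchangeability of the uniform law. In exchange, it proves the stronger fact that the expected value is concave in the sample size, which gives chord bounds between any two sample sizes; the lemma does not require this. Your aside identifying this with concavity of the multilinear extension along the diagonal is loose, since that concerns independent Bernoulli sampling rather than fixed-size sampling. Nothing in your argument depends on it.
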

\begin{proof}
Order $G_b\setminus G_a$. The marginal of a sampled element against $G_a$ and its sampled predecessors is at least its marginal against $G_a$ and all its predecessors. Each element is sampled with probability $\theta$. Summing the expected marginals gives at least $\theta(Y_b-Y_a)$ above $Y_a$. No independence between inclusion indicators is needed.
\end{proof}

Geometrically, the lemma certifies the height of the chord from $(a,Y_a)$ to $(b,Y_b)$ at cardinality $\kappa$. The anchor supplies this chord while keeping every sampled set feasible.

\begin{figure}[!t]
\centering
\begin{tikzpicture}[
  font=\small,
  axis/.style={-{Latex[length=2mm]}, line width=0.55pt, draw=slate},
  point/.style={circle, fill=accentblue, inner sep=1.7pt},
  cell/.style={draw=slate, line width=0.45pt, minimum width=6.2mm,
               minimum height=6.5mm, inner sep=0pt}
]
% Panel (a): a concrete anchored draw.
\node[anchor=west,font=\bfseries\small] at (0,2.55) {(a) Feasible anchored sampling};
\node[anchor=west,text=slate,font=\scriptsize] at (0,2.20) {one realization of $A_{2,6}$};
\node[cell,fill=accentblue!72,text=white] (c1) at (0.35,1.55) {$1$};
\node[cell,fill=accentblue!72,text=white,right=0pt of c1] (c2) {$2$};
\node[cell,fill=accentblue!34,right=1.8mm of c2] (c3) {$3$};
\node[cell,fill=softgray,right=0pt of c3] (c4) {$4$};
\node[cell,fill=accentblue!34,right=0pt of c4] (c5) {$5$};
\node[cell,fill=softgray,right=0pt of c5] (c6) {$6$};
\draw[decorate,decoration={brace,mirror,amplitude=4pt},draw=slate]
  ($(c1.south west)+(0,-0.10)$) -- ($(c2.south east)+(0,-0.10)$)
  node[midway,below=6pt,font=\scriptsize]{fixed $G_a$};
\draw[decorate,decoration={brace,mirror,amplitude=4pt},draw=slate]
  ($(c3.south west)+(0,-0.10)$) -- ($(c6.south east)+(0,-0.10)$)
  node[midway,below=6pt,font=\scriptsize]{sample $\kappa-a$};
\node[anchor=west,font=\scriptsize,text=slate] at (0,0.42)
  {$a=2,\quad \kappa=4,\quad b=6,\quad \theta=\nicefrac{1}{2}$};

% Panel (b): the associated chord certificate.
\begin{scope}[shift={(7.55,0)}]
\node[anchor=west,font=\bfseries\small] at (-0.25,2.55) {(b) Chord certificate};
\draw[axis] (0,0.45) -- (4.75,0.45);
\draw[axis] (0,0.45) -- (0,2.35);
\coordinate (A) at (0.55,0.82);
\coordinate (K) at (2.55,1.49);
\coordinate (B) at (4.15,2.02);
\draw[accentblue,line width=1.05pt] (A) -- (B);
\draw[densely dashed,draw=slate!75] (2.55,0.45) -- (K);
\node[point] at (A) {};
\node[point] at (K) {};
\node[point] at (B) {};
\node[below=2pt,font=\scriptsize] at (0.55,0.45) {$a$};
\node[below=2pt,font=\scriptsize] at (2.55,0.45) {$\kappa$};
\node[below=2pt,font=\scriptsize] at (4.15,0.45) {$b$};
\node[above left=-1pt,font=\scriptsize] at (A) {$Y_a$};
\node[above right=-1pt,font=\scriptsize] at (B) {$Y_b$};
\node[anchor=west,text=accentblue,font=\scriptsize] at (2.68,1.31)
  {$\E h(A_{a,b})\ge(1-\theta)Y_a+\theta Y_b$};
\end{scope}
\end{tikzpicture}
\caption{Anchored sampling between two greedy prefixes. In the example, the sampler keeps $G_2$ and chooses two of the next four elements; the resulting expectation lies above the chord joining the two certified endpoint values.}
\label{fig:anchor}
\end{figure}

For an endpoint $j$, write
\[
(a_{j,\mathsf v},b_{j,\mathsf v})=(0,v_j),
\qquad
(a_{j,\mathsf p},b_{j,\mathsf p})=(1,-D_j)
\]
for the two labels in \cref{eq:endpoints}. A labelled chord action
$\ell=(a,b,\sigma,\tau)$, where $0\le a<\kappa<b\le2\kappa$ and
$\sigma,\tau\in\{\mathsf v,\mathsf p\}$, uses $A_{a,b}$ and has
\begin{equation}\label{eq:chord-coefficients}
(a_\ell,b_\ell)
=(1-\theta)(a_{a,\sigma},b_{a,\sigma})
+\theta(a_{b,\tau},b_{b,\tau}),
\qquad \theta=\frac{\kappa-a}{b-a}.
\end{equation}
The two pure actions use $G_\kappa$ with the respective pairs
$(a_{\kappa,\mathsf v},b_{\kappa,\mathsf v})$ and
$(a_{\kappa,\mathsf p},b_{\kappa,\mathsf p})$. Anchored interpolation gives $\E h(A_\ell)\ge a_\ell P_h+b_\ell$.
There are $4\kappa^2+2$ labelled actions. Labels affect only the certificate, not the output of a sampler.

Choose a mixture by the following linear program, with one moment constraint in addition to normalization:
\begin{equation}\label{eq:core-lp}
\Gamma=\max\left\{\sum_\ell p_\ell a_\ell:
\sum_\ell p_\ell b_\ell\ge0,\quad
\sum_\ell p_\ell=1,\quad p_\ell\ge0\right\}.
\end{equation}
Its variables are the probabilities of the labelled actions; the two-dimensional geometry lies in their coefficient pairs $(b_\ell,a_\ell)$. The moment constraint makes the average intercept nonnegative. Averaging the labelled inequalities therefore gives expected future value at least $\Gamma P_h$ for every $h$, without enumerating a future.

\begin{lemma}[Duality and two-action support]\label{lem:core-dual}
The linear program in \cref{eq:core-lp} is feasible and has value $\Gamma=\inf_{\eta\ge0}\max_\ell(a_\ell+\eta b_\ell)$. It admits an optimal mixture supported on at most two labelled actions. Given the greedy chain, this mixture can be computed in $O(\kappa^2\log(\kappa+1))$ arithmetic operations.
\end{lemma}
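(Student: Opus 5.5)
The plan is to read \cref{eq:core-lp} as a two-dimensional problem on the finite point set $\{(b_\ell,a_\ell)\}$. A mixture $p$ corresponds to a point of the convex hull $C$ of these points, and $\Gamma$ is the largest height $a$ attained by a point of $C$ with $b\ge0$.

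\emph{Feasibility.} Both pure actions use $G_\kappa$. The $\mathsf v$-labelled one has $b=v_\kappa\ge0$, since $f\ge0$. This action alone is feasible, so $\Gamma$ is a maximum of a bounded linear function over a nonempty polytope and is attained. Boundedness holds because every $a_\ell\in[0,1]$: the coefficients $a_{j,\sigma}\in\{0,1\}$ are combined convexly with $\theta\in(0,1)$.

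\emph{Duality.} I would dualize with a multiplier $\eta\ge0$ for the moment constraint and a free multiplier $\gamma$ for normalization. The dual is $\min\{\gamma:\gamma\ge a_\ell+\eta b_\ell\ \forall\ell,\ \eta\ge0\}$. Its value is $\inf_{\eta\ge0}\max_\ell(a_\ell+\eta b_\ell)$, and strong LP duality applies because the primal is feasible and bounded. The infimum is attained unless it is approached as $\eta\to\infty$, which is harmless for the stated form.

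\emph{Two-action support.} The primal has two equality/inequality constraints besides nonnegativity, so a basic optimal solution has at most two nonzero coordinates. Geometrically, the optimum is the point of $C$ on the line $b=0$ (or in $b\ge0$) with maximal height. That point is either a vertex of $C$ or lies on an edge of $C$ crossing $b=0$. In either case it is a mixture of at most two points $(b_\ell,a_\ell)$. The weights are then explicit: for $b_1<0\le b_2$, set $p=b_2/(b_2-b_1)$ on action $1$; otherwise place all mass on the best action with $b_\ell\ge0$.

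\emph{Complexity.} There are $4\kappa^2+2$ actions, and each coefficient pair is computed in $O(1)$ operations from $v_j,D_j$. The upper hull of the point set can be built by sorting, in $O(\kappa^2\log\kappa)$ time via Andrew's monotone chain. Next I scan the hull for its highest point with $b\ge0$ and for the hull edge crossing $b=0$, and compare the candidates. Alternatively, without a hull, $\Gamma=\max\bigl(\max_{b_\ell\ge0}a_\ell,\ \max_{b_i<0\le b_j}(\text{chord height at }0)\bigr)$, but that takes quadratically many pairs, so the hull is what achieves the bound.

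\emph{Main obstacle.} The delicate step is the complexity bound, namely checking that the maximizing pair lies among consecutive upper-hull vertices. This holds because the height at $b=0$ of the segment between points $i$ and $j$ is at most the upper-hull value at $0$, and that value is achieved on a single hull edge. The degenerate cases must also be handled: all $b_\ell\ge0$, and ties in $b$.
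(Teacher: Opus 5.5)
Your proposal is correct and follows the paper's proof. Feasibility comes from the pure $\mathsf v$-labelled action $(0,v_\kappa)$, and LP duality with a nonnegative multiplier $\eta$ for the moment constraint gives the formula for $\Gamma$. The upper convex hull of the points $(b_\ell,a_\ell)$ then supplies both the two-action support (the optimum is a feasible vertex or a hull edge crossing $b=0$) and the sorting-based $O(\kappa^2\log(\kappa+1))$ bound. Your hedge about the infimum is unnecessary: LP strong duality for a feasible, bounded primal guarantees that the dual minimum is attained.
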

\begin{proof}
The pure label $(0,v_\kappa)$ is feasible. Linear-programming duality gives the formula, with $\eta\ge0$ because the moment is bounded below. View the labels as points $(b_\ell,a_\ell)$. An optimum of their convex hull in the half-plane $b\ge0$ is a feasible vertex or an intersection of a hull edge with $b=0$. Sorting the points and constructing the upper hull gives both the support bound and the stated computation.
\end{proof}

\subsection{The Finite-Cardinality Constant}

\begin{theorem}[A $2-\sqrt2$ robust core]\label{thm:anchored}
For every finite $\kappa$, the mixture in \cref{eq:core-lp} satisfies $\E h(A)\ge\bconst P_h$ for every legal future branch of the current function. It uses $O(|X|\kappa)$ current value queries and polynomial computation.
\end{theorem}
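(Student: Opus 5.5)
The plan is to reduce \cref{thm:anchored} to a purely numerical inequality about the greedy chain, using \cref{lem:core-dual}. By the discussion after \cref{eq:core-lp}, any feasible mixture $p$ gives $\E h(A)\ge\sum_\ell p_\ell(a_\ell P_h+b_\ell)\ge(\sum_\ell p_\ell a_\ell)P_h$ for every legal branch $h$. Here we use $P_h\ge0$ and the moment constraint, and each labelled inequality comes from \cref{lem:anchor} and \cref{eq:endpoints}. It therefore suffices to show $\Gamma\ge\bconst$. By the dual formula in \cref{lem:core-dual}, this amounts to showing that for every $\eta\ge0$ some labelled action satisfies $a_\ell+\eta b_\ell\ge\bconst$.

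\textbf{Reformulation for fixed $\eta$.} Put $u_j=\eta v_j$, so that $\eta D_j=\kappa(u_{j+1}-u_j)$. For endpoint $j$ the two labels contribute $u_j$ (label $\mathsf v$) and $1-\kappa(u_{j+1}-u_j)$ (label $\mathsf p$). The chord action $(a,b,\sigma,\tau)$ contributes the corresponding convex combination with weight $\theta=(\kappa-a)/(b-a)$. The greedy choice and submodularity make the increments $u_{j+1}-u_j$ nonincreasing, and $u_0=0$. So the sequence $(u_j)$ is a nondecreasing discrete concave sequence.

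Suppose for contradiction that every action scores below $\bconst$. The pure actions give $u_\kappa<\bconst$ and $\kappa(u_{\kappa+1}-u_\kappa)>1-\bconst$. By concavity, every increment before index $\kappa+1$ exceeds $(1-\bconst)/\kappa$. The informative family is the chords with label $\mathsf p$ at the left endpoint $a$ and label $\mathsf v$ at the right endpoint $b$:
\[
(1-\theta)\bigl(1-\kappa(u_{a+1}-u_a)\bigr)+\theta u_b<\bconst.
\]
Take $b=2\kappa$ first, and let the slope $s=\kappa(u_{a+1}-u_a)$ vary with $a$. Concavity gives the lower bound $u_{2\kappa}\ge u_\kappa+\kappa\cdot\tfrac{1}{\kappa}\cdot\kappa(u_{2\kappa}-u_{2\kappa-1})\cdot\tfrac{1}{\kappa}$ and similar estimates. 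These trade the smallness of the tangent value $1-s$ against the growth of $u$ between $\kappa$ and $2\kappa$.

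I expect the contradiction to appear through the continuous relaxation. Treat $u$ as a concave function on $[0,2]$ after rescaling $j=\kappa t$. The chord conditions then say that the supporting line at $t=a/\kappa$, evaluated at $1$ after interpolating with $u(2)$, stays below $\bconst$. Optimizing over the tangency point should give a quadratic whose feasibility threshold is $\bconst=2-\sqrt2$, the root of $\beta^2-4\beta+2=0$. This matches the supporting-line geometry mentioned in the proof overview.

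\textbf{Main obstacle.} The hardest step is making this argument exact at finite $\kappa$ rather than only asymptotically. I would try to show that the discrete extremal sequence is piecewise linear with a single breakpoint: linear up to some index $a^\ast$, then with smaller constant slope. The argument would be an exchange step: replacing $u$ by its piecewise-linear majorant between the relevant indices only raises the scores of the actions that are tight. Once that is done, the finitely many chord inequalities at integer $a$ reduce to checking a one-parameter family. In that family the interpolation weight $\theta=(\kappa-a)/(2\kappa-a)$ exactly matches the continuous weight $(1-t)/(2-t)$, so no rounding loss arises.

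Finally, the resource bound is direct. Greedy runs for $2\kappa+1$ rounds, and each round queries at most $|X|$ marginals, giving $O(|X|\kappa)$ queries. The mixture is computed in polynomial time by \cref{lem:core-dual}, and sampling a uniform subset is polynomial.
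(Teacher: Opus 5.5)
Your reduction to $\Gamma\ge\bconst$ via the dual formula of \cref{lem:core-dual} is fine, as are the rewriting of the label scores through $u_j=\eta v_j$ and the query count. The heart of the theorem, however, is that for every $\eta\ge0$ some labelled action scores at least $\bconst$, and you only conjecture it (``should give a quadratic'', ``I would try to show''). The route you sketch also cannot deliver it.

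You propose to derive the contradiction from the two pure actions together with chords labelled $\mathsf p$ on the left and $\mathsf v$ on the right, in the end only with $b=2\kappa$. These constraints can all hold strictly below $\bconst$. Take $f(S)=\min\{|S|,\kappa+1\}$ with $\kappa\ge6$ and $\eta=1/(2\kappa)$.
\begin{itemize}
\item Then $\eta v_j=\min\{j,\kappa+1\}/(2\kappa)$, while $1-\eta D_j$ equals $\nicefrac12$ for $j\le\kappa$ and $1$ for $j\ge\kappa+1$.
\item Both pure actions score $\nicefrac12$.
\item Every $(\mathsf p,\mathsf v)$ or $(\mathsf v,\mathsf v)$ chord scores at most $\frac12+\frac1{2\kappa}<\bconst$.
\end{itemize}
The only certificates reaching $\bconst$ use the $\mathsf p$ label at the \emph{right} endpoint. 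For example, the $(\mathsf v,\mathsf p)$ chord with $a=0$ and $b=\kappa+1$ scores $\kappa/(\kappa+1)$. This sequence is linear and then has a smaller constant slope, which is exactly your single-breakpoint extremal form, so your one-parameter check would find no contradiction. The exchange step is also unjustified: raising $u$ changes its increments and can lower the $\mathsf p$-scores.

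The missing idea is to keep the better label at \emph{every} endpoint. The heights $u_j=\max\{\eta v_j,1-\eta D_j\}$ are nondecreasing. Hence the best chord score $\gamma$ is the height at $q=1$ of the upper concave envelope of the points $(j/\kappa,u_j)$. LP duality for the mean-constrained program then yields one supporting line $u_j\le c+d\,j/\kappa$ with $c+d=\gamma$ and $c,d\ge0$. This single line bounds both the value and the slope of the greedy curve at every grid point. The piecewise-linear interpolant $s$ of $\eta v_j$ therefore satisfies $s(t)\le c+dt$ and $s'(t)\ge1-c-dt$ exactly on each cell. This removes your finite-$\kappa$ obstacle with no rounding loss.

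Now assume $\gamma<\bconst$; the case $\eta=0$ is trivial. Integrating to $t=2$ gives $d\ge2-3\gamma$. Integrating to $q_*=(1-\gamma)/d$ gives $c\ge(1-\gamma)^2/(2d)$. Hence $(1-\gamma)^2\le2cd\le\gamma^2/2$, which forces $\gamma\ge\bconst$.
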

\begin{proof}
Fix a dual multiplier $\eta\ge0$. Set $q_j=j/\kappa$ and $u_j=\max\{\eta v_j,1-\eta D_j\}$.
The greedy marginals are nonincreasing, so $D_j$ is nonincreasing, while $v_j$ is nondecreasing. Hence both terms in the maximum, and therefore the nonnegative heights $u_j$, are nondecreasing. Maximizing the label at each endpoint turns \cref{eq:chord-coefficients} into chord interpolation of the points $(q_j,u_j)$. Consequently the dual value $\gamma=\max_\ell(a_\ell+\eta b_\ell)$ equals the largest $\sum_jz_ju_j$ over distributions on the grid satisfying $\sum_jz_jq_j=1$. The extreme distributions are either the point mass at $q_\kappa=1$, corresponding to a pure action, or a two-point distribution on $q_a<1<q_b$ with weights
\[
1-\theta=\frac{q_b-1}{q_b-q_a},
\qquad
\theta=\frac{1-q_a}{q_b-q_a}=\frac{\kappa-a}{b-a},
\]
corresponding exactly to an anchored sampler. Thus $\gamma$ is the height at $q=1$ of the upper concave envelope of these endpoint heights.

Replacing the mean equality by $\sum_jz_jq_j\le1$ does not change the maximum. A distribution of smaller mean can be mixed with the point $q=2$ until its mean is one, without decreasing its value. The dual of this mean-constrained program therefore gives a supporting line
\begin{equation}\label{eq:line}
u_j\le c+dq_j,\qquad c+d=\gamma,\qquad c,d\ge0.
\end{equation}
Here $d\ge0$ is a dual sign constraint, and $c\ge u_0\ge0$ follows at $q_0=0$.

If $\eta=0$, then $\gamma=1$. Otherwise assume for a contradiction that $\gamma<\bconst<\nicefrac{3}{5}$. Linearly interpolate $s(q_j)=\eta v_j$. Since $v_{j+1}-v_j=D_j/\kappa$, the exact grid inequalities in \cref{eq:line} give
\begin{equation}\label{eq:greedy-differential}
s(0)=0,\qquad s(t)\le c+dt,\qquad s'(t)\ge1-c-dt\quad\text{a.e. on }[0,2].
\end{equation}
Indeed, on the $j$th cell $s'(t)=\eta D_j\ge1-c-dq_j\ge1-c-dt$.

Integrating to $2$ and comparing the bounds yields $d\ge2-3\gamma>0$. Since $\gamma<\nicefrac{3}{5}$, the point $q_*=(1-\gamma)/d$ satisfies
\[
0<q_*\le\frac{1-\gamma}{2-3\gamma}<2.
\]
Apply \cref{eq:greedy-differential} at $q_*$ to obtain
\[
c\ge(1-\gamma)q_*-\frac d2q_*^2
=\frac{(1-\gamma)^2}{2d}.
\]
Consequently $(1-\gamma)^2\le2cd\le(c+d)^2/2=\gamma^2/2$, which forces $\gamma\ge2-\sqrt2$, a contradiction. This holds for every dual multiplier, so \cref{lem:core-dual} proves the theorem.
\end{proof}

The anchor is essential to this proof: uniform sampling of an entire long prefix only implements chords from the origin. Retaining an initial prefix implements every chord crossing the feasible cardinality, which is exactly the geometry used in \cref{eq:line}. The matching lower bound in the next section does not restrict algorithms to greedy supports.

The same argument gives a strict improvement at every finite capacity: \cref{cor:finite-core} states an explicit coefficient $\beta_\kappa>\bconst$ and proves that it approaches $\bconst$ at rate $\Theta(1/\kappa)$. We do not claim optimality at fixed capacity.

\subsection{An Online Schedule with Hard Recourse}

The checkpoint principle is due to \citet{DFL+25}. We give the schedule explicitly because discarding an internal recent set must not discard all of it from the displayed solution. \Cref{fig:checkpoint} summarizes the block structure and the randomized migration window.

\begin{figure}[!t]
\centering
\begin{tikzpicture}[
  font=\small,
  timeline/.style={-{Latex[length=2mm]},line width=0.6pt,draw=slate},
  tick/.style={draw=slate,line width=0.6pt},
  card/.style={draw=slate!60,fill=white,line width=0.45pt,
               minimum width=4.55cm,minimum height=1.02cm,align=center,inner sep=4pt}
]
\coordinate (L) at (0.35,2.45);
\coordinate (Q) at (7.55,2.45);
\coordinate (R) at (14.75,2.45);
\draw[timeline] (L) -- (15.25,2.45);
\foreach \x/\lab in {0.35/{(q-1)L},7.55/{qL},14.75/{(q+1)L}} {
  \draw[tick] (\x,2.35) -- (\x,2.55);
  \node[above=3pt] at (\x,2.55) {$\lab$};
}
\node[anchor=west,font=\scriptsize,text=slate] at (0.62,2.69)
  {snapshot for old core $A_{q-1}$};
\node[anchor=west,font=\scriptsize,text=slate] at (7.82,2.69)
  {snapshot for new core $A_q$};

% Core shown during block q.
\node[anchor=east,font=\scriptsize\bfseries,text=slate] at (7.15,1.70) {displayed core};
\fill[softgray] (7.55,1.48) rectangle (10.15,1.88);
\fill[accentblue!30] (10.15,1.48) rectangle (11.65,1.88);
\fill[accentblue!72] (11.65,1.48) rectangle (14.75,1.88);
\draw[slate,line width=0.45pt] (7.55,1.48) rectangle (14.75,1.88);
\node[font=\scriptsize] at (8.85,1.68) {$A_{q-1}$};
\node[font=\scriptsize,text=accentblue] at (10.90,1.68) {migrate};
\node[font=\scriptsize,text=white] at (13.20,1.68) {$A_q$};
\draw[decorate,decoration={brace,mirror,amplitude=4pt},draw=accentblue]
  (10.15,1.37) -- (11.65,1.37)
  node[midway,below=5pt,font=\scriptsize,text=accentblue]{uniform window, length $W$};

% Fixed observation time and recent-element interval.
\draw[accentblue,line width=0.8pt] (12.72,2.31) -- (12.72,2.59);
\node[above=4pt,text=accentblue,font=\scriptsize\bfseries] at (12.72,2.59) {$t$};
\draw[decorate,decoration={brace,mirror,amplitude=4pt},draw=slate]
  (0.35,0.79) -- (12.72,0.79)
  node[midway,below=5pt,font=\scriptsize]{recent elements $R_t=X_t\setminus X_{(q-1)L}$};

% Three invariants.
\node[card,anchor=north west] at (0.35,-0.02)
  {\textcolor{slate}{\scriptsize CAPACITY}\\[-1pt]$\kappa+|R_t|\le k$};
\node[card,anchor=north] at (7.55,-0.02)
  {\textcolor{slate}{\scriptsize TARGET INSERTIONS}\\[-1pt]at most $c+1$ per update};
\node[card,anchor=north east] at (14.75,-0.02)
  {\textcolor{slate}{\scriptsize PARTIAL MIGRATION}\\[-1pt]probability at most $1/B$};
\end{tikzpicture}
\caption{Checkpointing within block $q$. A uniformly chosen migration window moves from the old core to the new core; outside this window the displayed core is stable. The recent set preserves feasibility while the lazy-superset update bounds worst-case recourse.}
\label{fig:checkpoint}
\end{figure}

We maintain a feasible superset of the current target: insert newly required elements and remove old elements only when capacity is exceeded. If the target gains at most $D$ elements per update, this rule uses at most $2D$ symmetric changes, even when the target discards many elements at a block boundary. \Cref{lem:lazy} states and proves this fact.

\begin{lemma}[Checkpoint conversion]\label{lem:checkpoint}
Suppose a current-only sampler using polynomially many exact current queries and ideal arithmetic and sampling returns a set of size at most $\kappa$ with $\E h(A)\ge\alpha\max_{|O|\le\kappa}h(O)$ for every fixed legal branch $h$, where $0<\alpha\le1-\nicefrac{1}{e}$. For every integer $B\ge4$, there is an online algorithm with expected approximation
$\alpha(1-2/B)(1-1/B)$ and hard symmetric recourse at most $8B^2+2$ in the ideal sampling model.
\end{lemma}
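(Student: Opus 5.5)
We build a block-structured checkpoint algorithm. Fix $B\ge4$, set the core capacity $\kappa=\lfloor k/B\rfloor$ (if $k<4B^2$ or so, simply maintain the whole current set or a lazy greedy, since constant recourse then trivially suffices), and set the block length $L=\lceil\kappa/B\rceil$ and migration window length $W=BL$; more precisely we choose $L$ so that $\kappa+2L\le k$ and $W$ with $W/(\text{block}) $ controlling the partial-migration probability. At each block boundary $qL$ we run the sampler on the snapshot $X_{qL}$ to obtain a core $A_q$ of size at most $\kappa$. During block $q$ the target set is $T_t=C_t\cup R_t$, where $R_t=X_t\setminus X_{(q-1)L}$ is the recent set of size at most $2L$ and $C_t$ is the displayed core: it equals $A_{q-1}$ before a uniformly random window, equals $A_q$ after it, and inside the window elements of $A_q$ are inserted at rate $\lceil\kappa/W\rceil$ per update while $A_{q-1}$ elements are dropped. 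Capacity holds because $|C_t|\le 2\kappa$ only during the window; to keep $|T_t|\le k$ I would instead insert new-core elements while deleting old-core elements one for one, so $|C_t|\le\kappa$ and $|T_t|\le\kappa+2L\le k$. The number of target insertions per update is at most $1+\lceil\kappa/W\rceil$, and \cref{lem:lazy} then converts this into hard symmetric recourse at most twice that; with $W$ a constant fraction of $L$ and $\kappa\approx BL\cdot B$, this gives $O(B^2)$, and tuning constants yields $8B^2+2$.

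The approximation is the core step. Fix a time $t$ in block $q$ and condition on the window position. If $t$ lies after the window, the displayed set contains $A_q\cup R'$ where $R'=X_t\setminus X_{qL}\subseteq R_t$; define the legal branch $h(S)=f(S\cup R')$ on $X_{qL}$ (legal by the remark after \cref{eq:legal}). The sampler guarantee gives $\E f(A_q\cup R')\ge\alpha\max_{|O|\le\kappa}f(O\cup R')\ge\alpha\,\OPT_\kappa(X_t)$ when we use the free-future benchmark with $O$ restricted to old elements and all new elements free. If $t$ lies before the window, the same argument applies with $A_{q-1}$ and $R_t$ from snapshot $(q-1)L$. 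Then compare $\OPT_\kappa(X_t)\ge(\kappa/k)\OPT_k(X_t)$ by submodularity, and since $\kappa\ge k(1-2/B)$ after choosing $L\le k/B$, we get the factor $(1-2/B)$. The window is independent of the sampler's randomness and the oblivious stream, so it contains $t$ with probability at most $W/L\le1/B$; on that event we lower bound by $0$, giving the factor $(1-1/B)$. The hypothesis $\alpha\le1-1/e$ is used only to ensure the sampler is not asked to beat the trivial bounds when $k$ is small and we fall back to exact maintenance.

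The main obstacle is reconciling the parameter constraints: capacity needs $\kappa+|R_t|\le k$ with $|R_t|$ up to two blocks, the window must be short relative to a block ($W\le L/B$) to make partial migration rare, yet the migration rate $\kappa/W$ must be $O(B^2)$, forcing $\kappa\lesssim B^2W\le BL$, which conflicts with $\kappa\approx k$ and $L\le k/B$ unless $L\approx k/B$ and $\kappa\approx k(1-2/B)$, giving rate $\kappa B/L\approx B^2$. I would first fix $L=\lfloor k/B\rfloor$, $\kappa=k-2L$, $W=\lfloor L/B\rfloor$, check the arithmetic including floors (which is where the constant $8$ comes from), and handle $k<B^2$ separately by displaying all elements up to capacity greedily.
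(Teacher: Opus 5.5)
Your final plan is essentially the paper's proof. It uses the same parameters $L=\lfloor k/B\rfloor$, $\kappa=k-2L$, $W=\lfloor L/B\rfloor$, one-for-one position replacement inside one of $B$ disjoint windows, the lazy-superset rule for recourse, the legal branches $S\mapsto f(S\cup R)$ on the old or new snapshot (with the optimal $Q$ restricted to that snapshot), and the factors $(1-2/B)$ and $(1-1/B)$ exactly as the paper does.

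Two corrections are needed. First, discard your opening guesses $\kappa=\lfloor k/B\rfloor$ and $W=BL$, which contradict your own later analysis. Second, the small-$k$ fallback must recompute greedy from scratch (ratio $1-1/e\ge\alpha$, recourse at most $2k$), not ``exact maintenance''; that is precisely where $\alpha\le1-1/e$ is used. Your threshold $k<B^2$ also works, since for $k\ge B^2$ one has $W=\lfloor k/B^2\rfloor\ge1$ and $\lceil\kappa/W\rceil\le2B^2$.
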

\begin{proof}[Proof overview]
For $k<4B^2$, recomputing greedy uses fewer than $8B^2$ changes. Otherwise reserve $2L$ positions for recent arrivals, where $L=\lfloor k/B\rfloor$, and compute a core of capacity $\kappa=k-2L$ at each block boundary. During the next block, move from the old core to the new one within one uniformly chosen window of length $W=\lfloor L/B\rfloor$. Replacing at most $\lceil\kappa/W\rceil\le2B^2$ positions per update completes the migration. The target contains these core positions and all elements from the current and preceding blocks, so its size is at most $k$. The lazy-superset rule bounds symmetric recourse by twice the number of target insertions; see \cref{fig:checkpoint}.

At a fixed time, the probability of lying in the migration window is at most $1/B$. Outside that window, the target contains a complete old or new core together with every element arriving after its snapshot. Its expected value is therefore at least $\alpha\OPT_\kappa(X_t)$. A uniform $\kappa$-subset of an optimal $k$-set gives $\OPT_\kappa(X_t)\ge(1-2/B)\OPT_k(X_t)$. Nonnegativity during migration proves the claimed coefficient. Appendix~\ref{app:checkpoint-details} specifies the update rule and proves the independence, block-boundary, and capacity claims, including the first and final incomplete blocks.
\end{proof}

For \cref{thm:main}, take $B=\lceil6/\varepsilon\rceil$. The core mixture has at most two weights. Round its weight to a dyadic rational, sample an anchored completion by bounded-bit combination unranking, and use bounded-bit window selection. The complete parameter and total-variation calculation is in \cref{app:finite-anchored}. Every possible sampled output remains feasible, so no bad random event changes the hard recourse guarantee. This proves the algorithmic half of \cref{thm:main}.

\section{A Matching Oracle Lower Bound}\label{sec:lower}

A single future element is enough for the lower bound. The current oracle hides a $k$-set $A$ among $(m+1)k$ elements. Before the future arrives, polynomially many queries reveal essentially no information about $A$. Afterwards, even unlimited information does not allow the algorithm to replace a linear number of elements in one small-recourse update. We construct one monotone submodular function on the full ground set. Its answers are exact rationals, including on queries larger than the maintained capacity.

The upper bound determines the value geometry of the construction. We first explain this geometry, then state the exact-hiding properties and prove the adaptive-query and recourse bounds. The analytic verification is in Appendix~\ref{app:profiles}.

\subsection{Coupled Current and Future Profiles}

The upper proof's dual constraint places both the current-value certificate and the residual-marginal certificate below a supporting line; see \cref{eq:line}. To attain equality in its limiting geometry, we make the future curve $w$ coincide with that line until the current curve $v$ meets it tangentially. Requiring $w=1-v'$ on this interval determines a quadratic $v$. After contact, the two certificates coincide through the continuation $v'=1-v$. The normalization makes the hidden complement's future value one. We now implement this design with a rational contact parameter near $\sqrt2$.

Fix a rational $T\in[\nicefrac{7}{5},\nicefrac{3}{2}]$ and an integer $m\ge32$. Define
\begin{equation}\label{eq:vw}
\begin{gathered}
b=(T^2/2+T+1)^{-1},\qquad c=bT^2/2,\\
v(t)=\begin{cases}b((T+1)t-t^2/2),&0\le t\le T,\\
1-be^{-(t-T)},&t\ge T,\end{cases}\qquad
w(t)=\begin{cases}c+bt,&0\le t\le T,\\v(t),&t\ge T.\end{cases}
\end{gathered}
\end{equation}
Both functions are nondecreasing and concave. The function $v$ is $C^2$, and $w$ is $C^1$ with locally Lipschitz derivative. Two identities explain the construction:
\begin{equation}\label{eq:profile-identities}
w(t)-v(t)=\frac b2(T-t)_+^2,\qquad c+v'(0)=1.
\end{equation}
The future value of a balanced unit-size set will be close to $w(1)=b+c$, while a hidden comparator will have value close to one. Write
\[
R(T)=b+c=\frac{T^2+2}{T^2+2T+2}.
\]
Its minimum is $\bconst$, attained at $T=\sqrt2$. More precisely,
$R(T)-\bconst=(\sqrt2-1)(T-\sqrt2)^2/(T^2+2T+2)$, so rational parameters approach the minimum.

At $T=\sqrt2$, we have $b=c=\bconst/2$ and $w(t)=\max\{v(t),1-v'(t)\}$. Thus $w(1)=\bconst$ is the balanced value, whereas $c+v'(0)=1$ is the hidden complement's value. Figure~\ref{fig:profiles} shows this geometry. The remaining construction preserves it up to explicitly bounded errors while enforcing exact hiding and full future compatibility.

\begin{figure}[t]
\centering
\includegraphics[width=.78\textwidth]{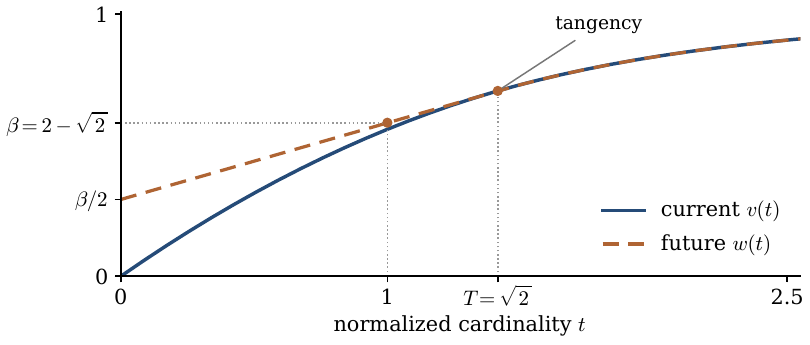}
\caption{The analytic profiles at $T=\sqrt2$. The future curve $w$ is the upper proof's supporting line until it meets the current curve $v$. They share the exponential tail. The finite oracle uses rational $T$ close to $\sqrt2$, then adds flattening and regularization.}
\label{fig:profiles}
\end{figure}

The coordinates encode a hidden partition $X=A\sqcup B$ with $|A|=k$ and $|B|=mk$. For a queried set $S\subseteq X$, write $x=|S\cap A|/k$ and $y=|S\cap B|/k$.
Thus $s=x+y$ is the total query size in units of $k$, and $u=x-y/m$ measures deviation from the balanced allocation. On the line $x+y=1$, the balanced point $(1/(m+1),m/(m+1))$ has future value $R(T)+O(1/m)$, while the hidden group $A$, at $(1,0)$, has value at least one. These are continuous profile points; the balanced point need not lie on the count grid. The proof below compares actual sets to it using a Lipschitz bound. It also removes one element from $A$ to make room for the final arrival, producing a feasible comparator of value at least $1-5/k$.

\subsection{Exact Hiding and Adaptive Queries}

To make this geometry into an exact oracle, let $\delta=1/(64m)$. We construct piecewise rational profiles $\widehat F_\delta,\widehat K_\delta$ on $[0,1]\times[0,m]$. Their needed properties are as follows.
\begin{itemize}[leftmargin=1.6em]
\item \emph{Exact hiding.} Whenever $|x-y/m|\le\delta$, the current profile equals $G(x+y)$ for a fixed concave function $G$, independently of the hidden partition.
\item \emph{A legal future.} Both profiles are monotone with coordinatewise diminishing gradients, and $\widehat K_\delta\ge\widehat F_\delta$ with $\nabla\widehat K_\delta\le\nabla\widehat F_\delta$. Thus they define one monotone submodular function before and after the final arrival.
\item \emph{Separated values.} Every coordinate derivative lies in $[0,5]$. At the balanced unit-size point, the future value is at most $R(T)+32/m+32\delta$, while $\widehat K_\delta(1,0)\ge1$. For fixed $T,m$, the polynomial coefficients and degrees are independent of $k$.
\end{itemize}
We obtain the flat band by clipping the tangency point, add a small regularizer for strict derivative margins, and replace exponentials by fixed rational polynomials within those margins. Appendix~\ref{app:profile-construction} gives the formulas and proves the three properties in \cref{lem:flat,lem:rational-profiles,lem:continuous-to-discrete}. Precision is fixed after $T,m$, before $k$ and the hidden partition.

Let $X=[n]$, where $n=(m+1)k$, choose $A$ uniformly among its $k$-subsets, and put $B=X\setminus A$. For $S\subseteq X$, define one function on $X\cup\{r\}$ by
\begin{equation}\label{eq:hidden-oracle}
f_A(S)=\widehat F_\delta\left(\frac{|S\cap A|}{k},\frac{|S\cap B|}{k}\right),
\quad
f_A(S\cup\{r\})=\widehat K_\delta\left(\frac{|S\cap A|}{k},\frac{|S\cap B|}{k}\right).
\end{equation}
The arrival order is $1,2,\ldots,n,r$, independently of $A$.

Here $G(s)=\widehat v(s)+\widehat r_m(s)$ is the rational scalar profile plus its regularizer. The derivative conditions certify submodularity on the entire count grid, including differences crossing piece boundaries.

For a fixed current query $S$, the variable $Z=|S\cap A|$ is hypergeometric with mean $|S|/(m+1)$. Its imbalance is
\[
u_A(S)=\frac{(m+1)Z-|S|}{mk}.
\]
Hoeffding's without-replacement inequality \citep{Hoe63} gives
\begin{equation}\label{eq:hypergeo}
\Prob_A\{|u_A(S)|>\delta\}\le2e^{-\delta^2k/2}.
\end{equation}
Outside this exceptional event the answer is exactly
$G(|S|/k):=\widehat v(|S|/k)+\widehat r_m(|S|/k)$, independently of $A$.

\begin{lemma}[Transcript hiding]\label{lem:transcript}
Suppose an algorithm makes at most $Q$ current queries before $r$ arrives. With probability at least $1-2(Q+1)e^{-\delta^2k/2}$ over $A$ and its random bits, its answers and output immediately before $r$ agree with a reference-oracle execution against $G$, and that output has imbalance at most $\delta$.
\end{lemma}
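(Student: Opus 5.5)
We use the standard reference-execution coupling. Fix the algorithm's random bits $\omega$, drawn independently of $A$. Run the algorithm against the reference oracle that answers every current query $S$ with $G(|S|/k)$. Since the reference answers do not depend on $A$, this run is a deterministic function of $\omega$. It produces a fixed sequence of at most $Q$ queries $S_1(\omega),\dots,S_Q(\omega)$ and a fixed output $S^{\mathrm{out}}(\omega)$, namely the maintained set immediately before $r$ arrives. The point is that none of these sets depends on $A$. We may pad to exactly $Q$ queries, which only enlarges the union below.

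Let $E$ be the event that $|u_A(S_i(\omega))|\le\delta$ for every $i$ and also $|u_A(S^{\mathrm{out}}(\omega))|\le\delta$. The first step is to show, by induction on $i$, that on $E$ the true execution against $f_A$ coincides with the reference run. Both executions use the same $\omega$. Suppose the first $i-1$ answers agree. Then the $i$th query is the same set $S_i(\omega)$ in both runs. By the exact-hiding property, since $|u_A(S_i)|\le\delta$, the true answer satisfies $f_A(S_i)=\widehat F_\delta(x,y)=G(x+y)=G(|S_i|/k)$, which equals the reference answer. After all queries agree, the online decisions and the output agree as well. The output's imbalance is at most $\delta$ by the definition of $E$.

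The second step bounds $\Pr(E^c)$. For each fixed $\omega$, each of the $Q+1$ sets is fixed independently of $A$. Hence \cref{eq:hypergeo} applies to each set separately, and a union bound gives $\Pr_A(E^c\mid\omega)\le2(Q+1)e^{-\delta^2k/2}$. Averaging over $\omega$ yields the claim.

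The main obstacle is adaptivity. In the true run, later queries depend on earlier answers and therefore possibly on $A$, so \cref{eq:hypergeo} cannot be applied to them directly. We avoid this by applying the concentration bound only to the $A$-independent reference sets and by using the inductive coupling to identify those sets with the true queries on $E$.

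One further check is needed: exact hiding must cover every query, including sets larger than $k$ and queries made at earlier arrival times that involve only elements already seen. This holds because the band condition $|x-y/m|\le\delta$ depends only on the counts, for any $S\subseteq X$ with $x\in[0,1]$ and $y\in[0,m]$. The arrival order is fixed independently of $A$, so which elements are queryable at each time is itself determined by $\omega$ alone.
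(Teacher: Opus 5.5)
Your proposal is correct and is essentially the paper's proof. As in the paper, you fix the random tape, run the algorithm against the reference oracle $G$, apply \cref{eq:hypergeo} with a union bound to the resulting $Q+1$ sets (which do not depend on $A$), couple the real execution up to the first differing answer, and average over the tape. The paper adds one detail you use only tacitly: if the $Q$-query cap is promised only on valid instances, it imposes the cap on every execution, so that the reference run against $G$ is well defined and itself makes at most $Q$ queries.
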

\begin{proof}
Fix the entire random tape and execute the algorithm against $G$. Its at most $Q$ queried sets and final current output are then fixed independently of $A$. Apply \cref{eq:hypergeo} and a union bound to these $Q+1$ sets. Until the first different answer, the real execution has the same state and asks the same next query. On the good event no first difference occurs, so the outputs also agree. Average over the random tape. If the query cap is promised only on valid instances, impose it on every execution. This preserves all promised executions and makes the reference run well defined.
\end{proof}

Fixing the reference transcript before applying concentration is essential: the actual adaptive queries need not be independent of the hidden partition.

\subsection{A Finite Query--Recourse Bound}

\begin{proposition}[Finite exact hard instance]\label{prop:finite-hard-instance}
Fix rational $T\in[\nicefrac{7}{5},\nicefrac{3}{2}]$, integers $m\ge32,k\ge10$, and rational profiles specified by \cref{eq:rational-parameters}. Let an algorithm maintain at most $k$ arrived elements, make at most $Q$ value queries before the last arrival, and change at most $C$ elements in symmetric difference at that arrival. Queries may be adaptive and arbitrarily large subsets of arrived elements. Queries after the last arrival are unrestricted. For some fixed $A$, the function \cref{eq:hidden-oracle} and the fixed order above satisfy
\begin{equation}\label{eq:finite-lower-ratio}
\frac{\E f_A(S_{n+1})}{\OPT_k(X\cup\{r\})}
\le \frac{R(T)+38/m+5C/k}{1-5/k}
       +2(Q+1)e^{-k/(8192m^2)}.
\end{equation}
In particular the right side is at most
$R(T)+76/m+10(C+1)/k+2(Q+1)e^{-k/(8192m^2)}$.
Every current singleton is at most $5/k$. Each oracle answer is an exactly evaluated rational of $O_{T,m,\tau}(\log(k+1))$ bits. After fixing a target improvement $\zeta>0$ and choosing $T,m,\tau$ in the parameter order below, this is an $O_\zeta(\log n)$ answer-length bound.
\end{proposition}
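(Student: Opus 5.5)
The proof will average over the uniformly random hidden group $A$ and then fix an $A$ attaining at most the average. I will work throughout with the random tape fixed, and I will treat the denominator and numerator separately.

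\textbf{Denominator.} Take $O=(A\setminus\{a\})\cup\{r\}$ for some $a\in A$. This set is feasible and has value $\widehat K_\delta(1-1/k,0)$. The coordinate derivatives lie in $[0,5]$ and $\widehat K_\delta(1,0)\ge1$, so this value is at least $1-5/k$. Hence $\OPT_k(X\cup\{r\})\ge1-5/k$ for every $A$.

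\textbf{Numerator on the good event.} Let $\mathcal E$ be the good event of \cref{lem:transcript}. Its complement has probability at most $2(Q+1)e^{-\delta^2k/2}$. With $\delta=1/(64m)$ we get $\delta^2/2=1/(8192m^2)$, which matches the exponent in \cref{eq:finite-lower-ratio}.

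On $\mathcal E$, the pre-arrival output $S_n$ has imbalance $|u|\le\delta$ and size at most $k$. The final set $S_{n+1}$ differs from $S_n$ in at most $C$ elements. The arrival $r$ may or may not be included, and including it only helps the algorithm. Write $S=S_{n+1}\setminus\{r\}$. Each change moves the normalized coordinates $(x,y)$ by $1/k$ in one coordinate, and each derivative is at most $5$. Therefore
\[
f_A(S_{n+1})\le\widehat K_\delta(x_0,y_0)+5C/k,
\]
where $(x_0,y_0)$ are the coordinates of $S_n$. I will also account for the at most one slot freed when $r$ is displayed, which only costs $O(1/k)$.

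Next I compare $(x_0,y_0)$ with the balanced unit-size point $(1/(m+1),m/(m+1))$. First, by monotonicity, enlarge $S_n$ to size exactly $k$ (or compare along $s\le1$). Balance $|x_0-y_0/m|\le\delta$ with $x_0+y_0\le1$ puts the point within $O(\delta)$ of the balanced point, with coordinate displacement at most about $2\delta$. The Lipschitz bound then gives
\[
\widehat K_\delta(x_0,y_0)\le\widehat K_\delta(\text{balanced})+O(\delta)\le R(T)+32/m+O(\delta).
\]
The budget $38/m$ must absorb the stated $32\delta$ term plus the displacement term. Since $\delta=1/(64m)$, these are small multiples of $1/m$, and I would check that the constants fit within $6/m$.

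\textbf{Combining.} Off $\mathcal E$, bound the ratio by $1$ using $f_A(S_{n+1})\le\OPT_k$. Then
\[
\E[\text{ratio}]\le\frac{R(T)+38/m+5C/k}{1-5/k}+\Prob(\mathcal E^c).
\]
Because $\OPT$ depends on $A$, I take the expectation of the ratio and pick the best $A$. The simplified bound follows from $1/(1-5/k)\le1+10/k$ for $k\ge10$, together with $R(T)+38/m\le1$.

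\textbf{Remaining claims.} Singleton values are $\widehat F_\delta(1/k,0)$ or $\widehat F_\delta(0,1/k)$, which are at most $5/k$ by the derivative bound and normalization $\widehat F_\delta(0,0)=0$. The bit-length claim holds because each answer is a fixed-degree, fixed-coefficient piecewise rational polynomial evaluated at rationals with denominator $k$. Such an evaluation has $O_{T,m,\tau}(\log(k+1))$ bits, and since $n=(m+1)k$, this is $O_\zeta(\log n)$ once the parameters are chosen as functions of $\zeta$. I also need to check that piece selection uses only rational comparisons.

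\textbf{Main obstacle.} The hardest step is the Lipschitz and balance comparison: turning "imbalance at most $\delta$, size at most $k$, plus $C$ changes" into the precise additive constant $38/m$. The constraint is that the coordinate $y$ ranges up to $m$ while the displacement must be controlled in units where derivatives are bounded by $5$. I would first try the direct displacement along the direction $(1,-1)$ and then verify that the constants fit.
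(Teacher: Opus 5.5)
Your overall structure is the paper's: the hiding lemma with exponent $\delta^2/2=1/(8192m^2)$, granting $r$ for free, charging $5/k$ per inserted current element, the comparator $\{r\}\cup(A\setminus\{a_0\})$ of value at least $1-5/k$, bounding the ratio by one off the good event, and averaging over $A$. The gap is in the step you flagged as the main obstacle. You claim that balance together with $x_0+y_0\le1$ puts the point within $O(\delta)$ of $(x_*,y_*)=(1/(m+1),m/(m+1))$, with coordinate displacement about $2\delta$. This is false whenever $|S_n|<k$. For example, $S_n=\varnothing$ has imbalance $0$ but sits at the origin.

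Enlarging $S_n$ to size $k$ by monotonicity does not carry the imbalance bound over to the enlarged set. What happens depends entirely on which elements you add. Adding $A$-elements can push the hidden mass toward $1$. Even the natural choice of padding with $B$-elements leaves the band: the empty set padded this way lands at $(0,1)$, with imbalance $-1/m=-64\delta$ and displacement $x_*\approx64\delta$, not $2\delta$.

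The missing idea is to use balance before monotonicity. The paper does this as follows. From $x_0-y_0/m\le\delta$ and $x_0+y_0\le1$, deduce the hidden-mass bound $x_0\le x_*+\delta$. This is the only information that survives padding with $B$-elements. The padded point then lies on $x+y=1$, where the profile is $5$-Lipschitz in $x$. There the displacement is bounded only by $x_*+\delta$, which costs $5/(m+1)+5\delta$. Adding the $32/m+32\delta$ at the balanced point gives exactly what the $38/m$ budget is built to absorb.

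Alternatively, your parenthetical ``compare along $s\le1$'' can be made rigorous, and it gives a sharper bound. At fixed $s_0=x_0+y_0$ you have
\[
|x_0-s_0x_*|=\tfrac{m}{m+1}\,|x_0-y_0/m|\le\delta,
\]
so moving along $(1,-1)$ to $(s_0x_*,s_0y_*)$ costs at most $5\delta$. Coordinatewise monotonicity along the balanced ray then raises $s_0$ to $1$, giving $R(T)+32/m+37\delta$. Either fix closes the gap. As written, however, your displacement claim is false.

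Two minor points. First, $R(T)+38/m\le1$ fails for small $m$ (already $38/32>1$). You only need $R(T)\le1$ and $k\ge10$: the bounds $R(T)/(1-5/k)\le R(T)+10/k$ and $(38/m+5C/k)/(1-5/k)\le76/m+10C/k$ give the simplified form. Second, the extra ``$O(1/k)$ for the freed slot'' is unnecessary, since $r$ is already granted for free. Including it would also spoil the stated constant.
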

\begin{proof}
On the good event of \cref{lem:transcript}, write $S_0$ for the pre-arrival output. Since its size is at most $k$, its normalized hidden mass obeys
$x=|S_0\cap A|/k\le1/(m+1)+\delta$. Add elements of $B$ until the entire current set has size $k$, which is possible because $|B|=mk$ and cannot decrease its future value. Along the line $x+y=1$, both coordinate gradients of the rational future profile lie in $[0,5]$, so its value is $5$-Lipschitz as a function of $x$. Its balanced point is $x_*=1/(m+1)$. Here $u=0,t=1$. This completion together with $r$ may have $k+1$ elements; it is used only as a monotone upper bound on $f_A(S_0\cup\{r\})$, not as a feasible comparator. Therefore
\begin{equation}\label{eq:balanced-payoff}
\begin{aligned}
f_A(S_0\cup\{r\})
&\le R(T)+\frac{32}{m}+32\delta
          +5\left(\frac1{m+1}+\delta\right)\\
&\le R(T)+\frac{38}{m}.
\end{aligned}
\end{equation}
We used $0<E_D(s)\le1$ and $1<T$, so $w(1)=R(T)$ is unchanged by rationalization.

Each current insertion has marginal at most $5/k$, by integrating its coordinate gradient. Granting $r$ for free, at most $C$ newly inserted current elements can increase the bound by $5C/k$. Deletions cannot increase the value. This argument holds after any amount of additional querying.

At the hidden point $(1,0)$, the future profile has the exact value
$\widehat K_\delta(1,0)=c+v'(0)+\widehat r_m(1)+32\delta E_D(1)\ge1$.
Removing one current element costs at most $5/k$. Thus, for any $a_0\in A$,
\begin{equation}\label{eq:comparator}
f_A\bigl(\{r\}\cup(A\setminus\{a_0\})\bigr)\ge1-5/k.
\end{equation}
This comparator has exactly $k$ elements. On a bad event, any feasible output has ratio at most one. The optimum is the same for every $A$ because the instances differ only by a permutation of current identifiers. Averaging the ratio over $A$ and the random tape therefore gives \cref{eq:finite-lower-ratio} for some fixed $A$. The simpler bound uses $1/(1-5/k)\le2$ and $R(T)\le1$. Exact encoding is established in \cref{app:rational}.
\end{proof}

For a prescribed improvement $\zeta>0$, choose a rational $T$ near $\sqrt2$, then a fixed $m$ large enough, and fix the rational profiles. These choices precede $k$ and the hidden partition. As $k$ grows, polynomial $Q$ and $C=o(k)$ make the remaining terms vanish. More generally, a fixed improvement forces $C=\Omega_\zeta(k)$ or $Q=\exp(\Omega_\zeta(k))$. The fixed instance extracted in the proposition is chosen before the algorithm's random tape, so the adversary is oblivious. This proves the lower-bound half of \cref{thm:main}.

The proposition bounds worst-case resources. The same tradeoff holds for expected resources when the expected-query guarantee applies to every valid instance, including the reference oracle. A weaker variant for promises only on the hard family appears in \cref{prop:family-expected-hard}. Both variants allow unrestricted queries after the final arrival.

\begin{corollary}[Expected queries and expected final recourse]\label{cor:expected-hard-instance}
Fix the parameters of \cref{prop:finite-hard-instance}. Suppose a randomized algorithm terminates almost surely with feasible outputs on every valid instance with $n=(m+1)k$ current elements and one final element. Suppose, on every such instance, its expected pre-arrival query count is at most $\overline Q$ and its expected symmetric difference at the final arrival is at most $\overline C$. Then some fixed hard instance satisfies
\begin{equation}\label{eq:expected-lower-ratio}
\frac{\E f_A(S_{n+1})}{\OPT_k(X\cup\{r\})}
\le \frac{R(T)+38/m+5\overline C/k}{1-5/k}
       +2(\overline Q+1)e^{-k/(8192m^2)}.
\end{equation}
No deterministic resource bound or expected running-time bound is required, and post-arrival queries are unrestricted.
\end{corollary}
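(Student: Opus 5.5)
We follow the proof of \cref{prop:finite-hard-instance} and replace each worst-case resource bound by its expectation. The worst-case argument uses two facts: the pre-arrival transcript coincides with a reference execution against $G$ except on an event of small probability, and the post-arrival value is at most $R(T)+38/m+5C/k$ on the good event. We redo both with random resource counts. Fix the random tape $\omega$. Let $Q(\omega)$ be the number of pre-arrival queries in the \emph{reference} execution against $G$, and let $C(\omega)$ be the symmetric difference at the final arrival in the real execution. The reference oracle is itself a valid instance: it is the function $f_A$ restricted to the flat band, or rather the size-only function $G$ extended by $\widehat K_\delta$ along the same construction. The hypothesis therefore gives $\E_\omega Q(\omega)\le\overline Q$. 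Almost-sure termination makes $Q(\omega)$ finite almost surely.

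\textbf{Transcript bound.} Conditional on $\omega$, the reference queries and the reference pre-arrival output are at most $Q(\omega)+1$ fixed sets, independent of $A$. Applying \cref{eq:hypergeo} and a union bound exactly as in \cref{lem:transcript}, the probability over $A$ of disagreement or imbalance above $\delta$ is at most $\min\{1,2(Q(\omega)+1)e^{-\delta^2k/2}\}$. Averaging over $\omega$ and using linearity bounds the bad-event probability by $2(\overline Q+1)e^{-k/(8192m^2)}$, since $\delta^2/2=1/(8192m^2)$. No truncation of the query count is needed, because the union bound is linear in the count.

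\textbf{Payoff bound.} On the good event, the argument of \cref{prop:finite-hard-instance} applies pathwise. We add elements of $B$ to reach the line $x+y=1$ and use the Lipschitz bound \cref{eq:balanced-payoff}. Each of the $C(\omega)$ changes can add at most $5/k$, so
\[
f_A(S_{n+1})\le R(T)+38/m+5C(\omega)/k.
\]
On the bad event we use ratio at most one. Dividing by $\OPT_k(X\cup\{r\})\ge 1-5/k$ from \cref{eq:comparator} and taking expectations over both $A$ and $\omega$ gives
\[
\E\frac{f_A(S_{n+1})}{\OPT}\le\frac{R(T)+38/m+5\,\E C/k}{1-5/k}+2(\overline Q+1)e^{-k/(8192m^2)}.
\]
Here we bound $\E[C\mathbf 1_{\text{good}}]\le\E C$. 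We need $\E C\le\overline C$ averaged over uniform $A$. This holds because the hypothesis bounds the expected recourse on \emph{every} instance $f_A$, so it also bounds the average over $A$. Finally, $\OPT$ is independent of $A$ by symmetry. Since the averaged ratio satisfies the bound, some fixed $A$ satisfies it as well.

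\textbf{Main obstacle.} The delicate step is justifying $\E Q\le\overline Q$ for the reference execution. The reference oracle $G$ must genuinely be a valid monotone submodular instance with $n$ current elements and one final element, so that the expected-query hypothesis applies to it. We would take the reference instance to be the function defined by $\widehat F_\delta,\widehat K_\delta$ with the flat band extended to all of the grid. Concretely, the current values are $G(|S|/k)$, and the future element is given by a symmetric legal branch, for example $\widehat K_\delta$ evaluated at the balanced allocation. Its validity then follows from the derivative conditions of \cref{lem:flat,lem:rational-profiles}, using the concavity of $G$ and the inequalities $\widehat K\ge\widehat F$ and $\nabla\widehat K\le\nabla\widehat F$. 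Only the pre-arrival behaviour of this instance matters, and it coincides with the execution against $G$. If verifying a symmetric future branch proves awkward, the fallback is \cref{prop:family-expected-hard}. There the expectation is promised only on the hard family, and we truncate via Markov's inequality at $Q$ queries, at the cost of a weaker bound.
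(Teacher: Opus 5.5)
Your proposal is correct and follows the paper's proof step by step. It couples the reference transcript through a per-tape union bound averaged over the tape, which needs no truncation. It uses the pathwise payoff bound $R(T)+38/m+5\Delta_A/k$ on good paths, with the nonnegative recourse bounded by its unconditional expectation, and ratio one on bad paths.

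The only difference is the reference instance, and there the paper is simpler. It adjoins $r$ as an independent modular element, $f_{\rm ref}(S)=G(|S\cap X|/k)+\gamma\,\mathbf 1\{r\in S\}$, whose validity needs only that $G$ is nondecreasing and concave with $G(0)=0$. Your balanced restriction of $\widehat K_\delta$ is also a legal branch, by the Hessian, gradient, and cross-future signs of \cref{lem:rational-profiles}. It just makes the step you flagged as the main obstacle harder than necessary, and the fallback to \cref{prop:family-expected-hard} is never needed.
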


The reference $G$ is itself a valid concave-cardinality oracle with a full-stream extension. Its expected query count is therefore at most $\overline Q$; conditioning on its almost-surely finite transcript gives the same exceptional term as before. The value bound then uses the actual expected final recourse. \Cref{app:expected-resources} gives the full proof and a separate truncation variant when the expectation promises hold only on the hard family.

The proof applies to arbitrary outputs and arbitrary ordinary value queries. Given the hidden partition, the useful current set is explicit and the function is easy to evaluate. The lower bound concerns the information available before the last arrival together with the number of changes allowed afterwards.

\section{Universal Future Prices: Existence and Query Complexity}\label{sec:prices}

A future price is useful because it can turn a condition on current coordinates into a guarantee against every unknown future. We first make this implication explicit, both at a point with small first-order gap and by averaging a sequence of arbitrary bounded prices. We then give a current-query construction below $\bconst$, prove that prices actually exist at $\aconst=1-\nicefrac{1}{e}$, and show that computing any bounded prices above $\bconst$ requires exponentially many queries.

Fix a normalized current function $f$ on $X=[n]$, put $M_i=f(\{i\})$ and $M=\max_iM_i$, and let $\calH_f$ be the legal future branches in \cref{eq:legal}. Write $H_h(x)=\E h(Z_x)$, where current coordinate $i$ is present independently with probability $1-e^{-x_i}$, and let $F_h$ be the Bernoulli multilinear extension. Monotonicity gives $F_h(x)\ge H_h(x)$ on $[0,1]^n$. For capacity $\kappa\ge1$, let
\[
P_\kappa=\{x\in[0,1]^n:\one^\top x\le\kappa\},
\qquad
P_h=\max_{O\subseteq X,\ |O|\le\kappa}h(O).
\]
The empty current set is immediate; below assume $n\ge1$.

For $0\le\alpha\le1$, $\xi\ge0$, and $x\in P_\kappa$, an $(\alpha,\xi)$-price is a vector $p\in\prod_i[0,M_i]$ satisfying
\begin{equation}\label{eq:operational-price}
H_h(x)-\alpha h(O)\ge\ip{p}{x-\one_O}-\xi M
\quad\text{for every }h\in\calH_f,\ |O|\le\kappa.
\end{equation}
A universal $\alpha$-price has $\xi=0$ and satisfies the same inequality for every $O\subseteq X$, independently of a capacity.
The coordinatewise box
$G_f=\prod_i[0,M_i]$
is part of the certificate: a vector in the larger uniform box $[0,M]^n$ need not be a valid price. We use $[0,M]^n$ only as a convenient computational envelope when an implementation approximates an already valid vector in $G_f$.

Different members of $\calH_f$ need not have a simultaneous submodular extension \citep{Csi20}. As in the scenario formulation of \citet[Section~5.1, Lemma~5.1]{DFL+25}, we require only individual compatibility with the same current restriction. By \cref{lem:overlap}, this includes branches $h(S)=f(S\cup R)$ with $R\cap X\ne\varnothing$: coordinates already in $R$ have zero marginal in $h$.

\subsection{From Current Prices to a Future-Robust Core}

\begin{lemma}[A price gap certifies the response]\label{lem:price-gap}
Suppose $p$ satisfies \cref{eq:operational-price} at $x\in P_\kappa$, and put
\[
\gap(p,x)=\max_{y\in P_\kappa}\ip{p}{y-x}.
\]
Then, simultaneously for every legal future,
\[
H_h(x)\ge\alpha P_h-\xi M-\gap(p,x).
\]
In particular, for $\eta\ge0$, $\gap(p,x)\le\eta M$ gives
$H_h(x)\ge(\alpha-\xi-\eta)P_h$.
\end{lemma}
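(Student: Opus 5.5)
Fix a legal future $h\in\calH_f$ and let $O$ be a set attaining $P_h$, so $|O|\le\kappa$ and $h(O)=P_h$. The plan is to instantiate \cref{eq:operational-price} at this $O$ and then dispose of the linear term with the gap. Doing so gives
\[
H_h(x)\ge\alpha P_h+\ip{p}{x-\one_O}-\xi M.
\]
The indicator $\one_O$ lies in $P_\kappa$, because its coordinates are in $\{0,1\}$ and $\one^\top\one_O=|O|\le\kappa$. By the definition of the gap,
\[
\ip{p}{\one_O-x}\le\max_{y\in P_\kappa}\ip{p}{y-x}=\gap(p,x),
\]
so $\ip{p}{x-\one_O}\ge-\gap(p,x)$. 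Substituting this into the previous display yields the first claim, $H_h(x)\ge\alpha P_h-\xi M-\gap(p,x)$. The price $p$ and the point $x$ do not depend on $h$, and neither does the gap. The bound therefore holds simultaneously for every legal future; the only $h$-dependent choice is the comparator $O$, which the certificate allows to be arbitrary.

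For the \emph{in particular} clause, I need to compare $M$ with $P_h$. The claim is $M\le P_h$. Pick $i$ with $M_i=M$; such an $i$ exists because $n\ge1$. Since $\kappa\ge1$, the singleton $\{i\}$ is feasible, so $P_h\ge h(\{i\})$. Legality gives $h\ge f$, hence $h(\{i\})\ge f(\{i\})=M$. Then $\xi M+\eta M\le(\xi+\eta)P_h$ because $\xi,\eta\ge0$. Combining this with $\gap(p,x)\le\eta M$ gives $H_h(x)\ge(\alpha-\xi-\eta)P_h$.

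There is no real obstacle here. The one point to check carefully is that $\one_O$ is feasible in $P_\kappa$, which requires the cardinality bound on $O$. The other is that the singleton comparison uses $h\ge f$ from \cref{eq:legal} rather than monotonicity of $f$ alone.
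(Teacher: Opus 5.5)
Your proposal is correct and follows the paper's proof exactly: instantiate the certificate at a maximizer $O$ of $P_h$, bound the price term below by $-\gap(p,x)$ using $\one_O\in P_\kappa$, and then use $P_h\ge M$, which follows from $h\ge f$ and $\kappa\ge1$. Your write-up simply spells out the feasibility of $\one_O$ and the singleton comparison in more detail than the paper does.
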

\begin{proof}
Choose a maximizer $O$ of $P_h$. Since $\one_O\in P_\kappa$, the price term in \cref{eq:operational-price} is at least $-\gap(p,x)$. Finally $P_h\ge M$, because $h\ge f$ and $\kappa\ge1$.
\end{proof}

Finding a small-gap point is convenient when the prices are gradients of a bounded smooth potential. It is not required for the conversion to a core. Standard projected online linear optimization controls the average price terms even for discontinuous prices, and future-oblivious dependent rounding converts the resulting responses into feasible sets \citep{CVZ10}. The following lemma states this consequence with the reliability condition needed for randomized price routines.

\begin{lemma}[Bounded prices yield a robust core]\label{lem:price-to-core}
Fix $0\le\alpha\le1$, $\xi\ge0$, and rational $\eta,\tau\in(0,1)$. Suppose a current-only routine, called at an adaptively chosen $x^s\in P_\kappa$, returns $p^s$ with $0\le p_i^s\le M_i$ on every outcome. Let $\mathcal F_s$ contain its entire preceding history, so $x^s$ is $\mathcal F_s$-measurable. Assume that, for every fixed legal $h$ and every fixed $|O|\le\kappa$,
\begin{equation}\label{eq:price-history}
H_h(x^s)-\alpha h(O)
\ge
\E\!\left[\ip{p^s}{x^s-\one_O}\mid\mathcal F_s\right]-\xi M.
\end{equation}
Then at most $32\kappa n/\eta^2$ calls produce a current-only random set $A$ with $|A|\le\kappa$ on every outcome and
\[
\E h(A)\ge(\alpha-\xi-\eta-\tau)P_h
\quad\text{for every legal }h.
\]
The additional arithmetic and sampled-bit counts have deterministic polynomial bounds when the returned prices have uniformly polynomial encoding length. The term $\tau$ is an arbitrarily prescribed finite-bit rounding error.
\end{lemma}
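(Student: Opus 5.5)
The plan is to run projected online linear optimization over $P_\kappa$ with the returned prices as linear rewards, and then round the average iterate. Set $N=\lceil 32\kappa n/\eta^2\rceil$, start from $x^1=0$, and at step $s$ call the routine at $x^s$ to obtain $p^s$. Then update by a projected gradient ascent step
\[
x^{s+1}=\Pi_{P_\kappa}\bigl(x^s+\lambda p^s/M\bigr),
\]
with a fixed step size $\lambda$. By construction $x^s$ is $\mathcal F_s$-measurable. The standard regret bound for online gradient ascent against an arbitrary (even adaptive) sequence of linear functions with $\|p^s/M\|_2\le\sqrt n$, over a domain of diameter at most $\sqrt{2\kappa}$, gives, pathwise and for every comparator $y\in P_\kappa$,
\[
\frac1N\sum_s\ip{p^s}{y-x^s}\le M\sqrt{2\kappa n/N}\le \eta M/4 .
\]
The constant here is chosen so that $32$ suffices after optimizing $\lambda$. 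I will carry this slack rather than fix constants tightly.

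Next I will average the hypothesis \eqref{eq:price-history} over $s$, for a fixed $h$ and for $O$ a maximizer of $P_h$. The key subtlety is that the regret bound is pathwise but the hypothesis holds only in conditional expectation. I will use the tower rule: $\E\ip{p^s}{x^s-\one_O}=\E\,\E[\ip{p^s}{x^s-\one_O}\mid\mathcal F_s]$. Since $O$ is fixed independently of the randomness, the regret bound with $y=\one_O$ can be taken in expectation. This yields
\[
\E\Bigl[\tfrac1N\textstyle\sum_s H_h(x^s)\Bigr]\ge\alpha P_h-\xi M-\eta M/4 .
\]
Because $P_h\ge M$ (from $h\ge f$ and $\kappa\ge1$, as in \cref{lem:price-gap}), this is at least $(\alpha-\xi-\eta)P_h$. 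Importantly, the $x^s$ do not depend on $h$, so a single run serves every legal future simultaneously.

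For the output, pick $s$ uniformly at random and round $x^s$. One option is to sample $Z_{x^s}$ directly, but it may exceed size $\kappa$. Instead I will map $x^s$ to $z=1-e^{-x^s}\in P_\kappa$ (coordinatewise, noting $1-e^{-x}\le x$) and apply future-oblivious dependent (swap or pipage) rounding \citep{CVZ10}. This produces a set $A$ with $|A|\le\kappa$ on every outcome. Its distribution depends only on $z$, and it satisfies $\E h(A)\ge F_h(z)$ for every submodular $h$, since swap rounding preserves the multilinear extension in expectation for each fixed submodular function. We have $F_h(z)=H_h(x^s)$ exactly, because $H_h$ is the Bernoulli product at marginals $1-e^{-x_i}$. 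The output is therefore current-only and, for every legal $h$, satisfies $\E h(A)\ge(\alpha-\xi-\eta)P_h$.

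The main obstacle is the finite-bit bookkeeping behind $\tau$. Exact Euclidean projection onto $P_\kappa$ is computable by sorting and thresholding, which keeps iterates polynomially encoded given polynomially encoded prices. However, $e^{-x}$ is irrational. I will replace it by a rational lower approximation $\tilde z\le z$ with $F_h(z)-F_h(\tilde z)\le\tau P_h$, using that each coordinate of the multilinear gradient is at most $M_i\le P_h$ and summing the errors, which are set below $\tau/n$. I will also replace the uniform choice of $s$ and the rounding coin flips by dyadic approximations whose total-variation error is at most the remaining part of $\tau$, charged against $h(A)\le nM\le nP_h$. With these replacements, the counts of arithmetic operations and sampled bits are deterministic polynomials.
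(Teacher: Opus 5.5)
Your proposal is correct and follows the paper's proof step for step: projected online linear optimization over $P_\kappa$ with a pathwise regret bound valid for adaptive prices, the tower rule to combine that bound with the conditional hypothesis, a fresh uniformly random iterate, future-oblivious pair rounding, and dyadic total-variation errors charged against $P_h$. Your opening sentence says ``round the average iterate,'' which would require concavity of $H_h$, but you actually round a random iterate, which is right; the one real variation is that you round $1-e^{-x^s}$, forcing an extra irrational-approximation step, whereas the paper rounds $x^U$ directly using $F_h(x)\ge H_h(x)$. Also, $h(A)\le nM$ can fail when $h(\varnothing)>0$, but $h(A)\le P_h$ for $|A|\le\kappa$ is all the total-variation charge needs.
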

\begin{proof}[Proof overview]
Projected online linear optimization, applied to the realized bounded prices, gives the pathwise regret bound
\[
\frac1I\sum_{s=0}^{I-1}\ip{p^s}{\one_O-x^s}\le\eta M.
\]
Take expectations in \cref{eq:price-history} and sum. The average response is at least $\alpha h(O)-(\xi+\eta)M$. Choose an iterate uniformly with fresh randomness and apply mean-preserving pair rounding. For each fixed $h$, its multilinear extension is convex along the rounding exchanges and dominates its Poisson response. Thus the same current-only rounding law produces the asserted feasible core for every future. Appendix~\ref{app:price-averaging} gives the step size, deterministic call and bit bounds, and treatment of approximation errors.
\end{proof}

Guaranteed coordinate approximations also suffice. The ideal prices lie in $G_f$ and satisfy \cref{eq:price-history}, while their implemented approximations may lie in the uniform envelope $[0,M]^n$. A coordinate error of at most $aM$ perturbs the price term by at most $2\kappa aM$. Taking $a\le\eta/(16\kappa)$ fits within the regret bound; Appendix~\ref{app:price-averaging} gives this calculation and its use for represented MRS. The guarantee depends on approximating valid prices; boundedness alone does not certify a vector. The price field need not be continuous or arise from a potential. For general $H_h$, the lemma rounds a randomly selected iterate. If every $H_h$ is concave, the response at the average iterate dominates the average response and gives a fractional core.

A separate high-probability version holds when each call can already provide simultaneous validity of \cref{eq:operational-price}, conditional on its history, with failure probability at most $\delta/I$. A union bound over the $I$ calls, followed by independent rounding and nonnegativity on failure, gives coefficient $\alpha-\xi-\eta-\delta-\tau$. A bare $\nicefrac{2}{3}$-success price routine does not automatically supply this premise: its unknown-future constraints cannot generally be checked to select a successful repetition. The conditional-expectation model in \cref{lem:price-to-core} and the simultaneous-success model in \cref{thm:price-hardness} are therefore stated separately.

\subsection{Computable Prices from a Scale Potential}\label{sec:scale}

The anchored core proves the general upper bound with a short greedy chain. We now give a second route to the same coefficient. It produces fractional prices, which allow us to preserve modular value in Section~\ref{sec:curvature} and to migrate solutions through concave responses in Section~\ref{sec:structure}. These two additional properties are not needed by the anchored algorithm.

Non-oblivious potentials have a substantial history in submodular optimization \citep{FW14}. In the continuous setting, \citet[Lemma~2 and Theorems~1--2]{ZDC+22} integrate gradients along scales and convert an exchange inequality into stationary-point approximation. Our additional requirement is that a potential computed from the current restriction certify every compatible future contraction. The following inequality establishes that requirement; stationarity and regret are standard ways to use it.

Let $g$ be a normalized monotone submodular function on the full ground set. For a current set $X$ and a fixed set $R$ in that ground set, let $F_R^g(x)=\E g(Z_x\cup R)$, where the coordinates of $Z_x\subseteq X$ are independently present with probabilities $x_i$, and let
$H_R^g(x)=F_R^g(\one-e^{-x})$.
All exponentials are coordinatewise. Only the current function is queried. For $1\le T\le\sqrt2$, define
\[
\Phi_T^g(x)=\int_0^T\frac{H_\varnothing^g(tx)}t\,dt.
\]
The integrand has a continuous limit at zero.

\begin{lemma}[Scale certificate]\label{lem:scale}
For every nonnegative $x$, every $O\subseteq X$, and every fixed $R$, including $R\cap X\ne\varnothing$,
\begin{equation}\label{eq:scale}
(1+T)H_R^g(x)-Tg(O\cup R)
\ge\ip{\nabla\Phi_T^g(x)}{x-\one_O}.
\end{equation}
\end{lemma}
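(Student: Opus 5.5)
The plan is to compute $\nabla\Phi_T^g$ explicitly, split the right side of \cref{eq:scale} into an $x$-part and an $O$-part, and bound each with respect to the contracted function $H_R^g$. The whole inequality should then reduce to a one-dimensional concavity estimate along the ray $t\mapsto tx$, and that estimate is exactly where $T\le\sqrt2$ enters.

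\emph{Gradient.} Differentiating under the integral gives $\partial_{x_i}\bigl[H_\varnothing^g(tx)/t\bigr]=\partial_iH_\varnothing^g(tx)$, so
\[
\nabla\Phi_T^g(x)=\int_0^T\nabla H_\varnothing^g(tx)\,dt .
\]
The integrand is bounded, since each partial derivative is at most a singleton value, so there is no singularity at $t=0$. Because $\ip{\nabla H_\varnothing^g(tx)}{x}=\frac{d}{dt}H_\varnothing^g(tx)$, the $x$-part telescopes to $H_\varnothing^g(Tx)-H_\varnothing^g(0)\le H_\varnothing^g(Tx)$. Monotonicity then gives $H_\varnothing^g(Tx)\le H_R^g(Tx)$.

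\emph{The $O$-part.} For the Poisson extension I will use the identity
\[
\partial_iH_R^g(y)=e^{-y_i}\,\partial_iF_R^g(\one-e^{-y})=\E g(Z_y\cup\{i\}\cup R)-\E g(Z_y\cup R),
\]
which holds because $i$ is absent from $Z_y$ with probability $e^{-y_i}$. By submodularity, $\partial_iH_\varnothing^g\ge\partial_iH_R^g$; this is the step that handles $R\cap X\neq\varnothing$, where contracted coordinates simply have zero marginal. Summing the identity over $i\in O$ and applying submodularity again gives $\sum_{i\in O}\partial_iH_R^g(y)\ge g(O\cup R)-H_R^g(y)$. Therefore
\[
-\int_0^T\sum_{i\in O}\partial_iH_\varnothing^g(tx)\,dt\le -T\,g(O\cup R)+\int_0^T H_R^g(tx)\,dt .
\]
Combining the two parts, it suffices to prove
\[
\varphi(T)+\int_0^T\varphi(t)\,dt\le(1+T)\varphi(1),
\qquad \varphi(t):=H_R^g(tx).
\]

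\emph{Concavity step (main obstacle).} I will show that $\varphi$ is concave and nondecreasing. Its derivative is $\varphi'(t)=\sum_i x_ie^{-tx_i}\,\partial_iF_R^g(\one-e^{-tx})$. Each factor is nonnegative and nonincreasing in $t$: the exponential decreases, and $\partial_iF_R^g$ decreases as its coordinatewise-increasing argument grows, because $F$ is the multilinear extension of a submodular function. Concavity gives $\varphi(t)\le\varphi(1)+\varphi'(1)(t-1)$ for all $t$, with a one-sided derivative if needed. Substituting this bound,
\[
\varphi(T)+\int_0^T\varphi\le(1+T)\varphi(1)+\varphi'(1)\Bigl[(T-1)+\tfrac{T^2}{2}-T\Bigr]=(1+T)\varphi(1)+\varphi'(1)\Bigl(\tfrac{T^2}{2}-1\Bigr).
\]
The last term is $\le0$ precisely because $\varphi'(1)\ge0$ and $T\le\sqrt2$. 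This closes the argument.

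The delicate points are, first, keeping the direction of each submodularity comparison correct: the $O$-term must use the uncontracted $g$ so that it dominates $H_R^g$, while the $x$-term is pushed up to $H_R^g$ by monotonicity. Second, concavity must be justified along the ray without smoothness assumptions beyond those of the finite multilinear polynomial. That polynomial is smooth, so differentiating under the integral is routine.
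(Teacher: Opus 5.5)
Your proposal is correct and follows essentially the same route as the paper. Both arguments telescope the $x$-part to $H_\varnothing^g(Tx)\le H_R^g(Tx)$, use the diminishing-returns bound $\sum_{i\in O}\partial_iH_\varnothing^g(tx)\ge g(O\cup R)-H_R^g(tx)$, and close with the tangent-at-one concavity estimate whose leftover term $(T^2/2-1)\varphi'(1)$ is nonpositive. Your product-of-nonincreasing-factors justification of concavity is simply a rephrasing of the paper's nonpositive-Hessian argument.
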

\begin{proof}
Write $h(t)=H_\varnothing^g(tx)$, $q(t)=H_R^g(tx)$, and $P=g(O\cup R)$. For a Poisson union $Z$ at intensity $tx$, diminishing returns gives
\[
\sum_{i\in O\setminus Z}g(i\mid Z)
\ge g(O\mid Z)\ge g(O\mid Z\cup R)\ge P-g(Z\cup R).
\]
The absent-element factor is exactly the one in a Poisson derivative. Also $h(0)=g(\varnothing)=0$, so $\ip{\nabla\Phi_T^g(x)}x=h(T)$. Taking expectations and integrating the marginal bound gives
\[
\ip{\nabla\Phi_T^g(x)}{x-\one_O}
\le h(T)+\int_0^Tq(t)\,dt-TP
\le q(T)+\int_0^Tq(t)\,dt-TP.
\]
Every entry of the Hessian of a monotone submodular Poisson extension is nonpositive: diagonals are minus first derivatives, and mixed entries are weighted discrete second differences. Thus $q$ is nondecreasing and concave as a \emph{scalar} function of $t$. Its tangent at one gives
\[
q(T)+\int_0^Tq(t)\,dt
\le(1+T)q(1)+(T^2/2-1)q'(1)
\le(1+T)q(1).
\]
This proves the certificate. If $R$ overlaps $X$, its coordinates simply have zero derivative in $q$, and the same marginal comparison applies. No concavity in the vector $x$ is used.
\end{proof}

For the current function $g=f$, the vector $p_T(x)=\nabla\Phi_T^f(x)/(1+T)$ is a universal price at coefficient $T/(1+T)$. Each coordinate lies in $[0,M_i]$, since current sample marginals are bounded by their singletons. This is an exact mathematical certificate. For implementation, choose rational $T<\sqrt2$, estimate the gradient by bounded current marginal samples, and clip coordinate $i$ to $[0,M_i]$. Clipping cannot increase its error. Coordinate accuracy $\xi M/(2n)$ gives additive certificate error at most $\xi M$ simultaneously for every comparator, because $\|x-\one_O\|_1\le n$ on $[0,1]^n$. Appendix~\ref{app:stationarity} supplies deterministic work caps and any prescribed failure probability. Thus every coefficient below $\bconst$ has a polynomial-work price construction with prescribed additive accuracy; no exact evaluation of exponentials or expectations is assumed.

\subsection{Decoupling a Future from Its Comparator}

\citet[Section~5.1, equation~(9) and Lemma~5.2]{DFL+25} decouple a future from its comparator by drawing an entire comparator independently from its marginal law, retaining a $\nicefrac{2}{3}$ fraction in expectation and using minimax over individually compatible futures. Here we prescribe a different response: independent coordinate samples at the shared marginals after the Poisson transformation. This retains $1-\nicefrac{1}{e}$ directly, without composing whole-comparator decoupling with a second correlation-gap loss.

Classical correlation-gap bounds compare correlated and independent draws for one fixed submodular function \citep{ADSY10}. In the theorem below, conditioning on the future changes the comparator marginals, so that fixed-function statement alone does not give the shared-marginal conclusion. As a further technical connection, \citet[Section~3, Lemma~3.1 and Proposition~3.2]{BNW25} relate the Poisson response of a supplied function to efficient separation for its Wolsey extension at a supplied target. Those cuts can query that function; our price must instead work simultaneously for every future consistent with the current oracle.

For a concrete distinction, let $X=\{1,2\}$ and $f(S)=|S|$. Choose $i\in\{1,2\}$ uniformly, set $h_i(S)=1+\one\{3-i\in S\}$, and use comparator $O_i=\{3-i\}$. Each branch is legal: its future duplicates current element $i$, making the other element the useful complement. The correlated benchmark is always two. Drawing a whole comparator independently gives expected value $\nicefrac{3}{2}$, whereas the prescribed Poisson response at the shared marginals $q=(\nicefrac{1}{2},\nicefrac{1}{2})$ is $2-e^{-\nicefrac{1}{2}}$. The theorem below controls the latter response directly.

\begin{theorem}[Unknown-future Poisson decoupling]\label{thm:decoupling}
Let $(h,O)$ have any finitely supported joint distribution, with $h\in\calH_f$ and $O\subseteq X$. Put $q=\E\one_O$. Then
\begin{equation}\label{eq:decoupling}
\E_h H_h(q)\ge(1-e^{-1})\E_{(h,O)}h(O).
\end{equation}
On the left the product sample is independent of the future. On the right the future and its comparator may be arbitrarily correlated.
\end{theorem}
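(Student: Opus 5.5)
The plan is to run a Poisson-clock argument along the ray $t\mapsto tq$, $t\in[0,1]$, with an $h$-free exchange quantity in place of the usual marginal term. This is forced by the example just before the theorem: the naive comparison $\sum_i q_i\E_h h(i\mid Z)$ versus $\E_{(h,O)}\sum_{i\in O}h(i\mid Z)$ fails there, because conditioning on $h$ moves the comparator marginals.

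\textbf{Step 1 (two scalar curves).} Let $Z_t$ be the Poisson sample at intensity $tq$, drawn independently of $(h,O)$. Define
\[
\varphi(t)=\E_h H_h(tq),\qquad \psi(t)=H_f(tq).
\]
Two facts are immediate. First, the Poisson derivative is $\frac{d}{dt}H_g(tq)=\sum_iq_i\E\,g(i\mid Z_t)$, since an element already present has zero marginal. Second, \cref{eq:legal} gives $0\le\varphi'\le\psi'$ and $\varphi\ge\psi$. Both curves are concave in $t$, by the Hessian-sign fact used in the proof of \cref{lem:scale}.

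\textbf{Step 2 (decoupled exchange inequality).} For a fixed pair $(h,O)$ and fixed $Z$, submodularity of $h$ and \cref{eq:legal} give
\[
h(O)\le h(Z)+\sum_{i\in O}h(i\mid Z)\le h(Z)+\sum_{i\in O}f(i\mid Z).
\]
The last sum depends on $(h,O)$ only through $O$. Since $Z_t$ is independent of $(h,O)$, averaging gives
\[
P:=\E h(O)\le\varphi(t)+\psi'(t)\qquad(0\le t\le1).
\]
This is the decoupling step: the correlated benchmark is controlled by $\varphi$ together with the \emph{current} Poisson derivative at the shared marginals $q$.

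\textbf{Step 3 (ODE comparison) and the main obstacle.} If $\varphi$ were equal to $\psi$, integrating $(e^t\psi)'\ge e^tP$ would give $\varphi(1)\ge(1-e^{-1})P$ immediately. The difficulty is that $\varphi$ can exceed $\psi$ by a nonincreasing gap $c(t)=\varphi(t)-\psi(t)\in[0,\E_hh(\varnothing)]$. The constraints listed so far only yield the weaker bound
\[
\varphi(1)\ge\max\{c_0,(1-e^{-1})(P-c_0)\},\qquad c_0=\E_hh(\varnothing),
\]
which is below the target. So the main obstacle is a sharper coupling between $h$ and $f$.

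\textbf{What I would try first.} Sharpen Step 2 per future. For each $h$ in the support, let $q^h=\E[\one_O\mid h]$ and split the exchange sum as
\[
\sum_{i\in O}h(i\mid Z)\le\sum_{i}q^h_i h(i\mid Z)+\sum_i(q_i-q^h_i)^-\bigl(f(i\mid Z)-h(i\mid Z)\bigr).
\]
The goal is to show that, after averaging over $h$, the correction term is dominated by $\psi'-\varphi'$. The mean-zero identity $\E_h(q-q^h)=0$ and the bound $0\le f(i\mid Z)-h(i\mid Z)$ are the available tools. This would give
\[
P\le\varphi(t)+\varphi'(t)+\bigl(\psi'(t)-\varphi'(t)\bigr)\cdot(\text{bounded factor}).
\]
I would then integrate with the factor $e^t$, using $\varphi-\psi\ge0$ nonincreasing to absorb the correction at $t=1$.

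\textbf{Fallback.} Reduce to per-$h$ correlation gaps $H_h(q^h)\ge(1-e^{-1})\E[h(O)\mid h]$, which follow from the fixed-function bound of \citet{ADSY10}. It then remains to prove $\E_hH_h(q)\ge\E_hH_h(q^h)$ up to the same correction term. This is again an exchange inequality in which legality, via \cref{lem:overlap} and $h\le f$ in marginals, must pay for moving from the conditional marginals $q^h$ to the shared marginals $q$.
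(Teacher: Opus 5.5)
Steps 1 and 2 are correct, and you are right that they do not suffice. But the proposal stops there: neither the refined exchange inequality nor the fallback is established, so the theorem is not proved.

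\textbf{Why the product-ray approach stalls.} The obstruction is sharper than your Step 3 suggests. Take the paper's profiles from \cref{eq:vw} at $T=\sqrt2$ and set $P=1$, $\varphi=w$, $\psi=v$. These curves satisfy every scalar fact you list:
\begin{itemize}
\item both are nondecreasing and concave, with $\psi(0)=0$, $\varphi\ge\psi$ and $0\le\varphi'\le\psi'$;
\item $w+v'\equiv1$, so Step 2 holds with equality for every $t\ge0$.
\end{itemize}
Yet $\varphi(1)=w(1)=2-\sqrt2<1-e^{-1}$. So no argument that uses only these facts along the ray $t\mapsto tq$ can reach $1-e^{-1}$. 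Integrating Step 2 up to $t=\sqrt2$ and using concavity, as in the proof of \cref{lem:scale}, gives exactly the coefficient $2-\sqrt2$.

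\textbf{The refinement and the fallback.} Your refinement does not escape this if the bounded factor is a constant $C$. If $C\ge1$, the refined inequality is implied by Step 2 because $\psi'\ge\varphi'$, so the same profiles satisfy it. If $C<1$, it is false: in the two-element example preceding the theorem, at $t=0$ one has $P=2$, $\varphi(0)=1$, $\varphi'(0)=1/2$ and $\psi'(0)=1$, which forces $C\ge1$. You give no argument for a $t$-dependent factor. The fallback must pay for $\E_hH_h(q)<\E_hH_h(q^h)$, a deficit of $e^{-1/2}-e^{-1}$ in that same example. This is exactly why fixed-function correlation-gap bounds do not give the shared-marginal statement, and you supply no mechanism for the payment.

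\textbf{The missing idea.} Keep each comparator intact and Poissonize coordinates only at the very end.
\begin{itemize}
\item Draw $O_1,O_2,\dots$ i.i.d.\ from the marginal law of $O$, independently of $(h,O)$. Put $U_j=O_1\cup\dots\cup O_j$, $a_j=\E h(U_j)$ and $b_j=\E f(U_j)$.
\item Marginal domination gives $h(O)\le h(U_j)+f(U_j\cup O)-f(U_j)$ pointwise. Since $O$ is a fresh draw relative to $U_j$, although correlated with $h$, one has $\E f(U_j\cup O)=b_{j+1}$. Hence $P\le a_j+b_{j+1}-b_j$ for each integer $j$ separately.
\item The $f$-terms telescope, and the single comparison $b_\ell\le a_\ell$ gives $\sum_{j=0}^{\ell}a_j\ge\ell P$ for every $\ell$.
\item Summation by parts against the nonincreasing weights $e^{-1}/j!$, whose first two coincide, gives $\E a_N\ge(1-e^{-1})P$ for $N\sim\Poi(1)$.
\end{itemize}
The per-integer inequalities are exactly what a continuous-time argument discards. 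Averaging them over $\ell\sim\Poi(t)$ only yields $P\le\alpha(t)+\beta'(t)$ for $\alpha(t)=\E h(U_{N_t})$ and $\beta(t)=\E f(U_{N_t})$. That is an inequality of your Step 2 form, which the profiles above satisfy.

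\textbf{Passing to the product law.} Fix $h$ and let $\lambda_S$ be the probability that $O=S$. Each batch $S$ then appears in $U_N$ independently with probability $1-e^{-\lambda_S}$. Splitting every such all-or-nothing batch into independent coordinates with the same inclusion probability cannot decrease $\E h$, by submodularity. The resulting absence probabilities are $\prod_{S\ni i}e^{-\lambda_S}=e^{-q_i}$, so $H_h(q)\ge\E h(U_N)$. Averaging over $h$ completes the proof.
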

\begin{proof}
Let $D$ be the marginal law of the comparator in the given joint law. First draw $(h,O)$ from that joint law. Independently of this pair, draw an iid sequence $O_1,O_2,\ldots\sim D$; in particular, $O$ may remain correlated with $h$, whereas every $O_j$ is independent of both. Put $U_j=O_1\cup\cdots\cup O_j$, with $U_0=\varnothing$, and define
\[
a_j=\E_{h,U_j}h(U_j),\qquad b_j=\E_{U_j}f(U_j),\qquad P=\E_{(h,O)}h(O).
\]
The marginal domination in \cref{eq:legal} implies, pointwise,
\[
h(O)\le h(U_j\cup O)
\le h(U_j)+f(U_j\cup O)-f(U_j).
\]
Although $O$ is correlated with $h$, it is an independent $D$-draw relative to $U_j$. Averaging the pointwise inequality over $(h,O)$ and $U_j$ therefore gives
$P\le a_j+b_{j+1}-b_j$. Also $b_j\le a_j$. Summing the former inequalities for $j=0,\ldots,\ell-1$ and using $b_0=0$ yields the crucial prefix-sum estimate
\begin{equation}\label{eq:prefixsum}
\sum_{j=0}^{\ell}a_j\ge\ell P\qquad(\ell\ge1).
\end{equation}

Let $N\sim\Poi(1)$ and $w_j=e^{-1}/j!$. These weights are nonincreasing, with $w_0=w_1$. Summation by parts and \cref{eq:prefixsum} give
\[
\E a_N
=\sum_{\ell\ge0}(w_\ell-w_{\ell+1})\sum_{j=0}^{\ell}a_j
\ge P\sum_{\ell\ge0}(w_\ell-w_{\ell+1})\ell
=(1-e^{-1})P.
\]
All terms are bounded by the maximum value of one of finitely many functions on a finite ground set, so the boundary terms vanish.

It remains to replace the compound-Poisson set $U_N$ by independent element samples. Write $D(S)=\lambda_S$. Poisson splitting generates $U_N$ by independent counts $N_S\sim\Poi(\lambda_S)$, adding the whole batch $S$ whenever $N_S>0$. Enumerate the finitely many sets with $\lambda_S>0$. We replace their batch indicators one at a time, preserving independence across batch types. At one induction step, condition on all randomness belonging to the other types and let $V$ be their resulting union; those other types may already have been replaced. With $\pi_S=1-e^{-\lambda_S}$, the conditional contribution of the current all-or-nothing batch is
\[
(1-\pi_S)h(V)+\pi_S h(V\cup S).
\]
Replace it by mutually independent Bernoulli $\pi_S$ inclusions, one for each element of $S$, using fresh randomness. Order $S$ as $i_1,\ldots,i_r$. The replacement has conditional expected gain
\[
\sum_{a=1}^r \pi_S\,
\E\!\left[h\bigl(i_a\mid V\cup W_{a-1}\bigr)\right],
\]
where $W_{a-1}\subseteq\{i_1,\ldots,i_{a-1}\}$. Diminishing returns lower-bounds this by
$\pi_S\sum_a h(i_a\mid V\cup\{i_1,\ldots,i_{a-1}\})
=\pi_S(h(V\cup S)-h(V))$. Thus this induction step cannot decrease expected value.

After all induction steps, the Bernoulli variables are independent over pairs $(S,i)$ with $i\in S$. Hence the resulting coordinate-inclusion events are independent across $i$, and coordinate $i$ is absent with probability
$\prod_{S\ni i}(1-\pi_S)=\prod_{S\ni i}e^{-\lambda_S}=e^{-q_i}$.
The final union therefore has exactly the product law defining $H_h(q)$. We have proved $H_h(q)\ge\E h(U_N)$ for every fixed $h$; averaging over the independent draw of $h$ and combining with the prefix-sum bound proves \cref{eq:decoupling}.
\end{proof}

The coefficient in \cref{thm:decoupling} is exact for this response. A single modular element, an empty future, and $O$ equal to that element give $H_h(1)=1-\nicefrac{1}{e}$ and $h(O)=1$. For Bernoulli response the same coefficient follows from $F_h\ge H_h$. It is asymptotically tight for the empty future $h=f$, where $f(S)=\one\{S\ne\varnothing\}$, and a uniformly random singleton comparator: the comparator value is one, whereas independent inclusion at the shared marginals $q_i=1/n$ has value $1-(1-1/n)^n$.

\subsection{Minimax Produces One Price for Every Future}

First normalize the future family. For $h\in\calH_f$, let $d=h(X)-f(X)$. The excess $h-f$ is nonincreasing under inclusion, so $h_0=h-d$ is still legal and satisfies $h_0(X)=f(X)$. For $\alpha\le1$,
\[
H_h(x)-\alpha h(O)=H_{h_0}(x)-\alpha h_0(O)+(1-\alpha)d.
\]
Thus it suffices to consider the tight futures $\calH_f^0=\{h\in\calH_f:h(X)=f(X)\}$. They form a nonempty compact polytope in $\R^{2^n}$, since $f(S)\le h(S)\le f(X)$ and all defining constraints are linear.

\begin{theorem}[Bounded universal prices]\label{thm:price-existence}
For every $f$ and every $x\in[0,1]^n$, there exists $p_f(x)\in\prod_i[0,M_i]$ such that
\[
H_h(x)-\aconst h(O)\ge\ip{p_f(x)}{x-\one_O}
\quad\text{for every }h\in\calH_f,\ O\subseteq X,
\qquad \aconst=1-e^{-1}.
\]
The selector can be determined by the current restriction alone, independently of the budget and of the actual future.
\end{theorem}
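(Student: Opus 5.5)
We prove the theorem by a minimax argument that combines \cref{thm:decoupling} with the normalization to tight futures. Fix $f$ and $x\in[0,1]^n$. We consider the two-player zero-sum game in which the price player chooses $p\in G_f=\prod_i[0,M_i]$ and the adversary chooses a pair $(h,O)\in\calH_f^0\times 2^X$. The payoff is
\[
L(p;h,O)=H_h(x)-\aconst h(O)-\ip{p}{x-\one_O}.
\]
The price player wants $\inf_{(h,O)}L\ge0$. The set $G_f$ is a compact convex box. The adversary's mixed strategies are finitely supported distributions on $\calH_f^0\times2^X$, and $\calH_f^0$ is a compact polytope. For fixed $O$, the payoff is affine in $h$ on $\R^{2^n}$, because $H_h(x)$ is a fixed linear functional of the table of $h$. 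It is also affine in $p$. Sion's minimax theorem therefore applies, with $p$ ranging over the compact convex set $G_f$ and the adversary over the convex set of mixtures $\Delta(\calH_f^0\times2^X)$, using the bilinear extension of $L$. It gives
\[
\max_{p\in G_f}\inf_{\mu}\E_\mu L=\inf_\mu\max_{p\in G_f}\E_\mu L.
\]
Because $L$ is continuous in $p$, the maximum on the left is attained. It thus suffices to show that for every finitely supported joint law $\mu$ of $(h,O)$ some $p\in G_f$ gives $\E_\mu L\ge0$.

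Fix $\mu$ and put $q=\E_\mu\one_O$. We take the best response coordinatewise: $p_i=M_i$ when $x_i\le q_i$ and $p_i=0$ otherwise. Then
\[
-\E\ip{p}{x-\one_O}=\sum_iM_i(q_i-x_i)_+,
\]
and we must show
\[
\E_h H_h(x)+\sum_iM_i(q_i-x_i)_+\ge\aconst\,\E h(O).
\]
By \cref{thm:decoupling}, $\E_hH_h(q)\ge\aconst\E h(O)$. It therefore remains to show that $H_h(q)-H_h(x)\le\sum_iM_i(q_i-x_i)_+$ for each legal $h$. Since $H_h$ is monotone, $H_h(q)\le H_h(x\vee q)$. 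Along each coordinate, the Poisson derivative of $H_h$ equals $e^{-y_i}$ times an expected marginal $h(i\mid Z)$. By \cref{eq:legal} this marginal is at most $f(i\mid Z)\le M_i$, so $\partial_iH_h\le M_i$. Integrating coordinatewise from $x$ to $x\vee q$ yields the bound. Hence the inner value is nonnegative, and minimax produces a single $p_f(x)\in G_f$ that is valid for all tight $(h,O)$ simultaneously.

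Next we remove the tightness restriction. For a general $h\in\calH_f$, the identity preceding the theorem shows that passing to $h_0$ changes the left side by $(1-\aconst)d\ge0$. The certificate therefore extends to all of $\calH_f$. The construction involves neither $\kappa$ nor an actual future, since $G_f$, $\calH_f^0$, and the payoff depend only on $f$ and $x$. Applying any fixed selection rule to the nonempty set of maximizers, for example the lexicographically least point, which exists because the set is compact, yields a selector determined by the current restriction alone.

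The main obstacle is the minimax step. The adversary's pure strategy space $\calH_f^0\times2^X$ is an infinite compact set, and the payoff must have the convexity and continuity that Sion's theorem needs. We plan to verify it by taking mixtures over $h$ within the convex set $\calH_f^0$ for each fixed $O$. We will note that every finite mixture of pairs is exactly a finitely supported joint law, which is the setting in which \cref{thm:decoupling} holds, and that compactness is needed only on the price side.
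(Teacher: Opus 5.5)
Your proof is correct and takes essentially the same route as the paper. Both use a minimax between the box $\prod_i[0,M_i]$ and mixtures over tight pairs $(h,O)$, the coordinatewise best response giving $\sum_iM_i(q_i-x_i)_+$, the one-sided Lipschitz bound from $0\le\partial_iH_h\le M_i$, \cref{thm:decoupling} at $q=\E_\mu\one_O$, and the tight-future normalization. The differences are technical: the paper avoids the infinite adversary strategy space you flag by noting that, for fixed $p$ and $O$, the payoff is affine in $h$ and so is extremized at a vertex of the polytope $\calH_f^0$, which gives a finite game. It also selects the minimum-norm valid price rather than a lexicographic one. Your minimax with compactness only on the price side (as in Kneser's or Ky Fan's theorem) is equally valid.
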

\begin{proof}
For $G=\prod_i[0,M_i]$, consider the maximum certificate violation
\[
V=\min_{p\in G}\max_{h\in\calH_f^0,\,O\subseteq X}
\{p\cdot(x-\one_O)-H_h(x)+\alpha h(O)\}.
\]
For fixed $p$ and $O$, the displayed expression is affine in the value table of $h$, so its maximum over the compact polytope $\calH_f^0$ occurs at a vertex. Let
\[
\mathcal V=\operatorname{vert}(\calH_f^0),
\qquad \mathcal C=\mathcal V\times 2^X.
\]
Both sets are finite because the current ground set is finite. Apply finite-dimensional minimax to the compact convex box $G$ and the probability simplex $\Delta(\mathcal C)$: the payoff is bilinear in $p$ and the mixed constraint $\mu\in\Delta(\mathcal C)$. Thus
\begin{equation}\label{eq:price-dual}
V=\max_\mu\left\{
\alpha\E_\mu h(O)-\E_\mu H_h(x)
-\sum_i M_i(q_i-x_i)_+
\right\},\qquad q=\E_\mu\one_O,\quad \mu\in\Delta(\mathcal C).
\end{equation}
The final term is the exact minimum of $p\cdot(x-q)$ over the box $G$. Every legal future has $0\le\partial_iH_h\le M_i$, hence
\[
H_h(q)\le H_h(x)+\sum_iM_i(q_i-x_i)_+.
\]
At $\alpha=\aconst$, \cref{thm:decoupling} makes every maximand in \cref{eq:price-dual} nonpositive. Thus $V\le0$. Compactness gives a feasible price. Selecting the unique minimum-norm point of the feasible price set defines it from $f,x$ alone. The normalization argument extends it to all legal futures.
\end{proof}

This is an existence proof, not a polynomial-size optimization formulation. A direct computation can read the entire current value table, optimize over the future polytope for every comparator, and solve the resulting price feasibility problem. Both the table and the constraint system are exponential in $n$.

The duality also has an exact abstract form. Let $L_h(x)$ be a prescribed response on a compact convex future family in a finite-dimensional value-table space and a finite comparator collection, jointly continuous in $h,x$, affine in $h$, and satisfying, for all response points $x,q$ under consideration,
\[
L_h(q)-L_h(x)\le\sum_iM_i(q_i-x_i)_+.
\]
On the convex hull of feasible comparator indicators, bounded $\alpha$-prices exist at every point if and only if
\begin{equation}\label{eq:general-duality}
\E_\mu L_h(\E_\mu\one_O)\ge\alpha\E_\mu h(O)
\quad\text{for every finitely supported joint law }\mu.
\end{equation}
Necessity averages the certificate at $x=\E\one_O$. Sufficiency is exactly \cref{eq:price-dual} and the response's one-sided Lipschitz bound, with compact minimax in place of a finite future polytope. This characterizes a \emph{specified response and price framework}. It is not a characterization of all online algorithms. For example, modular functions are maintained exactly with recourse two, whereas their Poisson response already has the one-element $1-\nicefrac{1}{e}$ ceiling.

\subsection{An Exponential-Query Obstruction for Every Bounded Price Vector}

\begin{theorem}[Price query lower bound]\label{thm:price-hardness}
Fix $\bconst<\alpha\le\aconst$ and any $\eta>0$. In the worst case, finding $p\in\prod_i[0,f(\{i\})]$ satisfying
\begin{equation}\label{eq:approx-price}
H_h(x)-\alpha h(O)\ge\ip{p}{x-\one_O}-\eta M,
\qquad M=\max_i f(\{i\}),
\end{equation}
simultaneously for all legal futures and all $|O|\le k$, with probability at least $\nicefrac{2}{3}$, requires $\exp(\Omega_\alpha(n))$ worst-case current queries. More explicitly, no algorithm with a deterministic cap $Q=\exp(o_\alpha(n))$ on every random tape and every oracle transcript can have this success guarantee on every instance.

The same conclusion holds under the weaker, constraint-by-constraint requirement
\begin{equation}\label{eq:expected-approx-price}
\E_{\rho}\!\left[
H_h(x)-\alpha h(O)-\ip{p_\rho}{x-\one_O}
\right]\ge-\eta M
\quad\text{for every fixed }h\in\calH_f, |O|\le k,
\end{equation}
provided $p_\rho\in\prod_i[0,f(\{i\})]$ on every random tape $\rho$ and the same deterministic query cap holds. The expectation in \cref{eq:expected-approx-price} is only over the algorithm's internal randomness; this clause concerns expected certificate validity, not an expected query budget.
\end{theorem}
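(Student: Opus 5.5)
The plan is to reduce price construction to the hidden-group family of \cref{sec:lower}, and to use \cref{lem:price-to-core} (or \cref{lem:price-gap}) to turn any good price into a good core. Such a core would beat the query--recourse lower bound. Concretely, I take $\zeta>0$ with $\alpha>\bconst+3\zeta$. I then choose rational $T$ near $\sqrt2$ and $m$ large so that $R(T)+76/m<\bconst+\zeta$, and build the oracle $f_A$ of \cref{eq:hidden-oracle} with $n=(m+1)k$. The future branch $h_A(S)=f_A(S\cup\{r\})$ is legal for the current restriction, and the comparator $O=A\setminus\{a_0\}$ (or $A$ itself) has $h_A(O)\ge1-5/k$. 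Every singleton is at most $5/k$, so $M\le5/k$. Hence the additive slack $\eta M$ and the price term $\ip{p}{x}\le\kappa M$ are $O(1)$ in absolute terms. I will need to track these carefully: $\ip{p}{\one_O}\le kM\le5$ is not small.

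The key step is to evaluate the certificate at the balanced point $x=\frac{1}{m+1}\one$, which is independent of $A$. At this $x$, the Poisson response $H_{h_A}(x)$ concentrates on sets with imbalance $|u|\le\delta$ and size about $k(1-e^{-1/(m+1)})\cdot(m+1)$. By the flat band and the Lipschitz profile bound, $H_{h_A}(x)\le R(T)+O(1/m)+o(1)$. The certificate at $O=A\setminus\{a_0\}$ would then give $R(T)+O(1/m)-\alpha(1-5/k)\ge\ip{p}{x}-\ip{p}{\one_O}-\eta M$. This forces $\sum_{i\in A}p_i$ to exceed $\ip{p}{x}+\Omega(\zeta)$, since $\ip{p}{x}=\frac{1}{m+1}\sum_ip_i$ is the average $k$-block mass. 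So a valid price must put noticeably more than average mass on the hidden group $A$.

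Next I rule this out with exact hiding. By \cref{lem:transcript}, with probability $1-2(Q+1)e^{-\delta^2k/2}$ the algorithm's transcript and output $p$ agree with a reference run against $G$, and that run is independent of $A$. For a fixed vector $p\in\prod_i[0,M_i]$ and a uniformly random $k$-set $A$, the sum $\sum_{i\in A}p_i$ has mean $\ip{p}{x}$ and, by Hoeffding without replacement with coordinates bounded by $5/k$, deviates by $\Omega(\zeta)$ with probability $e^{-\Omega(\zeta^2k)}$. Thus for $Q=\exp(o(n))$ the success probability is $o(1)<2/3$ for a suitable $A$. This proves the simultaneous-validity version.

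For the expected variant \cref{eq:expected-approx-price}, I apply the same fixed $(h_A,O)$ constraint in expectation over $\rho$. The bound $\E_\rho\sum_{i\in A}p_{\rho,i}-\E_\rho\ip{p_\rho}{x}\ge\Omega(\zeta)$ is needed for every $A$. Averaging over uniform $A$ and using the transcript coupling, where the bad event costs at most $5$ times its probability, gives a contradiction, because under the reference run the expectation over $A$ of the difference is zero. I expect the main obstacle to be the upper bound on $H_{h_A}(x)$ at the balanced point. That bound requires the Poisson sample's size to be about one unit of $k$ and that the profile at Poisson-randomized balanced points is controlled by $w(1)\approx R(T)$, rather than only at deterministic sets as in \cref{prop:finite-hard-instance}. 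I would first try to tune $x=\lambda\one$ so that $(m+1)(1-e^{-\lambda})=1$ and invoke concentration together with the $5$-Lipschitz bound.
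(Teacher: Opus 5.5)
Your executed argument is essentially the paper's proof. It uses the hidden-group instance at the balanced point $x=\frac{1}{m+1}\one$, and bounds $H_{h_A}(x)\le w(1)+O(1/m)+O(1/\sqrt{k})$ by binomial concentration and the $5$-Lipschitz future profile. The comparator $A$ then forces $p(A)-\ip{p}{x}\ge\Omega(\alpha-\bconst)$, which the reference-oracle transcript coupling and Hoeffding without replacement rule out; for \cref{eq:expected-approx-price}, the mean-zero advantage under the reference run does the same job.

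A few side remarks. The opening reduction through \cref{lem:price-to-core} is never used, and it could not handle a bare $\nicefrac{2}{3}$-success routine in any case. The tuning of $\lambda$ you suggest is unnecessary: $s=(m+1)(1-e^{-1/(m+1)})<1$ and $w$ is nondecreasing, so the bound already holds. A larger $\lambda$ would also move $x$ outside $P_k$. In the expected variant, clip outputs to $[0,5/k]^n$ so that every advantage lies in $[-5,5]$. The coupling then costs $10$, not $5$, times the bad-event probability.
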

\begin{proof}
Use the exact rational hard family of \cref{prop:finite-hard-instance}, with $n=(m+1)k$ and $k\ge64m$. Set $x_i=1/(m+1)$, so $\one^\top x=k$. All current singleton values are identical and at most $5/k$. Let $h_A(S)=f_A(S\cup\{r\})$ be the branch after the critical future. With $\pi=1-e^{-1/(m+1)}$, the normalized random counts have means $\pi,m\pi$, variances at most $1/k$ each, and total mean $s=(m+1)\pi<1$. At their mean the imbalance is zero. The rational future profile therefore has value at most $w(1)+33/m$ there: its linear piece is exact, its regularizer is at most $32/m$, and its final correction is at most $1/(2m)$. Its two coordinate derivatives are at most five, so the Cauchy--Schwarz bound on each mean absolute deviation gives
\begin{equation}\label{eq:price-finite-response}
H_{h_A}(x)\le w(1)+\frac{33}{m}+\frac{10}{\sqrt{k}}.
\end{equation}
Also $h_A(A)\ge1$ holds \emph{exactly} for the rational profile. At $(x,y)=(1,0)$ its quadratic terms equal $c+v'(0)=1$, and both rational regularization terms are nonnegative. The current comparator $O=A$ has the permitted size $k$, because this static certificate grants the future for free. No limiting comparator value is used here.

Write $d=\alpha-\bconst>0$. Choose rational $T$ with $w(1)-\bconst\le d/8$, then $m\ge\max\{32,264/d\}$, and then
$k\ge\max\{64m,(80/d)^2,40\eta/d\}$. Define the rational profile degree by \cref{lem:rational-profiles} before choosing $k$. The three errors in \cref{eq:price-finite-response} and the allowed error $\eta M\le5\eta/k$ consume at most $d/2$. Consequently any valid price must satisfy
\begin{equation}\label{eq:hidden-advantage}
p(A)-\frac1{m+1}\sum_i p_i\ge\Delta
\end{equation}
with $\Delta=d/2$.

Clip every output to $[0,5/k]^n$, which leaves every successful output on this hard family unchanged. Fix the price algorithm's random tape and run it on the reference oracle, enforcing the deterministic $Q$-query cap on this and every real transcript. Its output $p$ is independent of the uniform hidden $k$-set $A$. Conditional on this vector, $\E_A p(A)=\sum_i p_i/(m+1)$. Since each $p_i\in[0,5/k]$, sampling without replacement bounds the probability of \cref{eq:hidden-advantage} by $e^{-2\Delta^2k/25}$. The chance that any of its at most $Q$ queries differs from the reference oracle is at most $2Qe^{-k/(8192m^2)}$. Thus the average probability of returning a valid price is at most
\[
2Qe^{-k/(8192m^2)}+e^{-2\Delta^2k/25}.
\]
Success probability $\nicefrac{2}{3}$ on every instance therefore requires exponential $Q$. This covers arbitrary bounded vector outputs, with no potential or symmetry assumption.

For \cref{eq:expected-approx-price}, define
$\operatorname{adv}(p,A)=p(A)-(m+1)^{-1}\sum_i p_i$.
Under the reference oracle, $p_\rho$ is independent of uniform $A$, and hence
$\E_{A,\rho}\operatorname{adv}(p_\rho,A)=0$.
Couple the reference and real executions using the same tape. They differ only if a query transcript first differs, and every clipped advantage lies in $[-5,5]$; their expected advantages therefore differ by at most $20Qe^{-k/(8192m^2)}$. On the other hand, applying \cref{eq:expected-approx-price} to the fixed constraint $(h_A,O=A)$ gives
$\E_\rho\operatorname{adv}(p_\rho,A)\ge\Delta$ for every hidden set $A$. Averaging this inequality over $A$ again forces exponential $Q$. The coefficient in the exponential depends on $\alpha$, while the minimum admissible $k$ can also depend on the fixed additive-error parameter $\eta$.
\end{proof}

Together with \cref{sec:scale}, this separates the $\aconst$ existence threshold from the $\bconst$ polynomial-query construction threshold, where construction allows any prescribed additive certificate error and failure probability. This is an unconditional query separation, not an assertion of NP-hardness for an explicitly supplied representation.

\section{The Exact Curvature Law}\label{sec:curvature}

The scale certificate from Section~\ref{sec:scale} yields the second refinement of the sharp threshold. Its upper bound keeps a known modular part with coefficient one and applies the scale potential only to the residual submodular part. This distinction is essential: Poissonizing a modular objective would already lose value. The matching lower bound adds modular mass to the same hidden instance, with its finite-$k$ normalization retained.

\begin{lemma}[Hybrid checkpoint conversion]\label{lem:hybrid-checkpoint}
Let $f$ and $b$ be nonnegative monotone submodular functions on the full
insertion-only stream, with $b(U)\le f(U)$ for every $U$. Fix an integer
$B\ge4$ and suppose that $k\ge4B^2$. Put
\[
L=\lfloor k/B\rfloor,
\qquad \kappa=k-2L,
\qquad W=\lfloor L/B\rfloor.
\]
For either objective $u$, write
$\OPT_k(u,Z)=\max\{u(U):U\subseteq Z,\ |U|\le k\}$.
Fix $\delta\ge0$. Suppose that on every checkpoint snapshot $Y$, a current-only sampler returns
$A_Y\subseteq Y$, $|A_Y|\le\kappa$, such that, for every fixed
$R\cap Y=\varnothing$ with $|R|+\kappa\le k$,
\begin{equation}\label{eq:hybrid-static-interface}
 \E f(A_Y\cup R)
 \ge \max_{O\subseteq Y,\ |O|\le\kappa}b(O\cup R)
       -\delta P_f(Y,R),
 \qquad
 P_f(Y,R):=\max_{O\subseteq Y,\ |O|\le\kappa}f(O\cup R).
\end{equation}
Assume that the sampler's bits are independent of the migration-window bits,
and that every one of the $B$ window choices has probability at most $p$.
Then the checkpoint schedule and lazy-superset update of
\cref{lem:checkpoint} maintain hard symmetric recourse at most $8B^2+2$ and,
at every fixed time $t$,
\begin{equation}\label{eq:hybrid-checkpoint-conclusion}
 \E f(S_t)
 \ge \OPT_k(b,X_t)
       -\left(\delta+\frac2B+p\right)\OPT_k(f,X_t).
\end{equation}
\end{lemma}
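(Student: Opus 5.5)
I will reuse the schedule from \cref{lem:checkpoint} verbatim and only redo its value analysis with the hybrid interface \cref{eq:hybrid-static-interface} in place of the multiplicative core guarantee. The recourse part is purely combinatorial and does not depend on the sampler's guarantee. Each update inserts at most $\lceil\kappa/W\rceil\le2B^2$ core positions plus the newly arrived element into the target, so \cref{lem:lazy} gives hard symmetric recourse at most $8B^2+2$. Capacity also carries over: the target holds $\kappa$ core positions and the elements of the current and preceding blocks, at most $\kappa+2L=k$. The hypothesis $k\ge4B^2$ guarantees $W\ge1$, so the non-recomputing branch of \cref{lem:checkpoint} is the one in force.

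For the value bound, fix a time $t$ in block $q$. Let $Y$ be the snapshot used by the core that is complete at $t$: the new core at $qL$ if the window has finished, and the old core at $(q-1)L$ if it has not started. Set $R=X_t\setminus Y$. This set is disjoint from $Y$, consists of elements arriving after the snapshot, has $|R|\le2L=k-\kappa$, and is contained in the target. I will condition on the window choice. By the independence assumption, the sampler's law for $A_Y$ is unchanged, and for each fixed realization of arrivals $R$ is fixed, since the stream is oblivious. Hence \cref{eq:hybrid-static-interface} applies and yields $\E f(S_t)\ge\E f(A_Y\cup R)\ge\max_{|O|\le\kappa,O\subseteq Y}b(O\cup R)-\delta P_f(Y,R)$ by monotonicity. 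I bound $P_f(Y,R)\le\OPT_k(f,X_t)$ because $|O\cup R|\le k$. For the first term, take an optimal $U^*$ for $\OPT_k(b,X_t)$ and keep $U^*\cap R$ together with the $\kappa$-subset of $U^*\cap Y$ that maximizes $b$. Since $b$ is submodular, removing a uniformly random set of at most $k-\kappa=2L$ elements from $U^*\cap Y$ loses at most a $2L/k\le2/B$ fraction of $b(U^*)$. Then $b(U^*)\le f(U^*)\le\OPT_k(f,X_t)$ turns this loss into $(2/B)\OPT_k(f,X_t)$.

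The remaining step, which I expect to be the main obstacle, is the migration window. There the displayed core is a mixture of the two cores, and I only use $f(S_t)\ge0$. This costs at most $\Pr[\text{$t$ in window}]\cdot\OPT_k(b,X_t)\le p\,\OPT_k(f,X_t)$, since $b\le f$. Writing $\E f(S_t)\ge\Pr[\text{good}]\,(\OPT_k(b)-\cdots)$ and bounding $1-\Pr[\text{good}]\le p$ gives \cref{eq:hybrid-checkpoint-conclusion}. The delicate point is the first and last incomplete blocks. There I must check, as in Appendix~\ref{app:checkpoint-details}, that a fixed time intersects at most one of the $B$ equally spaced window positions, so the probability is at most $p$ rather than $1/B$. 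I also need $R$ to remain inside the target even when the old core's snapshot precedes the block and the recent elements span two blocks, since the target keeps all elements from the current and preceding blocks. A further check is the case where $A_Y$ shrinks below $\kappa$ because $Y$ is small, which is harmless because only the inequality is used.
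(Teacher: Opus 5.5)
Your proposal is correct and follows essentially the same route as the paper. You reuse the schedule's objective-free capacity and recourse invariants, apply the static interface in each complete-core state with the disjoint suffix $R=X_t\setminus Y$ rather than the overlapping recent set, and bound $P_f$ by $\OPT_k(f,X_t)$. You pay the capacity loss through a uniformly random $\kappa$-subset of $U^*\cap Y$ under the contraction by $U^*\cap R$. You charge the migration window, which has probability at most $p$ and is independent of the cores, using only nonnegativity.

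The one cosmetic difference is bookkeeping: you subtract the $2/B$ capacity loss additively and convert it with $b\le f$, while the paper keeps the factor $(1-p)(1-2/B)$ on $\OPT_k(b,X_t)$ before expanding. Both give the same bound.
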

\begin{proof}
Use exactly the ordered-position migration in the proof of
\cref{lem:checkpoint}. Its deterministic capacity and recourse invariants do
not use the objective or its benchmark, and hence remain valid here.

Fix $t=qL+r$, where $q\ge1$ and $1\le r\le L$. There are two complete-core
states. Before migration the relevant snapshot and analytical suffix are
\[
 Y^-=X_{(q-1)L},\qquad R^-=X_t\setminus Y^-,
\]
whereas after migration they are
\[
 Y^+=X_{qL},\qquad R^+=X_t\setminus Y^+.
\]
Thus $R^\pm\cap Y^\pm=\varnothing$, $|R^-|\le2L$, and $|R^+|\le L$, so
$\kappa+|R^\pm|\le k$. The target before migration contains
$A_{Y^-}\cup R^-$. After migration the schedule's recent set still contains
all elements of $X_t\setminus X_{(q-1)L}$ and therefore may overlap $Y^+$;
this causes no difficulty, because the displayed target contains the smaller
set $A_{Y^+}\cup R^+$, whose analytical suffix is disjoint from $Y^+$.
Finally, the lazy output contains the target. Monotonicity of $f$ therefore
allows \cref{eq:hybrid-static-interface} to be applied in either complete-core
state, without ever applying the static certificate to an overlapping
suffix. Moreover,
\begin{equation}\label{eq:hybrid-pr-le-opt}
 P_f(Y^\pm,R^\pm)\le\OPT_k(f,X_t),
\end{equation}
because every set in the definition of $P_f$ has size at most
$\kappa+|R^\pm|\le k$ and is contained in $X_t$.

It remains to compare the size-$\kappa$ hybrid benchmark with the desired
size-$k$ one. Let $U\subseteq X_t$, $|U|\le k$, and fix either pair $(Y,R)$
above. If $|U\cap Y|\le\kappa$, monotonicity gives
$b((U\cap Y)\cup R)\ge b(U)$. Otherwise, let $O$ be a uniformly random
$\kappa$-subset of $U\cap Y$ and put $Q=U\cap R$. The usual random-subset
bound for the monotone submodular contraction $C\mapsto b(C\cup Q)$ gives
\[
 \E b(O\cup Q)
 \ge b(Q)+\frac{\kappa}{|U\cap Y|}\bigl(b(U)-b(Q)\bigr)
 \ge\frac{\kappa}{k}b(U).
\]
Adding the rest of $R$ can only increase the value. Maximizing over $U$ and
using $\kappa/k=1-2L/k\ge1-2/B$ yields
\begin{equation}\label{eq:hybrid-capacity-loss}
 \max_{O\subseteq Y,\ |O|\le\kappa}b(O\cup R)
 \ge\left(1-\frac2B\right)\OPT_k(b,X_t).
\end{equation}

For a fixed $t$, at most one window choice places $t$ strictly inside a
migration, so this event has probability at most $p$. It depends only on the
window bits and is independent of both adjacent core samples. On its
complement, \cref{eq:hybrid-static-interface,eq:hybrid-pr-le-opt,eq:hybrid-capacity-loss}
apply to the appropriate complete core. On the exceptional event use only
nonnegativity. The positive benchmark term is retained with probability at
least $1-p$, and the expected additive loss is at most
$\delta\OPT_k(f,X_t)$. Since $\OPT_k(b,X_t)\le\OPT_k(f,X_t)$,
\begin{align*}
 \E f(S_t)
 &\ge(1-p)\left(1-\frac2B\right)\OPT_k(b,X_t)
       -\delta\OPT_k(f,X_t)\\
 &\ge\OPT_k(b,X_t)
       -\left(\delta+\frac2B+p\right)\OPT_k(f,X_t).
\end{align*}
During the first block the algorithm displays the whole prefix, so the same
conclusion is immediate. This proves the lemma.
\end{proof}

\begin{theorem}[Preserving a modular component]\label{thm:hybrid}
Suppose the full objective decomposes as $f=g+\ell$, where $g$ is normalized monotone submodular and $\ell$ is nonnegative modular. The weight $\ell_i$ is available when $i$ arrives, and $g$ has current value-oracle access. In the exact value-oracle model there is a randomized algorithm using polynomially many current queries and $O(\varepsilon^{-2})$ hard symmetric recourse such that
\begin{equation}\label{eq:hybrid-goal}
\E f(S_t)\ge
\max_{O\subseteq X_t,\ |O|\le k}\{\ell(O)+\bconst g(O)\}
-\varepsilon\OPT_k(f,X_t).
\end{equation}
Under polynomial-bit rational oracle answers, the same guarantee has a bounded-bit randomized Turing polynomial-time implementation.
\end{theorem}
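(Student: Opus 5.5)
The plan is to apply \cref{lem:hybrid-checkpoint} with $b=\ell+\bconst g$, so the whole task reduces to building a current-only static sampler satisfying \cref{eq:hybrid-static-interface} with $\delta=O(\varepsilon)$. Note that $b\le f$ because $\bconst\le1$, and $b$ is monotone submodular. Take $B=\lceil c/\varepsilon\rceil$ for a suitable constant $c$, and use a uniform migration window so that $p\le1/B$. For $k<4B^2$, recomputing from scratch is already within the recourse budget, exactly as in \cref{lem:checkpoint}. The conclusion \cref{eq:hybrid-checkpoint-conclusion} then becomes \cref{eq:hybrid-goal}.

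For the sampler on a snapshot $Y$, I would work fractionally with the hybrid potential
\[
\Psi(x)=\ip{\ell}{x}+\frac{1}{1+T}\Phi_T^g(x),
\]
with rational $T<\sqrt2$ chosen close to $\sqrt2$. The intended response is $x\mapsto\ip{\ell}{x}+H_R^g(x)$, except that the modular part is realized through Bernoulli inclusion rather than Poisson inclusion, so it keeps coefficient one. The price vector combines the two parts:
\[
p=\ell+\frac{\nabla\Phi^g_T(x)}{1+T}.
\]
By \cref{lem:scale} applied to $g$ with a fixed suffix $R$, every $O$ satisfies
\[
H^g_R(x)-\tfrac{T}{1+T}g(O\cup R)\ge\tfrac{1}{1+T}\ip{\nabla\Phi^g_T(x)}{x-\one_O}.
\]
The modular part gives exactly $\ip{\ell}{x}-\ell(O)=\ip{\ell}{x-\one_O}$. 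Adding the two yields a certificate of the form
\[
\text{response}(x)-b_T(O\cup R)\ge\ip{p}{x-\one_O}-\ell(R)\cdot0,
\]
where $b_T$ is $\ell+\tfrac{T}{1+T}g$. The $R$ part of $\ell$ appears on both sides and cancels, since $R$ is included in the response set. Projected online linear optimization over $P_\kappa$ then controls the average price term, as in \cref{lem:price-to-core}, up to $\eta\,P_f(Y,R)$. The coordinates of $p$ lie in $[0,\ell_i+g(\{i\})]\subseteq[0,f(\{i\})]$, so the regret is measured in units of $M\le P_f$.

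The rounding step is where I expect the main obstacle. The Poisson response for $g$ must be combined with the exact response for $\ell$. The plan is to take the Bernoulli vector $\one-e^{-x}$ for the $g$-part, and to note that $\ell$ evaluated at $\one-e^{-x}$ loses value, which is not acceptable. To avoid this, I would instead round $x$ itself with pair rounding, whose marginals preserve $\ell$ exactly. For $g$, I would use $F^g_R(x)\ge H^g_R(x)$ together with convexity of the multilinear extension along exchange directions. This gives $\E g(A\cup R)\ge H^g_R(x)$ and $\E\ell(A)=\ip{\ell}{x}$, so $\E f(A\cup R)\ge\ip{\ell}{x}+H^g_R(x)$ plus $\ell(R)$. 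Averaging over a uniformly random iterate, exactly as in \cref{lem:price-to-core}, keeps the sampler current-only, with independent bits.

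The remaining steps account for the errors. The coefficient loss $\bconst-T/(1+T)$ is at most $\varepsilon/4$ times $g(O\cup R)\le P_f$. The gradient estimation error is handled by clipping to $[0,f(\{i\})]$ and charging at most $\xi M$, as in Section~\ref{sec:scale}. Dyadic rounding costs a further $\tau$. In total, $\delta\le\varepsilon/2$, and choosing $c$ so that $3/B\le\varepsilon/2$ finishes. The bounded-bit implementation follows from the appendix routines already cited.
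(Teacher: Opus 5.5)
Your proposal is correct, and its outer structure matches the paper's. It uses the same benchmark $b=\ell+\bconst g$ and the same potential $\ip{\ell}{x}+\Phi_T^g(x)/(1+T)$, whose gradient is your price. The certificate is likewise obtained by adding the modular identity to \cref{lem:scale}, using $R\cap Y=\varnothing$ so that $\ell(O\cup R)=\ell(O)+\ell(R)$. You also pair-round $x$ itself, so $\ell$ keeps coefficient one while $F^g_R\ge H^g_R$ handles $g$, and you reduce through \cref{lem:hybrid-checkpoint}.

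You differ in how the pointwise certificate becomes a core. The paper runs the Frank--Wolfe-type routine of \cref{app:stationarity}. That routine uses the range, gradient, and Hessian bounds of the smooth potential to find one feasible point with first-order gap at most $\eta M$, and the paper rounds that single point. You instead run projected online linear optimization on the realized prices and round a uniformly random iterate. In effect you redo \cref{lem:price-to-core} for the hybrid response $\ell(R)+\ip{\ell}{x}+H^g_R(x)$ rather than the Poisson response for which it is stated; the random iterate is needed because this response is not concave. Your route needs only $0\le p_i\le f(\{i\})$ and no second-order information, so it would work for any bounded certified price field. The paper's route uses smoothness, but produces a single fractional point and reuses an appendix routine already analysed for this exact potential. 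Both give current-only, future-oblivious core laws, so neither needs a union bound over futures.

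Two details should be tightened.
\begin{itemize}
\item In the branch $k<4B^2$, what you recompute after each arrival must be the static hybrid core with $\kappa=k$ and $R=\varnothing$; its interface is then exactly \cref{eq:hybrid-goal}. The ordinary greedy in the small-$k$ branch of \cref{lem:checkpoint} only certifies $(1-1/e)\OPT_k(f)$. On curvature-$\vartheta$ instances its worst-case ratio tends to $(1-e^{-\vartheta})/\vartheta$, which is below $1-(\sqrt2-1)\vartheta$ for small $\vartheta$, so greedy does not give the hybrid benchmark.
\item Sampled gradient estimates are accurate only with high probability, so your $\delta$ must include the failure probability $\delta_{\rm fail}$, charged against $P_f(Y,R)$ via nonnegativity. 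A finite-bit window draw also gives $p\le1/B$ only up to an arbitrarily small additive term. Both fit within your $\varepsilon/2$ budget.
\end{itemize}
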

\begin{proof}
For $\varepsilon\ge1$ the empty output suffices, so assume $0<\varepsilon<1$.
Fix a capacity $\kappa\ge1$ and a current snapshot $Y$, and let
$P_\kappa(Y)=\{x\in[0,1]^Y:\sum_i x_i\le\kappa\}$. Define the hybrid
benchmark
$b(U):=\ell(U)+\bconst g(U)$.
It is nonnegative, monotone, and submodular, and $b(U)\le f(U)$.
If $Y=\varnothing$, return the empty core; the static interface below follows
immediately from $b(R)\le f(R)$. Henceforth assume $Y\ne\varnothing$.
Use the current potential and the analysis response
\[
\Xi(x)=\ell\cdot x+\frac{\Phi_T^g(x)}{1+T},
\qquad
L_R(x)=\ell(R)+\ell\cdot x+H_R^g(x).
\]
For $R$ disjoint from the current snapshot $Y$, the modular identity and
\cref{eq:scale} give
\begin{equation}\label{eq:hybrid-certificate}
L_R(x)-\left[\ell(O\cup R)+\frac{T}{1+T}g(O\cup R)\right]
\ge\ip{\nabla\Xi(x)}{x-\one_O}.
\end{equation}
Moreover $F_R^f(x)\ge L_R(x)$ by monotonicity, because the modular part is kept linear. We use this hybrid certificate only for $R$ disjoint from $X$, as in the checkpoint construction. This disjointness is needed for the displayed modular identity, even though \cref{lem:scale} itself allows overlap.

Let $M=\max_i f(\{i\})$. The potential has range at most $\kappa M$, nonnegative gradient coordinates at most $f(\{i\})$, and Hessian entries of magnitude at most $M$. Its gradient is estimated from ordinary current marginal queries. The bounded-work routine in \cref{app:stationarity} finds a feasible point with first-order gap at most $\eta M$, except with prescribed probability $\delta_{\rm fail}$. Fresh pipage randomness, independent of that routine, rounds it to a set of size at most $\kappa$. The same current-only rounding law preserves at least $F_R^f(x)$ in expectation for each fixed future, because pipage convexity holds for every monotone submodular contraction. It does not query $R$ or select a rounding law using $R$.

To make the error scale precise, for $R\cap Y=\varnothing$ set
\[
P_R=\max_{O\subseteq Y,\ |O|\le\kappa}f(O\cup R).
\]
Then $M\le P_R$ for $\kappa\ge1$. Put $a=T/(1+T)$. On the successful
stationarity event,
\[
 \max_{y\in P_\kappa(Y)}
 \ip{\nabla\Xi(x)}{y-x}\le\eta M.
\]
Consequently, for every $O\subseteq Y$, $|O|\le\kappa$,
\begin{align}
 F_R^f(x)
 &\ge L_R(x)\notag\\
 &\ge \ell(O\cup R)+a\,g(O\cup R)-\eta M\notag\\
 &=b(O\cup R)-(\bconst-a)g(O\cup R)-\eta M.
 \label{eq:hybrid-pointwise-core}
\end{align}
Mean-preserving pair rounding does not decrease the first line in
expectation. Its fixed-bit implementation loses at most $\rho P_R$. If the
stationarity event fails, use nonnegativity; this loses at most
$\delta_{\rm fail}P_R$, because both the benchmark and every rounded value
belong to $[0,P_R]$. Finally, $g(O\cup R)\le f(O\cup R)\le P_R$. Maximizing
\cref{eq:hybrid-pointwise-core} over $O$ therefore proves that the
first-order error, failure event, rounding error, and replacement of
$\bconst$ by $a$ together lose at most
\[
\left(\eta+\delta_{\rm fail}+\rho+\bconst-\frac{T}{1+T}\right)P_R.
\]
The same successful stationarity event works for every comparator, and the
core law does not depend on $R$, so no union bound over futures is needed.

More explicitly, put
\[
\delta_{\rm core}:=
\eta+\delta_{\rm fail}+\rho+\bconst-\frac{T}{1+T}.
\]
The preceding argument proves the exact static interface
\begin{equation}\label{eq:hybrid-static-core}
 \E f(A_Y\cup R)
 \ge\max_{O\subseteq Y,\ |O|\le\kappa}b(O\cup R)
       -\delta_{\rm core}P_R
\end{equation}
for every fixed disjoint suffix $R$, using a single future-oblivious core law.

Choose
\[
B=\left\lceil\frac{16}{\varepsilon}\right\rceil,\qquad
\eta=\delta_{\rm fail}=\rho=\frac{\varepsilon}{64},
\]
and choose rational $T\le\sqrt2$ with
$\sqrt2-T\le\varepsilon/64$. Since $T\mapsto T/(1+T)$ is
$1$-Lipschitz, $\delta_{\rm core}\le\varepsilon/16$.

If $k<4B^2$, recompute the static hybrid core after every arrival with
$\kappa=k$ and $R=\varnothing$. Then \cref{eq:hybrid-static-core} is exactly
\cref{eq:hybrid-goal} with loss at most $(\varepsilon/16)\OPT_k(f,X_t)$,
and two consecutive feasible outputs differ in at most $2k<8B^2$ elements.

Now suppose $k\ge4B^2$. Use the checkpoint schedule with
$L=\lfloor k/B\rfloor$ and $\kappa=k-2L$. Implement the $B$-way window draw
with fixed bits so that each atom has probability at most
\[
p\le\frac1B+\frac{\varepsilon}{64};
\]
all window bits are independent of all core computations. Applying
\cref{lem:hybrid-checkpoint} to \cref{eq:hybrid-static-core} gives, at every
fixed time,
\begin{align*}
\E f(S_t)
&\ge\OPT_k(b,X_t)
-\left(\frac{\varepsilon}{16}+\frac2B+\frac1B
       +\frac{\varepsilon}{64}\right)\OPT_k(f,X_t)\\
&\ge\OPT_k(b,X_t)-\varepsilon\OPT_k(f,X_t),
\end{align*}
because $1/B\le\varepsilon/16$. This is \cref{eq:hybrid-goal}. The lazy
schedule gives hard recourse at most $8B^2+2$. The finite-bit rounding and
deterministic polynomial-work caps are supplied by
\cref{app:stationarity,app:hybrid-bits}; every approximate outcome remains
feasible, so none of these implementation steps changes the pathwise recourse
bound.
\end{proof}

Under \cref{eq:curvature-def}, define directly from the full-stream curvature
promise
\[
\ell(S)=(1-\vartheta)\sum_{i\in S}f(\{i\}),\qquad g=f-\ell.
\]
Indeed, subtracting the modular function $\ell$ preserves submodularity, and
\cref{eq:curvature-def} says exactly that every marginal of $g$ is
nonnegative; hence $g$ is normalized monotone submodular on the full ground
set. Since $f(U)\le\sum_{i\in U}f(\{i\})$, for every $U$,
\[
\ell(U)+\bconst g(U)
\ge[1-\vartheta+\bconst\vartheta]f(U)=\rho_\vartheta f(U).
\]
Thus \cref{thm:hybrid} proves the algorithmic curvature bound.

We next give finite parameters for the matching lower bound. Use the exact rational hard function $f_0$ in \cref{prop:finite-hard-instance}, with parameters $T,m,k$ and the degree chosen in \cref{lem:rational-profiles}. For $\vartheta>0$ put $\lambda=(1-\vartheta)/\vartheta$ and define
\begin{equation}\label{eq:modular-hard}
f_\lambda(S)=f_0(S)+\lambda\sum_{e\in S}f_0(\{e\}).
\end{equation}
Every marginal is at least $\lambda f_0(\{e\})$, and each singleton is $(1+\lambda)f_0(\{e\})$. Hence the \emph{complete} function has curvature at most $\vartheta$, including the final element.

\begin{lemma}[Finite curvature lower bound]\label{lem:finite-curvature}
Fix rational $T\in[\nicefrac{7}{5},\nicefrac{3}{2}]$, an integer $m\ge32$, a rational $\lambda\ge0$, and an integer $k\ge64m$. Set $\vartheta=(1+\lambda)^{-1}$ and $R_T=w(1)$, and fix the rational profile accuracy and degree as functions of $T,m$. There are exact rational instances on $(m+1)k$ current elements and one final element, each of full-stream curvature at most $\vartheta$, such that every randomized algorithm making at most $Q$ queries before the final arrival and at most $C$ symmetric changes at that arrival has, on one fixed instance,
\begin{equation}\label{eq:finite-curvature}
\frac{\E f_\lambda(S_{n+1})}{\OPT_k(f_\lambda,X\cup\{r\})}
\le\frac{R_T+\lambda}{1+\lambda}
+\frac{130}{m}+\frac{10C+12}{k}
+2(Q+1)e^{-k/(8192m^2)}.
\end{equation}
There is no restriction on post-arrival queries or computation. For fixed $T,m,\lambda$, all oracle answers have $O_{T,m,\lambda}(\log(k+1))$ bits.
\end{lemma}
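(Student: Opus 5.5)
We reuse the hidden instance of \cref{prop:finite-hard-instance} with the modular term of \cref{eq:modular-hard} added, and repeat its proof with the new normalization. First, the curvature claim. Every marginal of $f_\lambda$, including marginals of $r$, is at least $\lambda f_0(\{e\})$, while each singleton equals $(1+\lambda)f_0(\{e\})$. Hence $f_\lambda$ has curvature at most $1/(1+\lambda)=\vartheta$ on the full stream. The current singletons are again identical, so the modular part $\lambda\sum_{e\in S}f_0(\{e\})=\lambda\sigma|S|$ depends only on $|S|$. Here $\sigma$ is the common singleton value $f_0(\{e\})=\widehat F_\delta(1/k,0)$ for current $e$, and $\sigma\le5/k$. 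For the final element we take $f_\lambda(\{r\})=(1+\lambda)f_0(\{r\})$; its modular weight $\lambda f_0(\{r\})$ is a fixed rational independent of $A$.

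\textbf{Hiding.} Because the modular part depends only on $|S|$, queries under $f_\lambda$ are answered by post-processing $f_0$ answers with $|S|$. Consequently \cref{lem:transcript} applies verbatim with the reference oracle $G(|S|/k)+\lambda\sigma|S|$, giving the exceptional term $2(Q+1)e^{-k/(8192m^2)}$. The answers are rationals of $O_{T,m,\lambda}(\log(k+1))$ bits, since they are a fixed rational combination of the $f_0$ answers with $|S|$.

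\textbf{Upper bound on the algorithm.} On the good event the pre-arrival output $S_0$ is balanced. As in \cref{eq:balanced-payoff}, after $C$ changes we have
\[
f_0(S_{n+1})\le R_T+38/m+5C/k.
\]
The modular part of any feasible output is at most $\lambda(k\sigma+f_0(\{r\}))$. We want $k\sigma$ close to one, not merely at most five. I expect $k\sigma=k\,\widehat F_\delta(1/k,0)\approx\partial_x\widehat F_\delta(0,0)$, which should equal $v'(0)$ up to the regularizer, i.e.\ $v'(0)=b(T+1)$. If that is not one, the normalization must be handled by comparing against the comparator rather than against one. This calibration is the step I expect to be the main obstacle. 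Next, $f_0(\{r\})=\widehat K_\delta(0,0)\approx c$, which is small but nonzero. I would bound the total modular mass of the output by $\lambda\mu$, with $\mu=k\sigma+f_0(\{r\})$.

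\textbf{Comparator.} For the comparator we take $\{r\}\cup(A\setminus\{a_0\})$. It has modular mass exactly $\lambda\mu-\lambda\sigma$, and $f_0$-value at least $1-5/k$ by \cref{eq:comparator}. The ratio is then at most
\[
\frac{R_T+38/m+5C/k+\lambda\mu}{1-5/k+\lambda(\mu-5/k)}.
\]
If $\mu$ were exactly one, this would give $(R_T+\lambda)/(1+\lambda)$ plus errors of order $(10C+12)/k$ after clearing $1+\lambda$. So the key task is to show $|\mu-1|=O(1/m)$. The intended identity is $c+v'(0)=1$, i.e.\ $w(0)+v'(0)=1$. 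Its rational version must be controlled to $O(1/m)$ using the regularizer and flattening bounds (derivatives in $[0,5]$, regularizer at most $32/m$, correction at most $1/(2m)$), together with a second-derivative bound to pass from $\sigma k$ to the derivative at zero at cost $O(1/k)$.

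\textbf{Combining.} A mediant-type bound, using $R_T\le1$ and $\mu\ge1-O(1/m)$, turns the numerator and denominator perturbations into the additive $130/m$ and $(10C+12)/k$. On the bad event the ratio is at most one. Averaging over $A$ and the tape, and using permutation invariance of the optimum, yields one fixed instance, exactly as in \cref{prop:finite-hard-instance}. Post-arrival queries never enter the argument.
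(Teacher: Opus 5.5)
Your plan matches the paper's proof: the same modular augmentation, the same use of a size-only modular term to keep the reference-transcript coupling, the same comparator $\{r\}\cup(A\setminus\{a_0\})$, and the same ratio $\bigl(R_T+38/m+5C/k+\lambda\mu\bigr)/\bigl(1-5/k+\lambda(\mu-\sigma)\bigr)$, where your $\mu$ is the paper's $D_k$.

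Two steps need to be made explicit.

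\textbf{Identical singletons.} You assert that all current singletons coincide but do not say why. This is where $k\ge64m$ enters: $1/k\le\delta$ places both $(1/k,0)$ and $(0,1/k)$ in the exact flat band. Without this, the modular term would leak $|S\cap A|$.

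\textbf{Calibration of $\mu$.} The flat band also settles the calibration you flag exactly, with no Taylor step needed:
\[
\sigma=\widehat v(1/k)+\widehat r_m(1/k)=\frac{v'(0)}{k}-\frac{b}{2k^2}+\frac{32}{m}\bigl(1-E_D(1/k)\bigr),
\qquad
f_0(\{r\})=c+\frac{1}{2m}.
\]
Together with $c+v'(0)=1$, these give $1-\frac{1}{2k}\le\mu\le1+\frac{65}{m}$. The bound is asymmetric because the regularizer and the future correction only increase $\mu$, so the deficit is $O(1/k)$.

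This asymmetry is what the stated constants require. The denominator is then at least $(1+\lambda)(1-6/k)$, and the numerator is at most $R_T+\lambda+65(1+\lambda)/m+5C/k$. Using $1/(1-6/k)\le2$ then yields exactly $130/m$ and $(10C+12)/k$. A two-sided $O(1/m)$ estimate, as your final paragraph proposes, would not reproduce these constants.
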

\begin{proof}
Write $E_D$ for the common rational exponential approximation. As $1/k\le\delta$, both possible locations of a singleton lie in the exact flat band. Every current singleton therefore has the same value
\[
\sigma_k=\frac{v'(0)}k-\frac{b}{2k^2}
+\frac{32}{m}\bigl(1-E_D(1/k)\bigr),
\qquad 0\le\sigma_k\le\frac5k.
\]
The final singleton is exactly $c_*=c+32\delta=c+1/(2m)$. Neither value depends on the hidden set $A$. Adding the modular function thus adds exactly $\lambda\sigma_k|S|$ to each current query, preserving the reference-transcript coupling for arbitrary query cardinalities.

Put $D_k=k\sigma_k+c_*$. The identity $v'(0)+c=1$ and the bounds $0<E_D\le1$ and $0\le-E_D'\le2$ give
\begin{equation}\label{eq:singleton-balance-finite}
1-\frac1{2k}\le D_k\le1+\frac{65}{m}.
\end{equation}
In particular the regularizer is accounted for at finite $m$. It is not discarded when taking the limit in $k$.

On a good transcript, \cref{prop:finite-hard-instance} bounds the original objective after the last update by $R_T+38/m+5C/k$. Even granting the algorithm both $r$ and $k$ current elements, its modular part is at most $\lambda D_k$. The feasible comparator $\{r\}\cup(A\setminus\{a_0\})$ has original value at least $1-5/k$ and modular value exactly $\lambda(D_k-\sigma_k)$. Hence the good-event ratio is at most
\begin{equation}\label{eq:curvature-exact-finite-ratio}
\frac{R_T+38/m+5C/k+\lambda D_k}
{1-5/k+\lambda(D_k-\sigma_k)}.
\end{equation}
The numerator is at most $R_T+\lambda+65(1+\lambda)/m+5C/k$. The denominator is at least $(1+\lambda)(1-6/k)$. Since $k\ge64m\ge12$ and $(R_T+\lambda)/(1+\lambda)\le1$, division bounds \cref{eq:curvature-exact-finite-ratio} by the first three terms of \cref{eq:finite-curvature}. On the bad event the approximation ratio is at most one. The transcript-disagreement probability is unchanged by the modular addition, giving the last term. Average over the random hidden set to obtain one instance fixed before the algorithm's random bits. Rationality and encoding length follow from the profile construction and the fixed rational $\lambda$.
\end{proof}

The bound also holds with $Q,C$ replaced by uniform expected bounds $\overline Q,\overline C$ over the promised curvature class, assuming almost-sure termination. To apply the proof of \cref{cor:expected-hard-instance} within this class, use the reference instance
\[
f_{\mathrm{ref},\lambda}(S)
=G(|S\cap X|/k)+\lambda\sigma_k|S\cap X|
 +(1+\lambda)c_*\one\{r\in S\}.
\]
Its current answers match the modularly augmented reference transcript. Each current singleton is $(1+\lambda)\sigma_k$, each current marginal is at least $\lambda\sigma_k$, and $r$ is modular, so its full-stream curvature is at most $\vartheta$. The same random-length coupling applies. In the good-event value bound, take the expectation of the actual nonnegative recourse term before dividing by the deterministic comparator lower bound.

For a fixed rational $\vartheta>0$ and a prescribed improvement $\zeta>0$, first choose rational $T$ so that $R_T-\bconst<\zeta/4$, then choose $m$ so that $130/m<\zeta/4$, and then take arbitrarily large $k\ge64m$. The rational profile degree is fixed before $k$. Since
$(\bconst+\lambda)/(1+\lambda)=\rho_\vartheta$, \cref{eq:finite-curvature} forces $C=\Omega_{\vartheta,\zeta}(k)$ or $Q=\exp(\Omega_{\vartheta,\zeta}(k))$ for a guarantee exceeding $\rho_\vartheta$ by $\zeta$. For an arbitrary fixed real $\vartheta>0$, choose a rational $\vartheta'\le\vartheta$ sufficiently close that $\rho_{\vartheta'}-\rho_\vartheta<\zeta/4$ and apply the same construction. These instances obey the promised bound $\vartheta$ and still have exact rational answers. Together with the upper bound and the modular case, this proves \cref{thm:curvature-main}.

\section{When Stronger Prices Are Accessible}\label{sec:structure}

The general price lower bound does not apply to every structural promise. Coverage permits an aggregate-oracle implementation of the stronger certificate. Matroid-rank sums permit another implementation when their component rank oracles are supplied. The same linear-recourse conversion serves both results.

\begin{theorem}[Coverage without its representation]\label{thm:coverage}
Suppose the full current-and-future objective is a nonnegative weighted coverage function. For every rational $\varepsilon\in(0,\aconst)$, an ordinary-current-value-oracle algorithm attains $\aconst-\varepsilon$ with hard symmetric recourse at most $4\lceil2/\varepsilon\rceil+2$ using polynomially many current queries. Under polynomial-bit rational oracle answers, its computation and sampled-bit counts have deterministic polynomial bounds. A fixed improvement above $\aconst$ in randomized polynomial time would imply $\mathrm{NP}\subseteq\mathrm{BPP}$, even without a recourse restriction.
\end{theorem}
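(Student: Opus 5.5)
The plan has three parts: a concave response, a migration step that keeps value without the checkpoint window, and the hardness clause.

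\textbf{Concave response.} For a weighted coverage function, the Poisson response $H_h(x)=\E h(Z_x)$ has a closed form. It equals $\sum_u w_u\bigl(1-e^{-\sum_{i\ni u}x_i}\bigr)$, plus a constant for features already covered by the future. Every legal coverage branch $h(S)=f(S\cup R)$ therefore has a concave Poisson response in $x$. This concavity is what the certificate framework lacked in general, and it drives both the static core and the migration.

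\textbf{Static core.} By \cref{thm:decoupling} and the minimax of \cref{thm:price-existence}, universal $\aconst$-prices exist at every point. For coverage we can compute them from the aggregate oracle alone. The gradient of the concave potential $H_f(x)$ is
\[
\partial_iH_f(x)=\E\bigl[f(i\mid Z_x)\bigr]\,e^{-x_i},
\]
which lies in $[0,M_i]$ and can be sampled by current marginal queries. We will show it is an $\aconst$-price. Fix a future $R$ and a comparator $O$. Concavity of $H_{f,R}$ gives
\[
H_{f,R}(x)+\ip{\nabla H_{f,R}(x)}{\one_O-x}\ge H_{f,R}(x\vee \one_O)\ge \aconst f(O\cup R).
\]
We then need to replace $\nabla H_{f,R}$ by the current gradient $\nabla H_f$ in the direction $\one_O-x$. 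This is the main obstacle. The current gradient dominates the future one coordinatewise by \cref{eq:legal}, so the replacement is valid on the positive part of $\one_O-x$ but goes the wrong way on the negative part. First I would try the per-feature structure: for coverage, $\partial_i H_{f,R}=\sum_{u\ni i,\,u\notin R}w_ue^{-x(u)}$, and the standard per-feature inequality $1-e^{-x(u)}\ge \aconst\min\{1,\one_O(u)\}$ survives deleting the features in $R$. Failing that, I would use the continuous greedy / Frank--Wolfe stationary point of $H_f$ on $P_\kappa$ and argue per feature. With a gap bound in hand, \cref{lem:price-gap} and \cref{lem:price-to-core}, with pipage rounding, give a future-robust core at $\aconst-\varepsilon/2$.

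\textbf{Migration.} To obtain $O(\varepsilon^{-1})$ rather than $O(\varepsilon^{-2})$ recourse, I would not use the random window of \cref{lem:checkpoint}. Instead I would interpolate fractionally between the old point $x^{\rm old}$ and the new point $x^{\rm new}$. Concavity of every future response gives
\[
H_{h}\bigl((1-\lambda)x^{\rm old}+\lambda x^{\rm new}\bigr)\ge(1-\lambda)H_h(x^{\rm old})+\lambda H_h(x^{\rm new}),
\]
so no value is lost during the transition. This interpolation would be implemented with $\kappa$ independent categorical slots, each slot drawing an element from $x/\kappa$. A slot sample has Poisson-type law that is dominated by the coordinatewise bound $1-(1-x_i/\kappa)^\kappa\ge 1-e^{-x_i}$ up to the independence correction, and coverage concavity handles this correction. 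The fixed replacement schedule resamples one slot at a time from the new law. With $L=\lfloor k/B\rfloor$ reserved recent positions and $B=\lceil2/\varepsilon\rceil$, at most $\lceil\kappa/L\rceil\le 2B$ slots change per update. By \cref{lem:lazy} this gives the $4B+2$ symmetric recourse bound, and the capacity loss $1/B\le\varepsilon/2$.

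\textbf{Hardness.} The final clause follows from Feige's $(1-1/e)$ hardness for max coverage. An offline instance is presented as a stream with all elements inserted, followed by one query at the end. The value oracle is explicitly computable from the set system, so a randomized polynomial-time algorithm beating $\aconst$ would decide the gap version, placing $\mathrm{NP}$ in $\mathrm{BPP}$.
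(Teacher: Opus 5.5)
Your architecture is the paper's: the current Poisson gradient as price, a first-order-gap point computed from current marginal samples, concavity of every contracted response to interpolate old and new fractional cores, independent categorical slots with a fixed replacement schedule and the lazy-superset rule, and Feige's gap reduction. The genuine gap is the step you yourself call the main obstacle. You never show that the \emph{current} gradient $\nabla H$ certifies every future. The per-feature fact you propose (the benchmark inequality on features not met by $R$) cannot resolve this, because $\nabla H$ and $\nabla H_R$ agree on those features and differ only on atoms met by $R$.

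Your displayed tangent step is also false with $x\vee\one_O$. Take $f(S)=w_i\one\{i\in S\}+w_j\one\{j\in S\}$ with $w_j>w_i$, $R=\varnothing$, $O=\{i\}$, $x=\one_{\{j\}}$. Then the left side falls short of $H(x\vee\one_O)$ by $e^{-1}(w_j-w_i)$. Concavity gives the tangent bound at $\one_O$ instead, and $H_R(\one_O)\ge\aconst f(O\cup R)$ still holds there.

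The missing idea is that covered atoms carry exactly the slack needed. For an atom $a$ with $C_a\cap R\ne\varnothing$, write $z_a=\sum_{i\in X\cap C_a}x_i$ and $o_a=|O\cap C_a|$. Its contribution to $\ip{\nabla H(x)}{x-\one_O}$ is $w_a(z_a-o_a)e^{-z_a}\le w_az_ae^{-z_a}\le e^{-1}w_a$. It contributes $w_a$ to $H_R(x)$ but only $\aconst w_a$ to $\aconst f(O\cup R)$, a surplus of $(1-\aconst)w_a=e^{-1}w_a$, which covers the price term. Atoms missed by $R$ need two elementary cases: $1-e^{-z}\ge ze^{-z}$, and $(z-o+1)e^{-z}\le ze^{-z}\le e^{-1}$ when $o\ge1$.

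Together these give the paper's \cref{lem:coverage-prices}: $H_R(x)-\aconst f(O\cup R)\ge\ip{\nabla H(x)}{x-\one_O}$ for all $x\ge0$, $O$, and $R$. It is proved atom by atom with no use of concavity. A first-order gap of $\eta M$ then gives $H_R(x)\ge(\aconst-\eta)P_R$ simultaneously for every $R$, including sets $R$ that overlap the snapshot. Your migration needs that overlap case, since the common recent set $X_t\setminus X_{(q-1)L}$ used for both cores meets the newer snapshot. Inside your concavity route the same observation reads $\ip{\nabla H_R(x)}{\one_O-x}\le\ip{\nabla H(x)}{\one_O-x}+e^{-1}\sum_{a:\,C_a\cap R\ne\varnothing}w_a$, paid for by the covered atoms in $H_R(\one_O)$.

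Smaller repairs:
\begin{itemize}
\item For coverage, slot domination is not a concavity fact but a per-atom product bound. Given the cores, slots are independent, and slot $s$ misses $C_a$ with probability $1-p^{(s)}(C_a)\le e^{-p^{(s)}(C_a)}$; \cref{lem:slots} proves the general submodular version.
\item The slots should be fed the fractional point itself, not a pipage-rounded set.
\item Your gradient formula $\E[f(i\mid Z_x)]e^{-x_i}$ double-counts the factor $e^{-x_i}$ unless the sample excludes coordinate $i$; compare \cref{eq:coverage-gradient}.
\item Reserving $2L$ positions costs a factor $1-2/B$, not $1-1/B$. Because the stated recourse pins $B=\lceil2/\varepsilon\rceil$, $2/B$ can equal $\varepsilon$. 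A static loss of $\varepsilon/2$ then leaves $(\aconst-\varepsilon/2)(1-\varepsilon)<\aconst-\varepsilon$ whenever $\varepsilon<2\aconst-1$. So the optimization, failure, and slot total-variation losses must be taken near $\varepsilon/12$, as the paper does.
\item The hardness clause needs $O(\delta^{-2})$ independent runs and Hoeffding's inequality, since the online guarantee holds only in expectation.
\end{itemize}
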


\subsection{An Aggregate-Oracle Certificate}

For analysis only, write $f(S)=\sum_a w_a\one\{S\cap C_a\ne\varnothing\}$. The algorithm is not given the atoms $C_a$ or their weights. Let $H=H_\varnothing$ denote the current Poisson extension.

\begin{lemma}[Coverage prices]\label{lem:coverage-prices}
For every $x\ge0$, every $O\subseteq X$, and every fixed set $R$ in the full coverage ground set, including $R\cap X\ne\varnothing$,
\begin{equation}\label{eq:coverage-certificate}
H_R(x)-\aconst f(O\cup R)\ge\ip{\nabla H(x)}{x-\one_O}.
\end{equation}
\end{lemma}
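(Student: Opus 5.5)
The plan is to use the fact that weighted coverage is a nonnegative combination of single-atom indicators and that every term in \cref{eq:coverage-certificate} is linear in the atom weights. For analysis, write $f=\sum_a w_a f_a$ with $f_a(S)=\one\{S\cap C_a\ne\varnothing\}$. The same decomposition holds for the contraction and its Poisson extension, $H_R=\sum_a w_aH_R^{(a)}$, for the current extension $H=\sum_a w_aH^{(a)}$, and for its gradient. It therefore suffices to prove the inequality for a single atom with unit weight and then sum with weights $w_a\ge0$. The algorithm never needs the atoms; the decomposition is used only to verify the certificate.

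Fix one atom and let $C=C_a\cap X$ be its current part. Put $s=\sum_{i\in C}x_i$ and $o=|O\cap C|$. Then
\[
H^{(a)}(x)=1-e^{-s},\qquad \partial_iH^{(a)}(x)=e^{-s}\one\{i\in C\},
\]
so the right side of \cref{eq:coverage-certificate} for this atom is $e^{-s}(s-o)$.

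If $C=\varnothing$, the gradient vanishes. In that case $H_R^{(a)}(x)=f_a(R)=f_a(O\cup R)$, so the left side equals $e^{-1}f_a(O\cup R)\ge0$.

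Otherwise I would split according to whether $R$ meets the atom; this covers overlapping $R$ as well.
\begin{itemize}[leftmargin=1.6em]
\item If $R\cap C_a\ne\varnothing$, then $H_R^{(a)}=f_a(O\cup R)=1$. The left side is $e^{-1}$, and the right side is at most $se^{-s}\le e^{-1}$.
\item If $R$ misses the atom, then $H_R^{(a)}(x)=1-e^{-s}$.
\begin{itemize}[leftmargin=1.6em]
\item When $o=0$, the comparator term vanishes. The claim becomes $1-e^{-s}\ge se^{-s}$, which is $e^s\ge1+s$.
\item When $o\ge1$, the comparator value is one. The left side is $e^{-1}-e^{-s}$, and the right side is at most $e^{-s}(s-1)$. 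The claim reduces to $se^{-s}\le e^{-1}$.
\end{itemize}
\end{itemize}
Each case is elementary, and summing over atoms gives the lemma.

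The only delicate point is the bookkeeping between the current extension, whose gradient is the price, and the contracted response $H_R$. In particular, the argument must allow $R$ to contain current coordinates. This is handled by the per-atom case split on whether $R$ already covers the atom. The one-variable inequality $se^{-s}\le e^{-1}$ is what makes the coefficient $1-e^{-1}$ exactly tight, attained at $s=1$.
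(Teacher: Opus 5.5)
Your proof is correct and is essentially the paper's argument. Like the paper, you reduce to a single unit-weight atom with price term $(s-o)e^{-s}$, split on whether $R$ already covers the atom and whether $O$ meets it, and reduce each case to $e^{s}\ge 1+s$ or $se^{-s}\le e^{-1}$. Your separate treatment of $C=\varnothing$ is harmless, but the general formulas with $s=0$ already cover it.
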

\begin{proof}
Consider one atom, divide by its weight, and let $z=\sum_{i\in X\cap C_a}x_i$, $o_a=|O\cap C_a|$, and $r=\one\{R\cap C_a\ne\varnothing\}$. The current-gradient price contribution is $(z-o_a)e^{-z}$. If $r=o_a=0$, the claim is $1-e^{-z}\ge ze^{-z}$. If $r=0$ and $o_a\ge1$, it follows from $(z-o_a+1)e^{-z}\le ze^{-z}\le e^{-1}$. If $r=1$, it follows from $(z-o_a)e^{-z}\le ze^{-z}\le e^{-1}$. Summing proves the claim. Although an atom representation was used in the proof, $H$ and $\nabla H$ are determined by the current aggregate function.
\end{proof}

The gradient has the current-query formula
\begin{equation}\label{eq:coverage-gradient}
\partial_iH(x)=e^{-x_i}\E[f(Z_{-i}\cup\{i\})-f(Z_{-i})],
\end{equation}
where coordinate $j\ne i$ is included independently with probability $1-e^{-x_j}$. Let $M=\max_i f(\{i\})$ and $P_R=\max_{O\subseteq X,\ |O|\le\kappa}f(O\cup R)$. A sample marginal lies in $[0,M]$. The bounded-work routine in \cref{app:stationarity} finds a feasible point with first-order gap at most $\eta M$, except on an event of probability at most $\delta_{\rm fail}$. Consequently, for every fixed future,
\begin{equation}\label{eq:coverage-core}
\E H_R(x)\ge(\aconst-\eta-\delta_{\rm fail})P_R.
\end{equation}
No union bound over futures is needed: on the current-only good event, the first-order certificate holds simultaneously for all of them.

Coverage also supplies a distinct property needed for migration. For every fixed $R$,
\begin{equation}\label{eq:coverage-concavity}
H_R(x)=\sum_{a:R\cap C_a\ne\varnothing}w_a+
\sum_{a:R\cap C_a=\varnothing}w_a\left(1-e^{-\sum_{i\in X\cap C_a}x_i}\right)
\end{equation}
is concave in $x$. If $i\in R\cap X$, every atom it covers appears in the first sum, so the response is constant in that coordinate. This proves concavity and the certificate on precisely the overlapping domain of \cref{lem:overlap}.

\subsection{Independent Slots and a Hard Migration Bound}

There are two separate ingredients. A price certificate produces a good fractional core at each snapshot. Concavity then ensures that every intermediate mixture of the old and new cores remains good. Independent slots implement this mixture while changing only a few positions. The recent set overlaps the newer snapshot, so we record the exact domain before using concavity.

Independent categorical repetitions and their domination of the Poisson response already appear in \citet[Section~5, Lemma~5.6]{BNW25}. Their negative-association argument also implies the heterogeneous-slot version below. We give a direct replacement proof for the overlapping domain. Our use of these slots combines concavity across two snapshot cores with a fixed replacement schedule, yielding a pathwise per-update recourse bound.

\begin{lemma}[Fixed sets and overlapping coordinates]\label{lem:overlap}
Let $f$ be monotone submodular on a ground set containing $X\cup R$, where $R$ is any fixed set, possibly intersecting $X$. Then $h_R(S)=f(S\cup R)$ is a legal branch on $X$. For $x\ge0$, let $Z_x\subseteq X$ include coordinates independently with probabilities $1-e^{-x_i}$ and put $H_R(x)=\E f(Z_x\cup R)$. This response depends only on coordinates in $X\setminus R$. Every statement proved for all legal branches therefore applies to this overlapping $R$, without conditioning on the sampled set.
\end{lemma}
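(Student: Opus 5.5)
We verify the definition of a legal branch for $h_R$ directly, then read off the coordinate-independence of $H_R$ from the sampling definition.

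\emph{Step 1: legality.} Recall that a legal branch on $X$ is a monotone submodular $h:2^X\to\R_{\ge0}$ with $h\ge f$ and $0\le h(S+i)-h(S)\le f(S+i)-f(S)$ for $i\in X\setminus S$, as in \cref{eq:legal}.
\begin{itemize}
\item Monotonicity and submodularity of $h_R(S)=f(S\cup R)$ are inherited from $f$, because $S\mapsto S\cup R$ preserves inclusion, unions and intersections: $(S\cup R)\cup(S'\cup R)=(S\cup S')\cup R$, and likewise for intersections.
\item Domination $h_R\ge f$ on $2^X$ is monotonicity of $f$.
\item For the marginal sandwich with $i\in X\setminus S$, split into two cases. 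If $i\in R$, then $h_R(S+i)-h_R(S)=0$, which lies in $[0,f(i\mid S)]$ since $f$ is monotone. If $i\notin R$, then $h_R(S+i)-h_R(S)=f(i\mid S\cup R)$, which is nonnegative by monotonicity and at most $f(i\mid S)$ by diminishing returns, since $S\subseteq S\cup R$ and $i\notin S\cup R$.
\end{itemize}
This case split is the only place the overlap matters. The earlier reduction for disjoint collections (adjoining $R$ as one symbol) does not literally apply when $R\cap X\neq\varnothing$, so the direct verification is what is needed.

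\emph{Step 2: the response ignores $X\cap R$.} Write $Z_x=(Z_x\setminus R)\cup(Z_x\cap R)$. Then $f(Z_x\cup R)=f((Z_x\setminus R)\cup R)$ pointwise. The law of $Z_x\setminus R$ is the product law on $X\setminus R$ with inclusion probabilities $1-e^{-x_i}$, which involves only the coordinates $x_i$ with $i\in X\setminus R$. Taking expectations shows that $H_R(x)$ depends only on those coordinates, and that $\partial_iH_R\equiv0$ for $i\in X\cap R$.

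\emph{Step 3: transfer.} Since $H_R=H_{h_R}$ by definition of $H_h$, any statement proved for all $h\in\calH_f$ specializes to $h=h_R$. This includes \cref{eq:operational-price}, \cref{thm:decoupling}, \cref{thm:price-existence}, and the Poisson comparison used for the slot migration. The specialization is pointwise in $R$, so no conditioning on the sample is required.

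No step is a genuine obstacle. The only point needing care is the $i\in R$ case in Step 1, where the upper bound relies on monotonicity of $f$ rather than on diminishing returns.
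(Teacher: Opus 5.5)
Your proof is correct and matches the paper's argument. It uses the same case split: for $i\in R$ the marginal is zero, and otherwise diminishing returns compares the marginal at $S\cup R$ with the one at $S$. It also makes the same observation that coordinates in $R\cap X$ are already in the union, so their sampling has no effect.

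One small correction to your side remark: the paper notes that adjoining a fresh symbol for all of $R$ still works under overlap. Submodularity survives because $f$ is monotone and the image of an intersection lies inside the intersection of the images. So that reduction remains available, although your direct verification already suffices.
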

\begin{proof}
Monotonicity and submodularity are preserved by adjoining $R$. If $i\in R$, its marginal in $h_R$ is zero. Otherwise diminishing returns compares its marginal at $S\cup R$ to that at $S$, proving \cref{eq:legal}. Elements of $R\cap X$ are already in the union, so their random inclusion has no effect. Equivalently, one may adjoin a fresh symbol representing the whole set $R$, even when it contains current elements.
\end{proof}

\begin{lemma}[Categorical domination]\label{lem:slots}
Fix a coordinate set $X$ and a deterministic set $R$, with arbitrary overlap. Let $I_1,\ldots,I_\kappa$ be independent categorical draws in $X$ or a null, with $\Prob(I_s=i)=p_i^{(s)}$ and the remaining probability assigned to a null. Put $z_i=\sum_s p_i^{(s)}$. Their distinct nonnull values $A$ satisfy $|A|\le\kappa$ on every outcome and $\E f(A\cup R)\ge H_R(z)$ for every monotone submodular function on $X\cup R$.
\end{lemma}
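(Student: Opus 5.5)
The plan is to pass from the Poisson response $H_R(z)$ to the categorical slots through an intermediate process, changing one slot at a time. The feasibility claim is immediate: each slot contributes at most one nonnull value, so $|A|\le\kappa$ on every outcome.

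First, by \cref{lem:overlap} we may replace $f(\cdot\cup R)$ by the legal branch $h_R$ on $X$. Coordinates in $R\cap X$ have zero marginal there, so they can be treated as nulls. It then suffices to show $\E h(A)\ge H_h(z)$ for an arbitrary monotone submodular $h$ on $X$.

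Next, realize $H_h(z)$ as a superposition of independent Poisson slots. For each slot $s$ and each $i$, draw an independent count $N^{(s)}_i\sim\Poi(p^{(s)}_i)$. The total count at $i$ is then $\Poi(z_i)$, so the union $U$ of all $i$ with some positive count has exactly the product law with absence probability $e^{-z_i}$. Hence $\E h(U)=H_h(z)$.

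The induction replaces one slot $s$ at a time by its categorical draw, with everything else fixed. Let $V$ be the set contributed by the other slots, whether already replaced or still Poisson. Conditionally on $V$, the Poisson slot contributes the random set $P=\{i:N^{(s)}_i>0\}$, and the categorical slot contributes $\{I_s\}$, or nothing on a null. We need
\[
\textstyle\sum_i p^{(s)}_i\,h(i\mid V)\ \ge\ \E\,[h(V\cup P)-h(V)].
\]
By subadditivity of marginals from submodularity, the right side is at most $\sum_i \Prob(i\in P)\,h(i\mid V)=\sum_i(1-e^{-p^{(s)}_i})\,h(i\mid V)$. Since $1-e^{-p}\le p$ and every $h(i\mid V)\ge0$ by monotonicity, the inequality holds. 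Each replacement therefore does not decrease the expected value, and after $\kappa$ steps the process is exactly the categorical union $A$, which gives $\E h(A)\ge H_h(z)$.

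The main point of care is the conditioning. The slots are independent, so conditioning on the other slots leaves slot $s$ with its original law. One must also check that the counts within a slot need not be coupled, since only the marginal inclusion probabilities $1-e^{-p}$ enter the bound. For the overlap, every element of $R\cap X$ has $h(i\mid V)=0$, so it contributes nothing on either side. No concavity or negative association is needed.
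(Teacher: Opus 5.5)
Your proposal is correct and is essentially the paper's proof. Both run a slot-by-slot hybrid argument that conditions on the union of $R$ and the other slots. Both bound the independent-Poisson slot's expected gain by $\sum_i(1-e^{-p_i})h(i\mid V)\le\sum_i p_i\,h(i\mid V)$ using submodularity, and both identify the superposed Poisson counts with the product law at intensity $z$. The only cosmetic differences are that you run the replacement from Poisson to categorical, and that you handle the overlap through \cref{lem:overlap} rather than by noting directly that coordinates of $R$ have zero marginal.
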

\begin{proof}
Condition on the union $U$ of $R$ and all other draws. One categorical draw contributes exactly $f(U)+\sum_i p_i f(i\mid U)$. Replace it by independent counts $N_i\sim\Poi(p_i)$. By submodularity, the replacement's expected value is at most $f(U)+\sum_i(1-e^{-p_i})f(i\mid U)$, which is no larger. Replace all slots in this way. The aggregated independent counts have means $z_i$, and their union is the Poisson extension. Coordinates in $R$ have zero marginal throughout this replacement, so no disjointness assumption enters the proof. No Poisson count is sampled by the algorithm.
\end{proof}

Suppose old and new fractional cores are $x,y\in P_\kappa$. Retain $\kappa-j$ old slots with probabilities $x_i/\kappa$ and use $j$ new slots with probabilities $y_i/\kappa$. If $H_R$ is concave for every future contraction, \cref{lem:slots} gives
\begin{equation}\label{eq:slot-interpolation}
\E f(A_j\cup R)\ge H_R((1-\lambda)x+\lambda y)
\ge(1-\lambda)H_R(x)+\lambda H_R(y),\qquad \lambda=j/\kappa.
\end{equation}
This statement requires independence of active slots conditional on the cores. Core computations and the replacement order therefore never inspect realized slot values. Their random bits are separate from slot bits. The two fractional cores themselves need not be independent, since the final inequality is pointwise in the pair.

We state the schedule with the same block convention as \cref{lem:checkpoint}. Choose $B\ge4$. If $k\ge2B$, put $L=\lfloor k/B\rfloor$, $\kappa=k-2L$, and $c=\lceil\kappa/L\rceil\le2B$. A fractional core on a snapshot $Y$ means a current-only random vector $x\in P_\kappa(Y)$ satisfying
\[
\E H_R(x)\ge\alpha\max_{O\subseteq Y,\ |O|\le\kappa} f(O\cup R)
\]
for every deterministic $R$ in the promised full function class. The expectation here is over the core computation. The same law works for all $R$, with arbitrary overlap.

\Cref{alg:slots} gives the update rule. At a snapshot, the new fractional core and its sampled tuple are prepared for the following block. The replacement order is fixed independently of sampled values.

\begin{algorithm}[t]
\caption{Consistent maximization by independent slots}
\label{alg:slots}
\small
\setlength{\parskip}{0pt}

\noindent\textbf{Parameters:}
$L=\lfloor k/B\rfloor$, $\kappa=k-2L$,
$c=\lceil\kappa/L\rceil$, with $k\ge2B$ and $B\ge4$.
\par\smallskip

\noindent\textbf{Initialization:}
$x^{(0)}=0$, $A_0=(\bot,\ldots,\bot)$ with $\kappa$ null entries,
and $S_0=\varnothing$.
\par\medskip

\noindent\textbf{Upon insertion at time $t$:}
\begin{algorithmic}[1]
  \If{$t\le L$}
    \State $T_t\gets X_t$; $S_t\gets X_t$
  \Else
    \State Write $t=qL+r$, where $q\ge1$ and $1\le r\le L$
    \State $j_t\gets\min\{\kappa,cr\}$
    \State $K_t\gets
      \bigl(\{A_q[i]:1\le i\le j_t\}
      \cup\{A_{q-1}[i]:j_t<i\le\kappa\}\bigr)
      \setminus\{\bot\}$
    \State $R_t\gets X_t\setminus X_{(q-1)L}$;
      $T_t\gets K_t\cup R_t$
    \State Update $S_t$ from $S_{t-1}$ using the lazy-superset rule
      (\cref{lem:lazy})
  \EndIf
  \If{$t$ is a multiple of $L$}
    \State $q\gets t/L$
    \State Compute a fractional core $x^{(q)}$ on $X_t$
    \For{$s=1,\ldots,\kappa$}
      \State Sample $A_q[s]$ independently using fresh random bits:
      \Statex\hspace{\dimexpr\algorithmicindent*3\relax}
        choose $i\in X_t$ with probability $x_i^{(q)}/\kappa$,
        and $\bot$ otherwise
    \EndFor
  \EndIf
\end{algorithmic}
\end{algorithm}

Every core computation uses random bits separate from all slot bits and receives only the current function and deterministic stream prefix. Conditional on all fractional cores, all categorical draws are independent. No core computation, position order, or refresh time inspects realized slot values. The lazy-superset state may do so, but is never fed back into core computation or the refresh schedule.

The invariants are $|K_t|\le\kappa$, $|R_t|\le2L$, and $T_t\subseteq S_t\subseteq X_t$. At most $c$ positions change and one element arrives per update. At a block boundary the old tuple is the preceding block's completed new tuple, so the same insertion bound holds there. Consequently
\begin{equation}\label{eq:linear-recourse}
|S_t\triangle S_{t-1}|\le2(c+1)\le4B+2
\end{equation}
on every path, even when many target elements disappear at a recent-set reset. Initialization retains all arrivals, and the invariant applies to any final incomplete block.

To prove value, fix $t=qL+r$ and set $R_o=X_t\setminus X_{(q-1)L}$. This is a deterministic set once the stream and $t$ are fixed. Extend the old vector by zero on the newly available coordinates, and let $\mathcal G=\sigma(x^{(q-1)},x^{(q)})$. We first average only over the slot bits, conditional on $\mathcal G$. The active positions then have precisely the law in \cref{eq:slot-interpolation}, with $x=x^{(q-1)}$, $y=x^{(q)}$, and the common set $R_o$. Therefore, pointwise in the two cores,
\[
\E_{\rm slot}[f(T_t)\mid\mathcal G]
\ge (1-\lambda)H_{R_o}(x^{(q-1)})
   +\lambda H_{R_o}(x^{(q)}),
\qquad \lambda=j_t/\kappa.
\]
Only now take expectation over the core computations. For either snapshot, its free-set benchmark with $R_o$ dominates $\OPT_\kappa(X_t)$: if $O_t$ attains $\OPT_\kappa(X_t)$, then $O_t\setminus R_o$ belongs to both snapshots, has size at most $\kappa$, and
$f((O_t\setminus R_o)\cup R_o)\ge f(O_t)$.
Applying the two marginal core guarantees after the conditional slot inequality proves
\[
\E f(S_t)\ge\E f(T_t)\ge\alpha\OPT_\kappa(X_t)
\ge\alpha(1-2/B)\OPT_k(X_t).
\]
For $q=1$, $R_o=X_t$, so $H_{R_o}(0)=f(X_t)\ge\OPT_\kappa(X_t)$ and the all-null old endpoint satisfies the required inequality directly. This order of expectations is essential: we condition slot sampling on the cores, but never condition a core guarantee on realized slot values. The argument uses the overlap lemma at the new snapshot, since $R_o$ includes coordinates from that snapshot.

For coverage, the branch $k<2B$ recomputes greedy and uses fewer than $4B$ changes. Choose $B=\lceil2/\varepsilon\rceil$ and static optimization, failure, and active-slot total-variation losses each at most $\varepsilon/12$. Then
\[
(\aconst-\varepsilon/6)(1-2/B)-\varepsilon/12\ge\aconst-\varepsilon.
\]
Indeed, $2/B\le\varepsilon$, so the loss is at most $(\aconst+\nicefrac{1}{4})\varepsilon<\varepsilon$. The finite-bit argument in \cref{app:slot-bits} couples only the at most $\kappa$ active slots. Both coupled targets are feasible, so the loss is measured against $\OPT_k(X_t)$ rather than an unbounded free-future value.

Finally, fix a constant $0<\delta<1-\aconst$ and suppose a randomized polynomial-time online algorithm achieved coefficient $\aconst+\delta$. Use the standard Max-$k$-Cover gap reduction with gap parameter $\delta/4$: it is NP-hard to distinguish a YES instance in which $k$ sets cover the whole universe of weight $W$ from a NO instance in which every $k$ sets cover at most $(\aconst+\delta/4)W$ \citep{Fei98}. Supply the sets in any fixed stream order. Because the coverage instance is explicit, every current value-oracle query and the value of the final output can be evaluated in polynomial time.

In the YES case, the final output value $V\in[0,W]$ satisfies
$\E V\ge(\aconst+\delta)W$; in the NO case, every outcome satisfies
$V\le(\aconst+\delta/4)W$. Run the online algorithm independently
$O(\delta^{-2})$ times and compare the empirical mean with any threshold strictly between these two constants. Hoeffding's inequality gives a bounded-error distinguisher. Since $\delta$ is fixed, this is a BPP algorithm for the NP-hard gap problem, implying $\mathrm{NP}\subseteq\mathrm{BPP}$. This proves the final claim in \cref{thm:coverage}, even without a recourse restriction. It is a computational hardness statement, not an unconditional query lower bound for the coverage promise.

\subsection{The Role of Matroid Representation}

A full matroid-rank-sum function has the form $f=\sum_a w_a r_a$, where each $r_a$ is the rank of a matroid on the entire current-and-future ground set. \citet{DRY11} proved that its Poisson extension is concave; the following verification includes the overlapping contractions needed above.

\begin{lemma}[Poisson concavity for matroid ranks]\label{lem:mrs-concavity}
For every fixed $R$, including $R\cap X\ne\varnothing$, the response $H_R$ of a nonnegative matroid-rank sum is concave on $\R_{\ge0}^X$.
\end{lemma}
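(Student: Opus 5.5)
The plan is to reduce to a single matroid rank and a single direction. Since $H_R$ is a nonnegative combination $\sum_a w_a H_R^{r_a}$, it suffices to show that each $H_R^{r}$, the response of one rank function $r$ on the full ground set, is concave. Following \cref{lem:overlap}, adjoining the fixed set $R$ makes the coordinates in $R\cap X$ irrelevant. I would therefore pass to the contraction $r_R(S)=r(S\cup R)-r(R)$ on $X\setminus R$. This is again a matroid rank function, namely the rank of the contraction $M/R$ restricted to $X\setminus R$, plus a constant. So $H_R^r$ equals a constant plus the Poisson extension of a matroid rank on $X\setminus R$, composed with the projection that forgets the coordinates in $R$. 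Concavity is preserved under composition with a linear map, so the overlapping case reduces to the disjoint case with no extra work. It then suffices to prove that the Poisson extension $H(x)=\E r(Z_x)$ of a matroid rank $r$ on a finite set is concave on $\R_{\ge0}^X$.

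For that I would use the Hessian. By the Poisson derivative formula already used in \cref{lem:scale}, $\partial_iH(x)=e^{-x_i}\E[r(i\mid Z_{-i})]$. The second derivatives are $\partial_{ii}H=-\partial_iH$, and for $i\ne j$, $\partial_{ij}H=e^{-x_i-x_j}\E[\Delta_{ij}r(Z_{-ij})]$, where $\Delta_{ij}r(S)=r(S+i+j)-r(S+i)-r(S+j)+r(S)\le0$. Concavity means that $d^\top\nabla^2H\,d\le0$ for every direction $d$. Conditioning on $S=Z_{-\cdot}$ does not decouple cleanly, so I would instead write the quadratic form as an expectation over a single Poisson set with two extra independent marks. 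Concretely, I would express $d^\top\nabla^2H\,d$ as $\E$ of a quadratic form $Q_S(d)=-\sum_i d_i^2\,[\text{$i$ is free over }S]+\sum_{i\ne j}d_id_j\,\Delta_{ij}r(S)$, with the appropriate $e^{-x}$ weights. The key matroid fact is that, conditional on $S$, the free elements (those with $r(i\mid S)=1$) split into parallel classes of the contraction $M/S$ restricted to the free elements. Within such a class, $\Delta_{ij}r(S)=-1$ exactly when $i,j$ are parallel in $M/S$, and otherwise $\Delta_{ij}r(S)=0$ on the loops-and-coloops level. Checking this requires care, because two free elements can fail to be parallel while still having $\Delta_{ij}=0$, and $\Delta_{ij}r(S)\in\{0,-1\}$ always holds for a rank function. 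The pointwise quadratic form then becomes $-\sum_{\text{classes }P}\bigl(\sum_{i\in P}\text{weighted }d_i\bigr)^2$ plus a nonpositive diagonal remainder, which is $\le0$.

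The main obstacle is the weighting. The diagonal term carries weight $e^{-x_i}$, while the off-diagonal terms carry $e^{-x_i-x_j}$, and the two are evaluated at different conditioning sets ($Z_{-i}$ versus $Z_{-ij}$). To align them I would condition everything on $Z_{-ij}$-type sets by using the representation $\E_S[\cdot]$ over $S=Z_x$ together with the identity $e^{-x_i}\E[g(Z_{-i})]=\E[g(Z)\one\{i\notin Z\}]$. This rewrites the form as $\E_{Z}\bigl[-\sum_i d_i^2\one\{i\notin Z,\ i\text{ free}\}+\sum_{i\ne j}d_id_j\one\{i,j\notin Z\}\Delta_{ij}r(Z)\bigr]$. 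Both terms then live on the same set $Z$, and the parallel-class decomposition of $M/Z$ gives, per class $P$, $-\sum_{i\in P}d_i^2-\sum_{i\ne j\in P}d_id_j\cdot(\text{parallel})\ge$-bounded by $-(\sum_{i\in P}d_i)^2\cdot$, which is at most $0$ as long as the off-diagonal $-1$ entries form a union of cliques. I expect verifying that $\Delta_{ij}r(Z)=-1$ defines an equivalence relation on the free elements (parallelism in $M/Z$) to be the step needing the most care. If that fails, I would fall back on the \citet{DRY11} argument, reproving only the contraction reduction above.
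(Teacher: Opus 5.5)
Your proposal is correct and follows essentially the paper's route: after the weight alignment $e^{-x_i}\E[g(Z_{-i})]=\E[g(Z)\one\{i\notin Z\}]$, the Hessian becomes the expectation of the discrete matrix at $Z$. In the contraction this matrix is $-\sum_C\one_C\one_C^{\mathsf T}$ over nonloop parallel classes $C$, exactly as in the paper; your contraction-and-projection treatment of overlapping $R$ is equivalent to the paper's treating $R\cup Z$ as loops. The step you flagged is routine and needs no fallback to \citet{DRY11}: if nonloops $i,j$ and $j,k$ are parallel in $N=M/Z$, submodularity gives $r_N(\{i,j,k\})\le r_N(\{i,j\})+r_N(\{j,k\})-r_N(\{j\})=1$, so parallelism is transitive.
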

\begin{proof}
It suffices to consider one matroid rank $r$. Let $Z$ be the Poisson union at intensity $x$. The Hessian of $\E r(R\cup Z)$ is the expectation of the discrete matrix whose diagonal is $-r(i\mid R\cup Z)$ and whose off-diagonal entry is
\[
r(R\cup Z+i+j)-r(R\cup Z+i)-r(R\cup Z+j)+r(R\cup Z).
\]
This identity also accounts for the absent-coordinate factors: a sampled coordinate is a loop in the corresponding contraction. In the contraction by $R\cup Z$, the matrix equals
\[
-\sum_C\one_C\one_C^{\mathsf T},
\]
where $C$ ranges over the nonloop parallel classes. It is negative semidefinite. Expectation and nonnegative weighted summation preserve this property, proving the claim.
\end{proof}

Consequently the scale core and the slot converter already give $\bconst-\varepsilon$ with $O(\varepsilon^{-1})$ recourse using only the aggregate current oracle. To obtain a fractional core, average the scale routine's iterates as in Appendix~\ref{app:price-averaging}; concavity preserves every fixed future response.

Stronger prices can be computed when the persistent list of component rank oracles and weights is supplied. A current principal partition divides each component into density blocks. Assigning price $e^{-q}$ to a block of density $q$ yields
\[
H_R(x)-\aconst f(O\cup R)\ge\ip{G(x)}{x-\one_O}
\]
for every future extension of those same components. \Cref{app:principal} proves the certificate, the polynomial rank-query implementation, and the averaging argument needed at nonsmooth points. Together with the slot converter, it gives $\aconst-\varepsilon$ with $O(\varepsilon^{-1})$ hard recourse.

This theorem does not recover a hidden decomposition. The current aggregate function need not determine its principal prices, and an arbitrary decomposition of a current restriction need not extend to the actual future. The represented and hidden-MRS models therefore remain distinct.

\section{Discussion}\label{sec:discussion}

The matching bounds isolate a computational cost of consistency that is absent from ordinary offline maximization. The obstruction is neither limited post-arrival computation nor an adaptive adversary: one obliviously chosen final element may reveal exactly which old elements are useful, and the algorithm may then make unlimited queries. The loss occurs because those elements were hidden while the algorithm could still move gradually, and become identifiable only when a hard recourse bound prevents installing them. The exact constant $2-\sqrt2$ is therefore a joint information--movement threshold, not merely an oracle-hardness or stability constant in isolation. The curvature law quantifies how this loss disappears as a larger modular component can be preserved without approximation.

Universal prices separate existence from computation even more directly. Every current submodular function admits a product-response certificate at the offline coefficient $1-\nicefrac{1}{e}$, yet any uniform oracle procedure computing such a certificate above $2-\sqrt2$ needs exponentially many queries. Coverage escapes this barrier because its Poisson response is concave, while represented matroid-rank sums admit explicit principal-partition prices. These positive algorithms query no actual future and assume no distribution on future arrivals, but the represented-MRS result genuinely uses its persistent component access.

Our scope deliberately leaves several dynamic-algorithm requirements aside. The results concern insertion-only streams, an oblivious adversary, fixed-time expected approximation, and pathwise symmetric recourse. They allow infeasible current queries, retention of the full prefix, and polynomial rather than sublinear update work. Accordingly, the sharp threshold should not be read as a space lower bound, an amortized-recourse theorem, or a guarantee that holds simultaneously with high probability at every time. Deletions, adaptive arrival orders, feasible-query-only access, and small-memory implementations may have different thresholds.

Several quantitative questions remain open. What is the minimum hard recourse needed to attain $\bconst-\varepsilon$ as $\varepsilon\downarrow0$---in particular, must it diverge? What is the full approximation curve when recourse is a fixed fraction of $k$, interpolating between the constant-recourse and unrestricted regimes? Can the polynomial-query upper bound be implemented with substantially smaller storage and worst-case update time? Finally, can aggregate value queries alone attain $1-\nicefrac{1}{e}$ for matroid-rank sums, without access to their components? The present results determine the general and curvature-dependent constant-recourse thresholds while leaving these finer resource tradeoffs unresolved.

\renewcommand{\bibfont}{\small}
\setlength{\bibsep}{8pt plus 1pt minus 1pt}
\bibliographystyle{plainnat}
\bibliography{references}
\appendix
\section{Bounded-Work Oracle Algorithms and Fixed-Bit Sampling}\label[appendix]{app:algorithms}

This appendix supplies the computational details used by the main algorithms. For a nonzero nonnegative base value, apply the algorithm to $f-f(\varnothing)$ and add the base back. This preserves every approximation coefficient at most one. In the bit model, current oracle answers are exact rationals of polynomial encoding length. Runtime is measured in the observed prefix size, the input bit length, and $1/\varepsilon$. Randomized routines have deterministic work caps. An inaccurate estimate can decrease expected value, but never invalidates feasibility or the recourse bound. For a positive integer $m$, write $\operatorname{bitlength}(m)=\lceil\log_2m\rceil$, with $\operatorname{bitlength}(1)=0$; ``encoding length'' for an integer or rational has its usual binary-numerator-and-denominator meaning.

\subsection{The Checkpoint Schedule}\label[appendix]{app:checkpoint-details}

\begin{lemma}[Lazy feasible superset]\label{lem:lazy}
Suppose $T_0=S_0=\varnothing$, $T_t\subseteq X_t$, $|T_t|\le k$, and $|T_t\setminus T_{t-1}|\le D$. In an insertion-only stream one can maintain $T_t\subseteq S_t\subseteq X_t$, $|S_t|\le k$, and $|S_t\triangle S_{t-1}|\le2D$ on every path.
\end{lemma}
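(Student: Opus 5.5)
The plan is to define the update rule explicitly and check the invariants by induction on $t$. Given $S_{t-1}$ with $T_{t-1}\subseteq S_{t-1}\subseteq X_{t-1}$ and $|S_{t-1}|\le k$, first form $S'=S_{t-1}\cup T_t$. This inserts exactly the elements of $T_t\setminus S_{t-1}$. Next remove elements of $S'\setminus T_t$, in any fixed order, until the size is at most $k$, and call the result $S_t$. This removal is always possible, because $T_t$ itself has size at most $k$, so deleting only non-target elements can bring the size down to $k$. Containment in $X_t$ holds since $S_{t-1}\subseteq X_{t-1}\subseteq X_t$ and $T_t\subseteq X_t$. By construction $T_t\subseteq S_t$ and $|S_t|\le k$.

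For recourse, the key observation is that the insertions are bounded by the growth of the target, not of the set. Every element inserted lies in $T_t\setminus S_{t-1}\subseteq T_t\setminus T_{t-1}$, using the inductive invariant $T_{t-1}\subseteq S_{t-1}$. Hence at most $D$ elements are inserted. The removals happen only to restore capacity. Since $|S_{t-1}|\le k$ and at most $D$ elements are added, $|S'|\le k+D$, so at most $D$ removals are needed, provided we remove only as many as necessary. Thus $|S_t\triangle S_{t-1}|\le 2D$ on every path.

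The one point needing care, and the only real obstacle, is that a block-boundary reset may make many target elements disappear at once. The rule above never removes elements merely because they left the target; they persist as harmless surplus until capacity forces eviction. So the symmetric difference remains controlled by the insertion count alone, regardless of how much $T_t\setminus T_{t-1}$ versus $T_{t-1}\setminus T_t$ changes. We must also note that the removed elements are chosen by a rule depending only on the current state, which is deterministic given the targets. The base case $S_0=T_0=\varnothing$ starts the induction, completing the proof.
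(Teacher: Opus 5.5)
Your proposal is correct and follows the paper's proof essentially verbatim. Both insert $T_t\setminus S_{t-1}\subseteq T_t\setminus T_{t-1}$ (at most $D$ elements, using $T_{t-1}\subseteq S_{t-1}$), then evict only non-target elements of $S_{t-1}\setminus T_t$ and only as many as the capacity excess. Since $|S_{t-1}|\le k$, that excess is at most the number of insertions, giving recourse at most $2D$.
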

\begin{proof}[Proof of \cref{lem:lazy}]
Let $I_t=T_t\setminus S_{t-1}$ and first form $U_t=S_{t-1}\cup I_t$. Because $T_{t-1}\subseteq S_{t-1}$,
\[
|I_t|\le |T_t\setminus T_{t-1}|\le D.
\]
Set $d_t=\max\{0,|U_t|-k\}$. Since $T_t\subseteq U_t$ and $|T_t|\le k$, the set $U_t\setminus T_t=S_{t-1}\setminus T_t$ contains at least $d_t$ elements. Delete exactly $d_t$ of them in a fixed order and call the result $S_t$. Then $T_t\subseteq S_t$, $|S_t|\le k$, and $d_t\le|I_t|$: before the insertions, $|S_{t-1}|\le k$, so the capacity excess cannot exceed the number inserted. Therefore
\[
|S_t\triangle S_{t-1}|=|I_t|+d_t\le2D.
\]
When the target only shrinks, $I_t=\varnothing$ and no deletion is performed. All retained elements belong to the insertion-only prefix $X_t$.
\end{proof}

We now give the full update rule and proof of \cref{lem:checkpoint}. All statements in this subsection use ideal sampling; the following subsection supplies the finite-bit implementation for the anchored core.

\begin{proof}[Proof of \cref{lem:checkpoint}]
If $k<4B^2$, recompute ordinary greedy after every arrival. Its approximation is at least $1-\nicefrac{1}{e}$ and its symmetric recourse is at most $2k<8B^2$. Otherwise put
\[
L=\lfloor k/B\rfloor,\qquad \kappa=k-2L,\qquad
W=\lfloor L/B\rfloor,\qquad c=\lceil\kappa/W\rceil\le2B^2.
\]
These integers satisfy $L\ge4B$, $W\ge1$, and $BW\le L$. The bound on $c$ follows, for example, from $W>k/B^2-1-1/B$ and $k\ge4B^2$. Number the core's positions from $1$ to $\kappa$, padding with nulls and ordering real elements by a fixed input order. Let $A_0$ be the all-null tuple on $X_0=\varnothing$.

The update rule is given in \cref{alg:checkpoint}. Snapshot computation follows the displayed-set update, so the new tuple is first used at the next arrival.

\begin{algorithm}[t]
\caption{Checkpoint migration with a random window}
\label{alg:checkpoint}
\small
\setlength{\parskip}{0pt}

\noindent\textbf{Parameters:}
$L=\lfloor k/B\rfloor$, $\kappa=k-2L$,
$W=\lfloor L/B\rfloor$, $c=\lceil\kappa/W\rceil$.
\par\smallskip

\noindent\textbf{Initialization:}
$A_0=(\bot,\ldots,\bot)\in(V\cup\{\bot\})^\kappa$ and
$S_0=\varnothing$, where $\bot$ denotes a null entry.
\par\medskip

\noindent\textbf{Upon insertion at time $t$:}
\begin{algorithmic}[1]
  \If{$t\le L$}
    \State $T_t\gets X_t$; $S_t\gets X_t$
  \Else
    \State Write $t=qL+r$, where $q\ge1$ and $1\le r\le L$
    \If{$r=1$}
      \State Sample $J_q$ uniformly from $\{1,\ldots,B\}$
    \EndIf
    \State $a_q\gets(J_q-1)W$;
      $j_t\gets\min\{\kappa,c\max\{0,r-a_q\}\}$
    \State $K_t\gets
      \bigl(\{A_q[i]:1\le i\le j_t\}
      \cup\{A_{q-1}[i]:j_t<i\le\kappa\}\bigr)
      \setminus\{\bot\}$
    \State $R_t\gets X_t\setminus X_{(q-1)L}$;
      $T_t\gets K_t\cup R_t$
    \State Update $S_t$ using the lazy-superset rule
      (\cref{lem:lazy})
  \EndIf
  \If{$t$ is a multiple of $L$}
    \State $q\gets t/L$
    \State Compute a fresh core tuple $A_q$ on $X_t$
  \EndIf
\end{algorithmic}
\end{algorithm}

All core computations and window choices use independent random bits. The value of $J_q$ is sampled once and kept fixed throughout its block.

The definition of $j_t$ completes migration by the end of the chosen window, since $cW\ge\kappa$. Each target uses at most $\kappa$ core positions and $2L$ recent elements, so it is feasible. Inside a block, at most $c$ positions change and the recent suffix gains one arrival, giving $|T_t\setminus T_{t-1}|\le c+1$.

At a block boundary, the preceding block's completed tuple becomes the old tuple for the next block. The first update of the new block replaces at most $c$ positions, while the recent set only discards old elements and adds the new arrival. Thus $|T_t\setminus T_{t-1}|\le c+1$ also holds across block boundaries. In particular, choosing $J_q=1$ starts migration at the first update and obeys this same bound.

The lazy-superset rule maintains $T_t\subseteq S_t$ throughout, with large-$k$ recourse at most $2(c+1)\le4B^2+2$. In the small-$k$ branch, consecutive greedy sets have symmetric difference at most $2k<8B^2$. Hence the uniform bound $8B^2+2$ covers both branches, including recent-set resets.

Fix a time $t=qL+r$. The stream is fixed by an oblivious adversary. Except when $a_q<r\le a_q+W$, the tuple is a complete old or new core. For this fixed $r$, the disjoint migration windows contain it for at most one value of $J_q$, so the exceptional event has probability at most $1/B$; it is empty in the unused tail $BW<r\le L$.

Before migration define
\[
R_t^{\rm old}=X_t\setminus X_{(q-1)L},
\qquad
h_t^{\rm old}(S)=f(S\cup R_t^{\rm old})\quad(S\subseteq X_{(q-1)L}).
\]
This is a legal branch fixed independently of the bits used for $A_{q-1}$, and the target is exactly $A_{q-1}\cup R_t^{\rm old}$. Moreover, for every $Q\subseteq X_t$ of size at most $\kappa$, the set $Q\cap X_{(q-1)L}$ is a feasible comparator and
\[
h_t^{\rm old}(Q\cap X_{(q-1)L})
=f((Q\cap X_{(q-1)L})\cup R_t^{\rm old})\ge f(Q).
\]
Thus its free-future benchmark is at least $\OPT_\kappa(X_t)$.

After migration instead define
\[
R_t^{\rm new}=X_t\setminus X_{qL},
\qquad
h_t^{\rm new}(S)=f(S\cup R_t^{\rm new})\quad(S\subseteq X_{qL}).
\]
This branch is fixed independently of $A_q$, and the same comparator argument shows that its benchmark is at least $\OPT_\kappa(X_t)$. The maintained target is $A_q\cup R_t^{\rm old}$, which contains $A_q\cup R_t^{\rm new}$, so monotonicity transfers the new-core guarantee to the displayed target. Hence at every nonexceptional time its expected value is at least $\alpha\OPT_\kappa(X_t)$. During the first block the target is $X_t$ and is exact. On the exceptional event use nonnegativity.

Finally, a uniform $\kappa$-subset of an optimal $k$-set, padded by nulls if necessary, has expected value at least $(\kappa/k)\OPT_k(X_t)$. Hence $\OPT_\kappa(X_t)\ge(1-2/B)\OPT_k(X_t)$. The first block is exact. A final incomplete block needs no special operation: the same fixed-time argument applies before the stream stops.
\end{proof}

\subsection{The Anchored Core and Random Windows}\label[appendix]{app:finite-anchored}

The greedy chain uses $O(n\kappa)$ oracle calls. Its rational values give rational coefficients for \cref{eq:core-lp}. The upper-hull algorithm of \cref{lem:core-dual} uses $O(\kappa^2\log(\kappa+1))$ arithmetic operations. The two selected mixture probabilities have polynomial bit length. There is no need for an exact optimum value or a search over unknown future scales.

Fix an integer $\ell\ge1$. Round the first of at most two mixture weights down to a multiple of $2^{-\ell}$ and put the remaining mass on the other action. This changes the mixture law by total variation at most $2^{-\ell}$. For an anchored completion with $r$ chosen elements among $n'$ candidates, let $M_c=\binom{n'}r$ and $N=2^{\operatorname{bitlength}(M_c)+\ell}$. Draw a uniform integer in $\{0,\ldots,N-1\}$, reduce it modulo $M_c$, and unrank the combination. If $U_m$ denotes the uniform law on $\{0,\ldots,m-1\}$ and $N=qM_c+s$ with $0\le s<M_c$, the exact total variation distance from the uniform residue law is
\begin{equation}\label{eq:modulo-tv}
d_{\rm TV}(U_N\bmod M_c,U_{M_c})
=\frac{s(M_c-s)}{M_cN}
\le\frac{M_c}{4N}
\le2^{-\ell-2}.
\end{equation}
In particular, the coarser bound $2^{-\ell}$ used below holds. Binomial coefficients and combination unranking use polynomially many exact integer operations. All possible combinations are feasible, even when the sampling law is not exactly uniform.

Thus the implemented core differs from the ideal law by at most $2^{1-\ell}$. For every fixed future $h$, each output has value in $[0,P_h]$. Hence its coefficient is at least $\bconst-2^{1-\ell}$. The error is relative to $P_h$, not an unbounded absolute function-value error.

For window selection, draw $\operatorname{bitlength}(B)+2\ell$ bits and reduce modulo $B$. The same calculation as \cref{eq:modulo-tv} shows that every window has probability at most $1/B+2^{-2\ell}$. These bits are independent of the core bits. Here and below a fixed-time total-variation comparison means that, after prescribing one time $t$, we couple only the core and window variables that determine the target at that time. It does not assert one simultaneous coupling for all times, and its error therefore does not accumulate over completed checkpoints. The fixed-time coefficient is therefore at least
\begin{equation}\label{eq:finite-core-factor}
(\bconst-2^{1-\ell})(1-2/B)(1-1/B-2^{-2\ell}).
\end{equation}
Take $B=\lceil6/\varepsilon\rceil$ and $2^{1-\ell}\le\varepsilon/2$. Then $2^{-2\ell}\le\varepsilon/4$, and the total loss in \cref{eq:finite-core-factor} is at most
\[
\frac\varepsilon2+\frac{3\bconst}{B}+\frac{\bconst\varepsilon}{4}
\le\left(\frac12+\frac{3\bconst}{4}\right)\varepsilon<\varepsilon.
\]
The pathwise bound is $8B^2+2$, including the small-$k$ branch.

A binary-encoded cardinality much larger than the prefix must not force allocation of $k$ objects. Before a block boundary, the algorithm retains all arrivals and need not allocate dummy slots or cores. At the first boundary, $L$ elements have arrived and $k<B(L+1)$, so subsequent arrays of size $O(k)$ are polynomial in the observed prefix and $B$. Likewise, when $n\le\kappa$ a static set-valued core simply returns the full prefix. Dummy elements are conceptual, permanently null under every extension, and are omitted from displayed sets and oracle queries.

\subsection{A First-Order Gap in Polynomial Work}\label[appendix]{app:stationarity}

The following routine applies to the normalized scale potential $\Phi_T/(1+T)$, the hybrid potential $\Xi$, and the coverage potential $H$. The same bounds, with extra slack in the range, also cover unnormalized $\Phi_T$. Write the chosen potential as $J$. On $P_\kappa$, the needed properties are
\begin{equation}\label{eq:smooth-bounds}
0\le J(x)\le2\kappa M,\qquad
0\le\partial_iJ(x)\le2M,\qquad
|\partial_{ij}J(x)|\le M.
\end{equation}
For scale potentials, the Hessian bound follows by integrating $t\partial_{ij}H(tx)$ and using $T^2/2\le1$. A discrete second difference of a monotone submodular function has magnitude at most a singleton value, so each Poisson Hessian entry has magnitude at most $M$. For $\Xi$, the modular part contributes no Hessian. Its range is in fact at most $\kappa M$. If $M=0$, every current marginal is zero and the required certificate is immediate. If $n\le\kappa$, use $x=\one$, which has zero first-order gap because all gradient coordinates are nonnegative.

Assume $n>\kappa$ and $M>0$. Fix $\eta\in(0,1)$ and put
\[
a=\frac{\eta}{16\kappa},\qquad
\theta=\frac{\eta}{16\kappa^2},\qquad
I=\left\lceil\frac{128\kappa^3}{\eta^2}\right\rceil+1.
\]
Start at $x_i=\kappa/n$. At each iteration estimate the gradient by $\widehat g$. Let $y$ indicate its top $\kappa$ coordinates, with deterministic tie-breaking. If $\ip{\widehat g}{y-x}\le\eta M/2$, return $x$. Otherwise update $x\leftarrow x+\theta(y-x)$. Return the current feasible point if the work cap is reached. Estimates may be clipped to the known nonnegative coordinate bounds without increasing their error.

Suppose every estimate used is accurate to $aM$ in infinity norm. Since every feasible direction has $\ell_1$ norm at most $2\kappa$, each linear-objective error is at most $\eta M/8$. At stopping, the true gap is at most $5\eta M/8$. A nonstopping iteration has true directional derivative greater than $3\eta M/8$. The Hessian bound in \cref{eq:smooth-bounds} then gives an improvement of at least
\[
\theta\frac{3\eta M}{8}-2\kappa^2M\theta^2
=\frac{\eta^2M}{64\kappa^2}.
\]
The range bound excludes $I$ nonstopping iterations. In particular, no global maximization of a nonconcave potential has been assumed.

For the scale part, the current-query estimator is
\[
\partial_i\Phi_T(x)
=T\E_{t\sim U[0,T]}\left[e^{-tx_i}\E g(i\mid Z_{-i,tx})\right].
\]
The hybrid estimator adds the known modular weight and divides the scale part by $1+T$. Each random term uses two current queries and is bounded by $2M$. The coverage estimator is \cref{eq:coverage-gradient}.

Here are sufficient fixed-bit choices. Sample $t=Tj/J_0$ uniformly over $j=0,\ldots,J_0-1$, where $J_0$ is a power of two and $J_0\ge8\kappa/a$. The derivative integrand is $\kappa M$-Lipschitz in $t$, so the scale discretization contributes bias at most $2\kappa M/J_0\le aM/4$. Approximate each exponential to absolute error at most $a/(8n)$ by a downward dyadic approximation. Product coupling, including the outside exponential factor, contributes at most another $aM/4$ of bias. With
\[
N_0=\left\lceil\frac8{a^2}\log\frac{2nI}{\delta_{\rm fail}}\right\rceil
\]
samples per coordinate, Hoeffding's inequality \citep{Hoe63} bounds a sampling error larger than $aM/2$ by $\delta_{\rm fail}/(nI)$. An integer upper bound for the logarithm can be used. The estimate remains valid conditional on the adaptive optimization history. A union bound over the at most $nI$ estimates proves that all are accurate with probability at least $1-\delta_{\rm fail}$.

All exponential arguments here lie in $[0,2]$. Alternating Taylor bounds after a fixed initial number of terms, followed by dyadic rounding, give the requested accuracy with polynomial bit complexity. A rational $T\le\sqrt2$ within any prescribed accuracy is obtained by integer square root and dyadic scaling. The fixed rational step has only polynomially many iterations, so the coordinate denominators have polynomial bit length. The deterministic query count is at most $1+n+2nIN_0$, namely
\begin{equation}\label{eq:oracle-work}
O\left(n\kappa^5\eta^{-4}\log\frac{n\kappa}{\eta\delta_{\rm fail}}\right).
\end{equation}
Feasibility holds even when an estimate is inaccurate. There is no rejection sampling with an unbounded number of trials.

\subsection{Rounding and the Hybrid Error Budget}\label[appendix]{app:hybrid-bits}

Pad with null coordinates, if necessary, so that the fractional mass equals the integer $\kappa$. Randomized pair rounding preserves each coordinate's mean and moves along two-coordinate exchange directions until an integral vector remains. The multilinear extension of a submodular function is convex along every such direction. Thus the mean-preserving endpoint choice cannot decrease its expected value \citep{CVZ10}. The argument applies to $S\mapsto f(S\cup R)$ for every fixed $R$, using the same future-oblivious rounding law. There are at most the padded dimension, hence $O(n+\kappa)$, many rounding steps. Approximating each transition probability with sufficiently many fixed bits gives total variation at most a prescribed $\rho$, while every outcome remains feasible. Since every rounded core has future value at most $P_R$, the loss is at most $\rho P_R$.

For completeness, choose $\eta=\delta_{\rm fail}=\rho=\varepsilon/64$ in the hybrid core and choose rational $T\le\sqrt2$ with $\sqrt2-T\le\varepsilon/64$. The ratio $T/(1+T)$ is then within $\varepsilon/64$ of $\bconst$. Because $M\le P_R$ and $g(O\cup R)\le P_R$, the static expected loss relative to $\max_{|O|\le\kappa}\{\ell(O\cup R)+\bconst g(O\cup R)\}$ is at most $\varepsilon P_R/16$. At a checkpoint's usable time, $|R|\le2L$ and $\kappa+|R|\le k$, so $P_R\le\OPT_k(f,X_t)$. The small-$k$ branch uses $R=\varnothing$ and the same static routine.

Take $B=\lceil16/\varepsilon\rceil$ and implement the random windows so that their probabilities are at most $1/B+\varepsilon/64$. Subsampling loses at most $2/B$ times $\OPT_k(f)$, and the bad-window loss is at most $(1/B+\varepsilon/64)\OPT_k(f)$. Adding the static loss, all losses are less than $\varepsilon\OPT_k(f)$. The recourse is $8B^2+2$. This proves the explicit bound stated with \cref{thm:hybrid,thm:curvature-main}.

\subsection{Only Active Slot Bits Matter}\label[appendix]{app:slot-bits}

For a snapshot with $n$ coordinates and capacity $\kappa$, round each categorical probability $x_i/\kappa$ down to a multiple of $2^{-b}$ and assign unused mass to the null. Choose $\kappa n2^{-b}\le\rho$. A draw takes exactly $b$ bits and differs from its ideal law by at most $n2^{-b}\le\rho/\kappa$ in total variation.

At any fixed time, the active position indices and their snapshot labels are deterministic, and there are at most $\kappa$ of them. Conditional on all fractional-core computations, these draws are independent because core computation never inspects slot values. Couple just those draws, using each snapshot's own $n$ and bit precision. Their joint total variation is at most $\rho$. The recent set is determined by time, and both coupled targets are feasible subsets of $X_t$, so the objective loss is at most $\rho\OPT_k(X_t)$. There is no accumulation over all past refreshes or all future times. The lazy-superset output may depend on older randomness, but it dominates the current target pointwise, which is all the approximation proof needs. Every approximate slot realization obeys the same pathwise insertion bound.

\subsection{Averaging Arbitrary Bounded Prices}\label[appendix]{app:price-averaging}

This appendix proves the general conversion in \cref{lem:price-to-core} and records the deterministic approximation calculation used by the represented-MRS algorithm. No small first-order gap or differentiable potential is required.

\begin{proof}[Proof of \cref{lem:price-to-core}]
If $M=0$, marginal domination makes every legal future constant on current coordinates, so return the empty set. Otherwise let $I$ be the least power of two at least $16\kappa n/\eta^2$, and set
\begin{equation}\label{eq:price-olo}
\lambda=\frac{\eta}{4nM},\qquad
x^0=0,\qquad
x^{s+1}=\Pi_{P_\kappa}(x^s+\lambda p^s)
\quad(0\le s<I).
\end{equation}
Projection onto the capped simplex uses exact rational water filling. For each $y\in P_\kappa$, nonexpansiveness of projection and squared-distance telescoping give, on every trajectory,
\begin{equation}\label{eq:price-core-regret}
\frac1I\sum_{s=0}^{I-1}\ip{p^s}{y-x^s}
\le
\frac{\|y\|_2^2}{2\lambda I}
+\frac{\lambda}{2I}\sum_{s=0}^{I-1}\|p^s\|_2^2
\le
\frac{\kappa}{2\lambda I}+\frac{\lambda nM^2}{2}
\le\frac{\eta M}{4}.
\end{equation}
The calculation uses only the coordinate bounds, so it remains valid for adaptive randomized prices. Taking expectations of \cref{eq:price-history}, summing, and applying \cref{eq:price-core-regret} with $y=\one_O$ yields
\[
\frac1I\sum_{s=0}^{I-1}\E H_h(x^s)
\ge\alpha h(O)-(\xi+\eta)M.
\]

Choose a fresh uniform index $U\in\{0,\ldots,I-1\}$, independently of the price history. Apply mean-preserving pair rounding to $x^U$, padding with null coordinates to an integer total when necessary. This rounding uses only the fractional coordinates. For each fixed $h$, its multilinear extension $F_h$ is convex on exchange directions; hence the rounded feasible set obeys
\[
\E h(A)\ge\E F_h(x^U)\ge\E H_h(x^U).
\]
Maximize over the fixed comparator $O$ and use $M\le P_h$. The same computation, index-selection law, and rounding law apply to every legal future; no union bound over futures is needed.

Since $I$ is a power of two, $U$ uses exactly $\log_2 I$ unbiased bits. There are only polynomially many rounding decisions. Replacing their probabilities by dyadic approximations with total variation at most $\tau$ has a deterministic polynomial bit budget, as in \cref{app:hybrid-bits}. Both the ideal and implemented outputs have size at most $\kappa$, so their values lie in $[0,P_h]$ and the loss is at most $\tau P_h$. This proves the stated bound.
\end{proof}

\paragraph{Deterministically approximated prices and a fractional core.}
Suppose $0\le G_i(x)\le M$, and let $\widehat G_i$ lie in $[0,M]$ with coordinate error at most $aM$, where $a=\eta/(16\kappa)$. Run \cref{eq:price-olo} using $\widehat G$. The same norm bounds and $\|y-x^s\|_1\le2\kappa$ give
\begin{equation}\label{eq:price-regret}
\frac1I\sum_{s=0}^{I-1}\ip{G(x^s)}{y-x^s}
\le\frac{\kappa}{2\lambda I}+\frac{\lambda nM^2}{2}+2\kappa aM
\le\frac{3\eta M}{8}.
\end{equation}
If every fixed-future response is concave, its value at the average iterate is at least its average value. This gives the fractional core used by the slot algorithm. Without concavity, \cref{lem:price-to-core} instead rounds a randomly selected iterate. The represented-MRS prices have guaranteed deterministic accuracy, so the reliability premise is satisfied directly.

\subsection{The Finite-Capacity Gain}\label[appendix]{app:finite-core-gain}

\begin{corollary}[Finite-capacity improvement]\label{cor:finite-core}
For every integer $\kappa\ge1$, the mixture in \cref{eq:core-lp} satisfies $\E h(A)\ge\beta_\kappa P_h$ for every legal future, where
\[
\beta_\kappa:=\min\left\{\frac35,
\frac{2\sqrt{2-1/(4\kappa^2)}+1/\kappa}
{2+2\sqrt{2-1/(4\kappa^2)}+1/\kappa}\right\}>\bconst.
\]
\end{corollary}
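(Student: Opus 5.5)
We will rerun the proof of \cref{thm:anchored}, but keep the grid discretization instead of passing to the continuous differential inequality \cref{eq:greedy-differential}. The target is a finite-$\kappa$ strengthening of the final inequality $(1-\gamma)^2\le\gamma^2/2$. As in that proof, \cref{lem:core-dual} reduces matters to showing that for every dual multiplier $\eta\ge0$ the value $\gamma=\max_\ell(a_\ell+\eta b_\ell)$ is at least $\beta_\kappa$. The case $\eta=0$ gives $\gamma=1$. Otherwise we may assume $\gamma<\nicefrac35$, since otherwise the first entry of the minimum already gives the claim. We then take the supporting line \cref{eq:line} with $c+d=\gamma$ and $c,d\ge0$.

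The key step is to work directly with the grid sequence $s_j=\eta v_j$. These values satisfy $s_0=0$, $s_j\le c+dq_j$, and
\[
s_{j+1}-s_j=\eta D_j/\kappa\ge(1-c-dq_j)/\kappa,
\]
with $q_j=j/\kappa$. Summing the last bound up to an integer $J$ gives a Riemann left sum
\[
s_J\ge\frac1\kappa\sum_{j<J}(1-c-dq_j)=(1-c)q_J-\frac d2q_J^2+\frac{d}{2\kappa}q_J .
\]
The extra term $dq_J/(2\kappa)$ is exactly the finite-$\kappa$ gain over the integral. Combining this with $s_J\le c+dq_J$ yields
\[
c\ge(1-\gamma)q_J-\frac d2q_J^2+\frac{d}{2\kappa}q_J .
\]
Integrating to $J=2\kappa$ still gives $d\ge 2-3\gamma>0$ (or better), so the continuous optimizer $q_*=(1-\gamma)/d$ again lies in $(0,2)$.

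We then choose $J$ as the grid point nearest to $q_*$. This introduces a rounding loss of at most $d/(8\kappa^2)$, because $|q_J-q_*|\le 1/(2\kappa)$ and the quadratic has curvature $d$. The result is an inequality of the form
\[
2cd\ge\Bigl(1-\gamma+\frac{d}{2\kappa}\Bigr)^2-\frac{d^2}{4\kappa^2},
\]
or something equivalent after collecting terms. We then substitute $d\le\gamma$ where the sign is favorable and $2cd\le(c+d)^2/2=\gamma^2/2$. This produces a quadratic inequality in $\gamma$ whose smaller root should be $\beta_\kappa$. Setting $\sigma=\sqrt{2-1/(4\kappa^2)}$, the root has the form $\gamma\ge(2\sigma+1/\kappa)/(2+2\sigma+1/\kappa)$. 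A contradiction with $\gamma<\beta_\kappa$ then finishes the lower bound.

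The strict inequality $\beta_\kappa>\bconst$ follows separately. The function $\sigma\mapsto(2\sigma+\epsilon)/(2+2\sigma+\epsilon)$ is increasing in $\epsilon=1/\kappa$. Its decrease in $\sigma$ is dominated, since $\sigma$ differs from $\sqrt2$ only by $O(\kappa^{-2})$. Also $\nicefrac35>\bconst$.

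The main obstacle is the bookkeeping in the middle step. We must track exactly how the $+d/(2\kappa)$ Riemann gain and the $-d^2/(4\kappa^2)$ rounding loss combine, and choose where to replace $d$ by $\gamma$, so that the resulting quadratic has precisely the stated root. We will first try optimizing over the real $q$ and then correct by the worst grid offset $1/(2\kappa)$. If the constants do not match, we will compare with the two neighbouring grid points instead.
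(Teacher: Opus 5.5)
Your plan follows the paper's proof: rerun the dual argument of \cref{thm:anchored} while keeping the exact grid slack. Your Riemann-sum bound $c\ge G(q_J)$, with $G(q)=(1-\gamma)q-\frac{d}{2}q^2+\frac{d}{2\kappa}q$, is correct, and so is the consequence $d\ge2-3\gamma$. The rounding step is miscomputed, however. Since $G'(q_*)=\frac{d}{2\kappa}\neq0$, $G$ peaks at $q_*+\frac{1}{2\kappa}$, not at $q_*$. The grid point nearest $q_*$ can therefore cost up to $\frac{3d}{8\kappa^2}$ relative to $G(q_*)$ when it lies below $q_*$. That leaves $\sqrt{2-3/(4\kappa^2)}$ where $\sigma$ should appear, which falls short of $\beta_\kappa$ for large $\kappa$. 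Round up instead. Take $J=\lceil\kappa q_*\rceil\le2\kappa$ and $\delta=q_J-q_*\in[0,1/\kappa)$; then $G(q_J)-G(q_*)=\frac{d}{2}\delta(\frac{1}{\kappa}-\delta)\ge0$, so $2cd\ge(1-\gamma)^2+(1-\gamma)d/\kappa$. This is exactly your displayed inequality once its two $d^2/(4\kappa^2)$ terms cancel. It is slightly stronger than the paper's bound, which integrates the interpolated $s$ to the real point $u=q_*$ and pays an extra $-d^2/(4\kappa^2)$ for the partial cell.

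The genuine gap is the closing step. The bound $2cd\le\gamma^2/2$ decouples $c$ from $d$. After that, the gain term $(1-\gamma)d/\kappa$ needs a lower bound on $d$, and $d\le\gamma$ points the wrong way. Using $d\le\gamma$ only on $-d^2/(4\kappa^2)$ and discarding $d/(2\kappa)$ inside the square yields a coefficient below $\bconst$. The natural lower bound $d\ge2-3\gamma$ improves on $\bconst$ by only about $(5\sqrt2-7)/\kappa\approx0.071/\kappa$ to first order. But $\beta_\kappa$ requires an improvement of $(3-2\sqrt2)/(2\kappa)\approx0.086/\kappa$.

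The missing idea is to keep the coupling $c=\gamma-d$ exactly and optimize over $d$. This gives $(1-\gamma)^2\le Ad-Bd^2\le A^2/(4B)$ with $A=2\gamma-(1-\gamma)/\kappa$, which is positive because $d>0$. The paper's inequality yields $B=2-1/(4\kappa^2)$, which is where $\sigma$ comes from. Your corrected inequality yields $B=2$, hence $\gamma\ge(2\sqrt2+\tau)/(2+2\sqrt2+\tau)$ with $\tau=1/\kappa$. This dominates the radical in $\beta_\kappa$, because $(2\sigma+\tau)/(2+2\sigma+\tau)=1-2/(2+2\sigma+\tau)$ increases with $\sigma$.

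Finally, your argument for $\beta_\kappa>\bconst$ is only asymptotic, but the claim is needed for every $\kappa\ge1$. Write $2\sigma+\tau=\sqrt{8-\tau^2}+\tau$ and use $(\tau+\sqrt{8-\tau^2})^2=8+2\tau\sqrt{8-\tau^2}>8$, together with $\nicefrac{3}{5}>\bconst$.
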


\begin{proof}[Proof of \cref{cor:finite-core}]
Fix a dual multiplier and use $\gamma,c,d,s$ from the proof of \cref{thm:anchored}. If $\gamma\ge\nicefrac{3}{5}$, the claim follows immediately. Otherwise $\eta>0$, and the same integration to two gives $d\ge2-3\gamma>0$. Hence
\[
u:=\frac{1-\gamma}{d}\in(0,2).
\]
Keep the exact lower bound on each grid cell:
\[
s'(t)\ge1-c-d\frac{\lfloor\kappa t\rfloor}{\kappa}
=1-c-dt+d\left(t-\frac{\lfloor\kappa t\rfloor}{\kappa}\right)
\quad\text{a.e. on }[0,2].
\]
Write $u=m/\kappa+r$, where $m=\lfloor\kappa u\rfloor$ and $0\le r<1/\kappa$. The accumulated grid-cell slack satisfies
\[
\int_0^u\left(t-\frac{\lfloor\kappa t\rfloor}{\kappa}\right)dt
=\frac{u}{2\kappa}+\frac{r^2-r/\kappa}{2}
\ge\frac{u}{2\kappa}-\frac1{8\kappa^2}.
\]
Integrate the derivative bound to $u$ and compare with $s(u)\le c+du$. Substituting $c+d=\gamma$ and the definition of $u$ yields
\[
c\ge\frac{(1-\gamma)^2}{2d}
+\frac{1-\gamma}{2\kappa}-\frac{d}{8\kappa^2}.
\]
Set $A=2\gamma-(1-\gamma)/\kappa$ and $B=2-1/(4\kappa^2)>0$. Multiplying by $2d$ and using $c=\gamma-d$ gives
\[
(1-\gamma)^2\le Ad-Bd^2\le\frac{A^2}{4B}.
\]
The first inequality implies $A>0$ because $d>0$ and $\gamma<1$. Taking square roots and rearranging gives
\[
\gamma\ge\frac{2\sqrt B+1/\kappa}{2+2\sqrt B+1/\kappa}.
\]
Either this bound or $\gamma\ge\nicefrac{3}{5}$ holds for every dual multiplier, so \cref{lem:core-dual} proves the stated guarantee. For strict improvement over $\bconst$, put $\tau=1/\kappa\in(0,1]$ and observe that $\tau+\sqrt{8-\tau^2}>\sqrt8$. For all sufficiently large $\kappa$, the radical expression is below $3/5$ and is therefore the active branch of the minimum defining $\beta_\kappa$. Its expansion at $\tau=0$ gives
\[
\beta_\kappa=\bconst+\frac{3-2\sqrt2}{2\kappa}+O(\kappa^{-2}).\qedhere
\]
\end{proof}

The corollary uses the existing LP and sampling law. The total-variation calculation in \cref{app:finite-anchored} therefore also gives coefficient $\beta_\kappa-2^{1-\ell}$ for the implemented static core. The uniform bound $\bconst$ suffices for all online guarantees stated in the paper.

\section{Exact Rational Hard Instances}\label[appendix]{app:profiles}

This appendix supplies explicit analytic and arithmetic bounds for \cref{lem:flat,lem:rational-profiles,prop:finite-hard-instance}. The final oracle uses fixed piecewise rational polynomials. Its indistinguishability is exact at every finite $k$.

\subsection{Construction of the Exact Profiles}\label[appendix]{app:profile-construction}

We verify the three properties used in \cref{sec:lower}. Retain the scalar profiles $v,w$ and constants $b,c$ from \cref{eq:vw}, and fix rational $T\in[\nicefrac{7}{5},\nicefrac{3}{2}]$ and integer $m\ge32$.

For $(x,y)\in[0,1]\times[0,m]$, put
\[
a=1+1/m,\quad u=x-y/m,\quad t=ay,\quad s=x+y=t+u,
\quad r_m(s)=\frac{32}{m}(1-e^{-s}).
\]
The unmodified current and future profiles are
\begin{equation}\label{eq:FK}
F(x,y)=v(t)+uv'(t)+r_m(s),\qquad
K(x,y)=w(t)+uv'(t)+r_m(s).
\end{equation}
The current expression is a tangent upper bound for $v(s)$. It agrees with $v(s)+r_m(s)$ in value and gradient when $u=0$. Moreover, $K-F=w(t)-v(t)$ is nonnegative and coordinatewise nonincreasing. These are exactly the inequalities needed for the future element to have a nonnegative, decreasing marginal.

\subsubsection{Exact flattening and rational replacement}

Approximate agreement cannot hide information from an exact value oracle. We instead create an interval on which agreement is algebraically exact. Let $\delta=1/(64m)$ and define
\begin{equation}\label{eq:clamped}
\begin{aligned}
d(u)&=\clip(u,-\delta,\delta),\qquad z=t+d(u),\qquad q=u-d(u),\\
F_\delta(x,y)&=v(z)+qv'(z)+r_m(s),\\
K_\delta(x,y)&=K(x,y)+32\delta e^{-s}.
\end{aligned}
\end{equation}
The clipping keeps $d(u)$ between $0$ and $u$. Hence the tangency
point $z=t+d(u)$ lies on the closed segment between $t$ and
$s=t+u$, while
\[
q=s-z=u-d(u),\qquad |q|\le |u|\le1.
\]
These elementary bounds apply on every piece of the profile and will be
used in the derivative estimates. The small future correction makes the
cross-future gradient inequality strict.

\begin{lemma}[Compatible exact flattening]\label{lem:flat}
On $[0,1]\times[0,m]$, the two profiles are $C^1$ with locally Lipschitz gradients. Each coordinate derivative lies in $[0,4]$, and every second coordinate derivative is nonpositive almost everywhere, including diagonal derivatives. In fact, each first derivative is at least $16e^{-s}/m$ and each second derivative is at most $-16e^{-s}/m$. Furthermore,
\begin{equation}\label{eq:flat-properties}
\begin{gathered}
F_\delta(x,y)=v(s)+r_m(s)\quad\text{when }|u|\le\delta,\\
K_\delta\ge F_\delta,\qquad
\partial_iF_\delta-\partial_iK_\delta\ge16\delta e^{-s}
\quad(i=x,y).
\end{gathered}
\end{equation}
\end{lemma}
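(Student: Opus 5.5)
The approach is direct computation on the pieces determined by the clip. The domain $[0,1]\times[0,m]$ splits into the band $|u|\le\delta$, where $d(u)=u$, $q=0$, $z=s$, and the two outer regions $u>\delta$ and $u<-\delta$, where $d$ is constant $\pm\delta$, $z=t\pm\delta$, and $q=u\mp\delta$. The flat-band identity is then immediate: on the band, $F_\delta=v(s)+0\cdot v'(s)+r_m(s)$, which is independent of the partition. The inequality $K_\delta\ge F_\delta$ will come from the tangent-line property of the concave function $v$. Indeed, $F_\delta=v(z)+(s-z)v'(z)+r_m\ge v(s)+r_m$ always, and $F_\delta\le F$ because $F$ is the tangent at $t$, which dominates. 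More precisely, I would compare $K=w(t)+uv'(t)+r_m\ge v(t)+uv'(t)+r_m=F$ with $F\ge F_\delta$. For the latter inequality I would show that $\phi(\zeta)=v(\zeta)+(s-\zeta)v'(\zeta)$ has $\phi'(\zeta)=(s-\zeta)v''(\zeta)$. As $\zeta$ moves from $t$ toward $s$, the factor $s-\zeta$ keeps the sign of $u$ and $v''\le0$, so $\phi$ is monotone toward $v(s)$. Hence $F_\delta$ lies between $v(s)+r_m$ and $F$, and adding $32\delta e^{-s}\ge0$ only helps.

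Next I would compute gradients in the coordinates $(t,u)$ using $\partial_x=\partial_u$ and $\partial_y=a\partial_t-\tfrac1m\partial_u$. On an outer piece, $F_\delta=v(z)+qv'(z)+r_m(s)$ with $\partial_t z=1$, $\partial_u z=0$, and $\partial_u q=1$, which gives
\[
\partial_tF_\delta=v'(z)+qv''(z)+r_m'(s),\qquad \partial_uF_\delta=v'(z)+r_m'(s).
\]
On the band, both partial derivatives equal $v'(s)+r_m'(s)$. These formulas agree across $|u|=\delta$ because $q=0$ there, so $F_\delta$ is $C^1$. Its gradient is piecewise built from $v',v''$, which are Lipschitz since $v\in C^2$ with bounded $v'''$, so the gradient is locally Lipschitz. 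For $K_\delta$, the formula $w(t)+uv'(t)$ has gradient given by $w'(t)+uv''(t)$ and $v'(t)$. Here $w$ is $C^1$ with Lipschitz derivative, and the correction $32\delta e^{-s}$ is smooth. The bounds $0\le\partial\le4$ will follow from $0\le v'\le b(T+1)\le1$, $|v''|\le b\le1/2$, $|q|\le\delta$ on outer pieces, $a\le 2$, and $r_m'\le 32/m\le1$. The lower bound $\ge16e^{-s}/m$ comes from $r_m'(s)=32e^{-s}/m$. The nonnegativity of $v'(z)+qv''(z)$ is delicate near the tail, where $v'$ is small. There I would use that in the tail $v''=-v'$, so $v'+qv''=v'(1-q)\ge0$ since $|q|\le1$. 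On the quadratic piece, $v'\ge b$ is bounded below while $|qv''|\le b|q|$. The $-\tfrac1m\partial_u$ term is absorbed using $a\partial_t\ge\partial_t$ together with the $r_m'$ slack. I would carefully check that its size $v'(z)/m$ is dominated by $a\,v'(z)$'s excess $v'(z)/m$ up to the $qv''$ terms, leaving the regularizer to cover the remainder.

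For the second derivatives (a.e., piecewise), I would compute the Hessian entries of $F_\delta$ and $K_\delta$ in $(x,y)$ on each piece. The exponential regularizer contributes $r_m''=-32e^{-s}/m$ to every entry, since $\partial_xs=\partial_ys=1$. The main task is then to show that the remaining profile contributions are at most $16e^{-s}/m$ in the positive direction. On the band, they equal $v''(s)\le0$. On the outer pieces, they involve $v''(z)$, $qv'''(z)$, and mixed terms of order $v''/m$. This is where the regularizer of size $32/m$ and the choice $\delta=1/(64m)$ are used, and I expect it to be the main obstacle. Mixed entries such as $\partial_x\partial_yF_\delta=a v''(z)+r_m''-\tfrac1m(\ldots)$ may carry positive pieces like $-\tfrac1m v''$, which must be shown to be dominated. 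I would first try to bound every positive contribution by $|v''|/m+\delta|v'''|\le b/m+\delta b$. After that, I would check against $32e^{-s}/m$ using $e^{-s}\ge e^{-(1+m)}$. This fails for large $s$ unless the tail terms themselves decay like $e^{-t}$, so in the tail I would instead use $v'',v'''\propto e^{-(z-T)}$ and $s-z\le1$ to compare with $e^{-s}$.

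Finally, for the cross inequality, $\partial_iF_\delta-\partial_iK_\delta$ will equal $\partial_i(F_\delta-K)+32\delta e^{-s}$, since $\partial_i e^{-s}=-e^{-s}$. I would show $\partial_i(F_\delta-K)\ge-16\delta e^{-s}$. The paper notes that $K-F=w(t)-v(t)$ is coordinatewise nonincreasing, so $\partial_i(F-K)\ge0$. It then remains to bound $\partial_i(F_\delta-F)$. This difference vanishes in value and gradient as $\delta\to0$; on outer pieces it is controlled by $\delta|v''|$ terms. I would verify that these are at most $16\delta e^{-s}$, once more using the exponential form of the tail to match the $e^{-s}$ weight.
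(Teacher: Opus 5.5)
Your overall architecture matches the paper's. You use piecewise formulas on the band and the two clipped regions, and the tangent-point monotonicity $\phi'(\zeta)=(s-\zeta)v''(\zeta)$ for the flat band and for $K_\delta\ge F\ge F_\delta$. You use the regularizer for the sign margins, and the perturbation bound $\|\nabla F_\delta-\nabla F\|_\infty\le16\delta e^{-s}$ together with $\nabla K\le\nabla F$ for the cross margin. However, the proposal rests on a false claim. On the outer pieces you assert $|q|\le\delta$, but $q=u\mp\delta$ and $u=x-y/m$ ranges over $[-1,1]$, so only $|q|\le|u|\le1$ holds. For the unclipped $K$ the corresponding coefficient is $u$ itself. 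The error is harmless for the upper gradient bound. It is also harmless for the gradient lower bound: on the tail $v'(z)+aqv''(z)=v'(z)(1-aq)\ge-v'(z)/m$, which your ``regularizer covers the remainder'' step handles. But it breaks your Hessian step, which you rightly call the crux.

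Concretely, on the outer pieces
\[
\partial_{yy}\bigl(v(z)+qv'(z)\bigr)=a(1-1/m)v''(z)+a^2qv'''(z),
\]
and on the exponential tail $v'''(z)=be^{T-z}>0$. When $q$ is near one (for instance $x=1$, $y=2$), the positive piece $a^2qv'''(z)=a^2qbe^{T+q}e^{-s}$ is about $3e^{-s}$. The regularizer supplies only $32e^{-s}/m\le e^{-s}$. So your plan to bound every positive contribution by $|v''|/m+\delta|v'''|$ is false, and the choice $\delta=1/(64m)$ cannot rescue it. The same problem occurs for $(K_0)_{yy}=a(1-1/m)v''(t)+a^2uv'''(t)$ on the tail, where $u$ is visibly of order one.

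The missing idea is a cancellation. On the tail $v'''=-v''=v'$, so the entry equals $a\,v'(z)\bigl(aq-(1-1/m)\bigr)$. Since $q\le1$ and $a=1+1/m$, this is at most $2a\,v'(z)/m$. Then $s-z=q\le1$ gives $v'(z)\le be^{T+1}e^{-s}<4e^{-s}$, so the entry is below $9e^{-s}/m$ and is absorbed by $r_m''=-32e^{-s}/m$. You must also handle the positive entry $(K_0)_{yy}=2ab/m$ on the linear piece of $w$, using $s\le T+1$ and $2abe^{T+1}<9$. In the paper, $\delta$ plays no role in the Hessian margins. It enters only through the future correction $32\delta e^{-s}=e^{-s}/(2m)$ and the gradient-perturbation estimate for the cross-future margin.
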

\begin{proof}
For a concave differentiable function, its tangent upper bound $v(z)+(s-z)v'(z)$ decreases when $z$ moves toward $s$. Consequently,
$v(s)+r_m(s)\le F_\delta\le F$. In the inner band, $z=s$ and $q=0$, giving exact equality. Outside the band, $d$ is constant. For the unregularized part $J=v(z)+qv'(z)$, differentiation gives
\begin{equation}\label{eq:flat-derivatives}
\begin{aligned}
J_x&=v'(z),&
J_y&=v'(z)+aqv''(z),\\
J_{xx}&=0,&
J_{xy}&=av''(z),\\
J_{yy}&=a(1-1/m)v''(z)+a^2qv'''(z).
\end{aligned}
\end{equation}
At a clipping boundary, $q=0$, so these gradients agree with the inner-band gradients. The matching derivatives of $v$ also give continuity at its profile junction.

Here the regularizer has a specific purpose. Before regularization, every gradient of either branch is at least $-10e^{-s}/m$, and any positive Hessian entry is at most $9e^{-s}/m$. These bounds follow by substituting $v''=-b,v'''=0$ on the quadratic piece and $v'=-v''=v'''=be^{T-z}$ on the tail. For example, the potentially positive current $yy$ derivative on the tail is at most $2a v'(z)/m\le9e^{-s}/m$. The future quadratic piece has $yy$ derivative $2ab/m$, also at most $9e^{-s}/m$. Adding $r_m$ adds $32e^{-s}/m$ to every gradient and subtracts it from every Hessian entry. The final future correction changes these margins by only $32\delta e^{-s}=e^{-s}/(2m)$. The asserted $16e^{-s}/m$ margins follow. The full coordinate calculations appear in \cref{app:profile-derivatives}.

For compatibility, $|v''|$ and the almost-everywhere $|v'''|$ on the segment between $t$ and $s$ are at most $4e^{-s}$. Comparing the outer gradients above with those of $F$, or comparing both with $v'(s)$ in the inner band, gives
\[
\|\nabla F_\delta-\nabla F\|_\infty\le16\delta e^{-s}.
\]
Since $\nabla K\le\nabla F$, subtracting $32\delta e^{-s}$ from each future gradient gives the strict cross-future margin. The value inequality follows from $K_\delta\ge K\ge F\ge F_\delta$. Integrating the almost-everywhere derivative bounds across the piecewise boundaries completes the proof.
\end{proof}

To define a finite-bit oracle, we replace exponentials by one fixed polynomial and differentiate that polynomial consistently. Set $L=m+1$, choose a positive rational $\tau$ as below, and take the smallest integer $N$ satisfying the two displayed tests:
\begin{equation}\label{eq:rational-parameters}
\tau\le\frac{\delta\,3^{-L}}{1024},\qquad
N\ge6L,\qquad 2^N\ge3^L/\tau,\qquad D=N+2.
\end{equation}
Replace $e^{-h}$ by $E_D(h)=\sum_{j=0}^D(-h)^j/j!$ in the tail of $v$, in $r_m$, and in the future correction. Use the actual derivative of the replaced $v$ in both tangent expressions. Denote the resulting profiles by $\widehat F_\delta,\widehat K_\delta$, and the replaced scalar profiles by $\widehat v,\widehat r_m$.

\begin{lemma}[Certified rational profiles]\label{lem:rational-profiles}
The rational profiles have the same exact flat band as in \cref{lem:flat}, with reference profile $\widehat v(s)+\widehat r_m(s)$. They are $C^1$ with locally Lipschitz gradients, are nonnegative and monotone, have coordinatewise diminishing gradients, and satisfy $\widehat K_\delta\ge\widehat F_\delta$ and $\nabla\widehat K_\delta\le\nabla\widehat F_\delta$. All coordinate gradients lie in $[0,5]$. Each value, first derivative, and almost-everywhere second derivative differs from its analytic counterpart by at most $4\tau$. The degree is $O(m+\log(1/\tau))$.
\end{lemma}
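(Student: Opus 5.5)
The plan is a perturbation argument: the polynomial replacement is uniformly $O(\tau)$-close to the analytic profiles in value, gradient and Hessian, while \cref{lem:flat} leaves margins of order $e^{-s}/m$. Every strict inequality therefore survives, and the one equality that must stay exact, the flat band, survives because it is algebraic.

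The first step is the truncated series. On the relevant range of arguments, $h\in[0,L]$ with $L=m+1$ (since $s\le 1+m$ and $z-T\le m+1$), I will use the alternating Taylor remainder for $E_D$ and its derivatives. The bound $|E_D^{(j)}(h)-(-1)^je^{-h}|\le h^{D-j+1}/(D-j+1)!$ holds because $E_D^{(j)}$ is, up to sign, $E_{D-j}$. For $j\le 3$ and $D=N+2\ge 6L+2$, Stirling gives $L^{D-2}/(D-2)!\le (eL/N)^{N}\le 2^{-N}\le\tau 3^{-L}$, by \eqref{eq:rational-parameters}. Each replaced term carries a prefactor $b$, $32/m$ or $32\delta$, all at most one, and the chain-rule factors $a\le 2$ and $|q|\le 1$ are bounded. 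Collecting the finitely many terms in \eqref{eq:flat-derivatives}, and the analogous future ones, therefore gives the claimed $4\tau$ bound for values, first derivatives and almost-everywhere second derivatives. Since $v$'s replaced tail is still differentiated consistently (we use $\widehat v'$ in the tangent), the expressions keep the same algebraic shape.

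The second step covers exactness and smoothness, which are structural. For $|u|\le\delta$ we have $z=s$ and $q=0$, so $\widehat F_\delta=\widehat v(s)+\widehat r_m(s)$ exactly, independent of the approximation quality. For $C^1$ with locally Lipschitz gradients, the gradient at clipping boundaries agrees because $q=0$ there, exactly as in \cref{lem:flat}. At the junction $t=T$ I must check that $\widehat v$ still matches $v$'s quadratic piece to first order. Indeed $E_D(0)=1$ and $E_D'(0)=-1$, so $\widehat v(T)$ and $\widehat v'(T)$ agree with the quadratic piece. Only the second derivative jumps, which is acceptable for Lipschitz gradients.

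The third step, which I expect to be the main obstacle, transfers the inequalities. The margins in \cref{lem:flat} are $16e^{-s}/m$ for gradients and Hessians, and $16\delta e^{-s}$ for the cross-future gap, while the perturbation is $8\tau$ after differencing two profiles. The difficulty is that $e^{-s}$ can be as small as $e^{-L}$. This is exactly why $\tau\le\delta 3^{-L}/1024$ carries the factor $3^{-L}$: $8\tau\le \delta e^{-L}/128<16\delta e^{-s}$, and a fortiori this is below the $1/m$ margins. So monotonicity, coordinatewise concavity, and $\nabla\widehat K_\delta\le\nabla\widehat F_\delta$ hold almost everywhere and hence, by integration across pieces, in the discrete sense used later. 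The value inequality $\widehat K_\delta\ge\widehat F_\delta$ follows from $K_\delta-F_\delta\ge 32\delta e^{-s}-(\text{gradient slack})$, or more simply by integrating the gradient domination from the origin, where $\widehat K_\delta(0,0)-\widehat F_\delta(0,0)=c+32\delta>0$. Nonnegativity follows from monotonicity and $\widehat F_\delta(0,0)=\widehat v(0)+\widehat r_m(0)=0$.

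Finally, the range $[0,5]$ comes from $[0,4]$ plus $4\tau$. The degree is $D=N+2$, where $N$ is the smallest integer with $N\ge6L$ and $2^N\ge3^L/\tau$, so $N=O(L+\log(1/\tau))=O(m+\log(1/\tau))$.
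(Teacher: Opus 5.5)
Your overall route is the paper's: a uniform Taylor error for $E_D$ and its first three derivatives, comparison against the margins of \cref{lem:flat}, and an algebraic flat band. However, the regularity step at the junction $T$ is wrong as written. You check only that $\widehat v$ and $\widehat v'$ match the quadratic piece, and then assert that ``only the second derivative jumps, which is acceptable for Lipschitz gradients.'' Both halves of that claim fail. The gradients of these tangent-type profiles contain $\widehat v''$ itself. Off the flat band, $\nabla\widehat F_\delta=\bigl(\widehat v'(z),\,\widehat v'(z)+aq\,\widehat v''(z)\bigr)$ plus the regularizer term, and $\partial_y\widehat K_\delta$ contains $au\,\widehat v''(t)$. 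A jump of $\widehat v''$ at $T$ would therefore make $\partial_y\widehat F_\delta$ discontinuous across $z=T$ wherever $q\ne0$, and $\partial_y\widehat K_\delta$ discontinuous across $t=T$ wherever $u\ne0$. The profiles would then not be $C^1$. Your Step~3 passes from a.e.\ Hessian signs to discrete diminishing returns ``by integration across pieces,'' and that step needs exactly this absolute continuity of the coordinate derivatives. An upward jump of a coordinate derivative would break submodularity on the count grid, whatever the a.e.\ signs. The lemma survives because $\widehat v''$ does not jump: $E_D''(0)=1$, so $\widehat v''(T^+)=-bE_D''(0)=-b$ agrees with the quadratic piece, and $\widehat v$ is $C^2$. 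This is the paper's remark that the first three Taylor coefficients are exact. Only $\widehat v'''$ jumps, and it enters only the Hessian, where a bounded jump is harmless.

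Second, your ``more simply'' argument for $\widehat K_\delta\ge\widehat F_\delta$ runs the wrong way. The inequality $\nabla\widehat K_\delta\le\nabla\widehat F_\delta$ makes $\widehat K_\delta-\widehat F_\delta$ coordinatewise nonincreasing, so integrating from the origin yields only the upper bound $c+32\delta$. Keep your first alternative, stated precisely. Since $F_\delta\le F$ and $K-F=(w-v)(t)\ge0$, one has $K_\delta-F_\delta\ge32\delta e^{-s}$. Your value errors then give $\widehat K_\delta-\widehat F_\delta\ge32\delta e^{-s}-8\tau>0$. The paper instead proves this with no approximation at all: $\widehat v$ stays concave, so the tangent argument gives $\widehat F_\delta\le\widehat F$; $\widehat K-\widehat F$ is the unchanged quadratic; and $32\delta E_D(s)>0$ by the certified signs of $E_D$. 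A minor point: with a per-term error of $\tau$, your factors $a\le2$ and prefactors at most one would give roughly $7\tau$ rather than $4\tau$. Your sharper remainder $\tau3^{-L}$ absorbs this, since it drops the paper's unnecessary $3^L$ factor; using $a\le33/32$ would also fix it.
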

\begin{proof}
For each derivative order $j\le3$, Taylor's theorem and \cref{eq:rational-parameters} give a uniform error at most $3^LL^N/N!\le3^L2^{-N}\le\tau$ on $[0,L]$. Each profile derivative is a linear combination of these errors with total absolute coefficient less than four. Thus the errors are smaller than the strict margins of \cref{lem:flat}, including the cross-future gradient margin. The replaced $v$ is concave. Its value and first two derivatives still match at $T$, since the first three Taylor coefficients are exact. Piecewise polynomiality, exact gradient matching at the clipping and profile boundaries, and bounded almost-everywhere Hessians give the stated $C^1$ and local-Lipschitz regularity. The future increment remains nonnegative algebraically because $w-v$ is the unchanged nonnegative quadratic before $T$ and zero afterwards, and $E_D$ is positive. Exact flattening follows from $z=s,q=0$, independently of approximation accuracy. \Cref{app:rational} gives the coefficient bounds and finite-bit details; \cref{eq:rational-margins} summarizes the retained sign margins.
\end{proof}

\Needspace{6\baselineskip}
\subsubsection{Validity on the full discrete ground set}

\begin{lemma}[Continuous signs certify the full discrete oracle]
\label{lem:continuous-to-discrete}
The function in \cref{eq:hidden-oracle} is normalized, nonnegative,
monotone, and submodular on all of $2^{X\cup\{r\}}$. This conclusion
also covers sets larger than the maintained capacity and finite
differences whose coordinate intervals cross one or more clipping or
profile-piece boundaries.
\end{lemma}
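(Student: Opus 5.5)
We reduce every discrete condition to one- and two-coordinate finite differences of the two profiles on the lattice $\frac1k\mathbb Z^2\cap([0,1]\times[0,m])$. We then control these differences by integrating the continuous gradient and Hessian sign information from \cref{lem:rational-profiles}.

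First, normalization and nonnegativity are immediate. At $(0,0)$ we have $u=0$, so the flat-band formula gives $\widehat v(0)+\widehat r_m(0)=0$. Nonnegativity then follows from monotonicity. By symmetry within $A$ and within $B$, a set $S\subseteq X$ is described by its counts $(\alpha,\beta)=(|S\cap A|,|S\cap B|)$. Adding a current element moves one count by $+1$, and adding $r$ switches from $\widehat F_\delta$ to $\widehat K_\delta$.

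Monotonicity of a current element $i$ is $\widehat F_\delta(p+e/k)-\widehat F_\delta(p)\ge0$ (and likewise for $\widehat K_\delta$), where $e$ is a coordinate vector. Monotonicity of $r$ is $\widehat K_\delta\ge\widehat F_\delta$. Submodularity is equivalent to nonincreasing marginals under single additions, and there are three kinds of pairs to check.

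\emph{Two current elements.} The marginal of a move $e$ must not increase after a move $e'$, for both profiles. If $e=e'$ is the same coordinate, this requires concavity along that coordinate. If they differ, it requires a nonpositive mixed second difference.

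\emph{A current element and $r$.} This requires $\widehat K_\delta(p+e/k)-\widehat K_\delta(p)\le\widehat F_\delta(p+e/k)-\widehat F_\delta(p)$.

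Each of these is an integral identity. A first difference equals $\int_0^{1/k}\partial_e\widehat F_\delta(p+se)\,ds$. A second difference equals $\int_0^{1/k}\!\int_0^{1/k}\partial_{e'}\partial_e\widehat F_\delta(p+se+s'e')\,ds\,ds'$. The cross condition integrates $\partial_e(\widehat K_\delta-\widehat F_\delta)$.

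These integral representations need justification across piece boundaries. The profiles are $C^1$ with locally Lipschitz gradients by \cref{lem:rational-profiles}. On each compact segment or square, the gradient is therefore Lipschitz and hence absolutely continuous along lines. The fundamental theorem of calculus then applies to the gradient itself, giving second differences as integrals of almost-everywhere second derivatives. This holds even when the segment crosses the clipping boundaries $|u|=\delta$ or the profile junction $z=T$.

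The one subtlety is that the mixed second derivative must be integrated over a square. We integrate along one line to express the inner difference by $\partial_e$. We then write the difference of $\partial_e$ along $e'$ as a line integral of $\partial_{e'}\partial_e$, which holds for almost every parallel line by Fubini for the locally Lipschitz $\partial_e$.

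With the representations in hand, the sign conditions of \cref{lem:rational-profiles} close each case:
\begin{itemize}
\item gradients lie in $[0,5]$, which gives monotonicity;
\item every second coordinate derivative is almost everywhere nonpositive, which gives the same-coordinate and mixed conditions;
\item $\nabla\widehat K_\delta\le\nabla\widehat F_\delta$ gives the cross condition;
\item $\widehat K_\delta\ge\widehat F_\delta$ gives monotonicity in $r$.
\end{itemize}

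The domain also needs care. Lattice points are confined to $x\in[0,1]$ and $y\in[0,m]$ because $|A|=k$ and $|B|=mk$, so all segments stay in the rectangle where the lemmas hold. Queries larger than $k$ need no separate treatment because the whole rectangle is covered.

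The main obstacle is the measure-theoretic step for mixed differences across nonsmooth boundaries. Piecewise polynomials with matched gradients are Lipschitz-gradient, but the almost-everywhere Hessian sign must hold on a full-measure set of each square, not merely on each piece's interior. We handle this by noting that the boundaries are finitely many lines ($u=\pm\delta$, $z=T$, or their preimages), which form a null set. On each open piece the Hessian is an honest polynomial with the signs certified in \cref{eq:rational-margins}.
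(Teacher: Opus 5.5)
Your proposal is correct and follows the paper's proof. Both arguments reduce normalization, monotonicity, and the five pair types $(A,A),(A,B),(B,B),(A,r),(B,r)$ to coordinate finite differences of $\widehat F_\delta,\widehat K_\delta$, and both close them with the sign conditions of \cref{lem:rational-profiles}, using absolute continuity of the locally Lipschitz gradients to integrate a.e.\ second-derivative bounds across clipping and junction boundaries. Your Fubini treatment of mixed differences is a more explicit version of the paper's segment-by-segment integration; the one case both leave implicit is a same-coordinate segment lying on a horizontal junction line such as $y=T/a$, and continuity of $\partial_e P$ settles it, since $\partial_e P$ is nonincreasing on nearby parallel lines and hence on the limiting line.
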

\begin{proof}
Write $P\in\{\widehat F_\delta,\widehat K_\delta\}$ for the branch
without or with $r$. The marginal of a current element of type $A$ is
the integral of $\partial_xP$ over an interval of length $1/k$; the
marginal of a type-$B$ element is the analogous integral of
$\partial_yP$. These marginals are nonnegative because the coordinate
gradients are nonnegative. The marginal of $r$ is
$D(x,y)=\widehat K_\delta(x,y)-\widehat F_\delta(x,y)$,
which is nonnegative by \cref{lem:rational-profiles}.

It remains to check diminishing returns for every unordered pair of
distinct element types. For an $(A,A)$ pair, the relevant $A$-marginal
decreases as $x$ increases because $P_{xx}\le0$ almost everywhere.
For an $(A,B)$ pair, it decreases as $y$ increases because
$P_{xy}\le0$ almost everywhere. The $(B,B)$ case follows from
$P_{yy}\le0$ almost everywhere. These three checks apply separately
to both branches $P$. For an $(A,r)$ pair, the cross-future inequality
$\partial_x\widehat K_\delta\le
\partial_x\widehat F_\delta$ says that adding $r$ cannot increase an
$A$-marginal. The $(B,r)$ case follows in the same way from the
$y$-gradient inequality. Equivalently, both partial derivatives of
$D$ are nonpositive, so the marginal of $r$ decreases after either type
of current element is added. These five cases---$(A,A)$, $(A,B)$,
$(B,B)$, $(A,r)$, and $(B,r)$---exhaust all distinct-element pairs.

The almost-everywhere signs suffice globally. By
\cref{lem:rational-profiles}, each first derivative is continuous and
locally Lipschitz on every coordinate segment, hence absolutely
continuous. Integrating the appropriate almost-everywhere second-
derivative inequality along that segment proves monotonicity of the
first derivative even when the segment crosses arbitrarily many piece
boundaries. Integrating once more gives the claimed discrete finite-
difference inequalities on the normalized count grid. Finally,
$\widehat F_\delta(0,0)=0$; monotonicity and $D\ge0$ give normalization
and nonnegativity.
\end{proof}

\subsection{Derivative Margins and Piecewise Boundaries}\label[appendix]{app:profile-derivatives}

We use $m\ge32$ and $7/5\le T\le3/2$, as in \cref{lem:flat}. Retain \cref{eq:vw,eq:FK,eq:clamped}. On the quadratic piece,
$v'=b(T+1-z),v''=-b,v'''=0$. On the tail,
$v'=-v''=v'''=be^{T-z}$. The value and first two derivatives of $v$ agree at its junction. The value and first derivative of $w$ agree there.

For the clamped current profile outside its band, $|q|\le1$ and the derivatives before regularization are \cref{eq:flat-derivatives}. On the quadratic piece,
\[
v'(z)+aqv''(z)=b(T+1-z-aq)\ge-b/m,
\qquad
\partial_{yy}(v(z)+qv'(z))=-a(1-1/m)b\le0.
\]
Here $z\le T$ and $q\le1$. On the tail, $v'(z)\le4e^{-s}$ because $s-z=q\le1$. Hence
\[
\partial_y(v(z)+qv'(z))=v'(z)(1-aq)\ge-4e^{-s}/m,
\]
while its $yy$ derivative equals
$a v'(z)(aq-(1-1/m))\le2a v'(z)/m<9e^{-s}/m$.
The $x$ derivative is positive, the $xx$ derivative is zero, and the mixed derivative is nonpositive. Inside the band the expression is $v(s)$, whose gradients are positive and Hessian entries nonpositive.

Write $K_0=w(t)+uv'(t)$ for the future profile before regularization. Direct differentiation gives
\begin{equation}\label{eq:K-derivatives}
\begin{aligned}
(K_0)_x&=v'(t),&
(K_0)_y&=aw'(t)-v'(t)/m+au v''(t),\\
(K_0)_{xx}&=0,&
(K_0)_{xy}&=av''(t),\\
(K_0)_{yy}&=a^2w''(t)-2av''(t)/m+a^2u v'''(t).
\end{aligned}
\end{equation}
On the quadratic piece, $w'=b,w''=0$, giving
$(K_0)_y\ge-b(T+1)/m$ and $(K_0)_{yy}=2ab/m$.
On the tail, $K_0$ agrees with the unregularized current profile, so the previous estimates apply with $z=t,q=u$.

Whenever a quadratic piece is evaluated, $s\le T+1$. For $0\le T\le3/2$,
\[
be^{T+1}<4,\qquad b(T+1)e^{T+1}<10,
\qquad 2ab e^{T+1}<9.
\]
The first two functions are increasing in $T$, as direct differentiation shows. Their endpoint bounds follow from $e^{5/2}<49/4$ and $b(3/2)=8/29$. The third uses $a\le33/32$. Thus both unregularized profiles have coordinate gradients at least $-10e^{-s}/m$ and Hessian entries at most $9e^{-s}/m$. In directions whose unregularized derivative already has the required sign, these remain valid lower or upper bounds.

The regularizer contributes $32e^{-s}/m$ to each gradient and $-32e^{-s}/m$ to each Hessian entry. The final future correction contributes $-e^{-s}/(2m)$ and $e^{-s}/(2m)$ respectively. Consequently the final current gradients are at least $22e^{-s}/m$ and its Hessian entries at most $-23e^{-s}/m$. The future bounds are $43e^{-s}/(2m)$ and $-45e^{-s}/(2m)$. In particular, both satisfy the more convenient margins $16e^{-s}/m$ in \cref{lem:flat}.

These are global diminishing-return statements, despite the piecewise formulas. At $|u|=\delta$, $q=0$, and the outer gradient $(v'(z),v'(z)+aqv''(z))$ equals the inner gradient $(v'(s),v'(s))$. At the profile junction, continuity of $v',v''$ and $w'$ gives the same conclusion. Each gradient is continuous and locally Lipschitz on the compact rectangle. Along a coordinate segment it is absolutely continuous, and integration of the almost-everywhere Hessian bound shows that each coordinate derivative decreases in both coordinates. This also proves all finite differences on the normalized count grid, including finite differences crossing several piecewise regions.

For cross-future compatibility, the tangent expression $v(z)+(s-z)v'(z)$ has derivative $(s-z)v''(z)$ in its tangency point. Moving $z$ from $t$ toward $s$ cannot increase it. Thus $F_\delta\le F$, while $K\ge F$ follows from \cref{eq:profile-identities}. To compare gradients, observe that $|z-t|\le\delta$, $|u-q|\le\delta$, and every point between $z$ and $t$ is within one of $s$. On this segment, $|v''|$ and the almost-everywhere $|v'''|$ are at most $4e^{-s}$. Outside the band, the $x$-gradient difference is at most $4\delta e^{-s}$. For the $y$ gradients it is at most
$4(1+2a)\delta e^{-s}<16\delta e^{-s}$.
Inside the band, comparison with $v'(s)$ gives a bound at most $4(1+a)\delta e^{-s}$. Therefore
\begin{equation}\label{eq:gradient-perturbation}
\|\nabla F_\delta-\nabla F\|_\infty\le16\delta e^{-s}.
\end{equation}
Since $K-F=w(t)-v(t)$ has zero $x$ derivative and nonpositive $y$ derivative, the future correction yields
\[
\partial_iF_\delta-\partial_iK_\delta\ge16\delta e^{-s}
\quad(i=x,y),
\qquad K_\delta\ge F_\delta.
\]
Finally, every coordinate gradient is at most four. Indeed $v',w',|v''|\le1$, $a\le33/32$, $|u|,|q|\le1$, and $32/m\le1$ in the displayed derivative formulas. The future correction only decreases its gradients.

\subsection{A Certified Finite Degree and Exact Arithmetic}\label[appendix]{app:rational}

Fix rational $T$, integer $m$, and rational $\tau$ as in \cref{eq:rational-parameters}. All exponential arguments lie in $[0,L]$, where $L=m+1$. Let $D=N+2$ and $E_D(h)=\sum_{j=0}^D(-h)^j/j!$. For $0\le j\le3$, its $j$th derivative is $(-1)^jE_{D-j}(h)$. Taylor's theorem gives
\begin{equation}\label{eq:taylor-certified}
\sup_{0\le h\le L}|E_D^{(j)}(h)-(-1)^je^{-h}|
\le 3^L\frac{L^{D+1-j}}{(D+1-j)!}
\le3^L\frac{L^N}{N!}\le3^L2^{-N}\le\tau.
\end{equation}
The middle inequality holds because $N\ge6L$ makes successive terms decrease. The next uses $N!\ge(N/e)^N$ and $e<3$. The two integer tests $N\ge6L$ and $2^N\ge3^L/\tau$ use exact rational arithmetic, so they certify the degree without numerical exponential evaluations. They give $D=O(m+\log(1/\tau))$.

The parameter choice also certifies all exponential signs through third order, rather than merely approximating them numerically. Indeed,
\[
0<\tau\le \frac{\delta3^{-L}}{1024}<e^{-L}\le e^{-h}
\qquad(0\le h\le L).
\]
Applying \cref{eq:taylor-certified} and using $E_D^{(j)}=(-1)^jE_{D-j}$ shows $E_{D-j}(h)>0$ for $j=0,1,2,3$. Consequently, throughout $[0,L]$,
\begin{equation}\label{eq:taylor-signs}
E_D>0,\qquad E_D'<0,\qquad E_D''>0,\qquad E_D'''<0.
\end{equation}
Thus the polynomial tail has exactly the derivative signs used in every value, gradient, and Hessian calculation below.

Use $1-bE_D(h-T)$ as the tail of $\widehat v(h)$ and leave its quadratic piece unchanged. Define $\widehat w$ by its unchanged linear piece and the same new tail. Put $\widehat r_m(s)=32(1-E_D(s))/m$. Every appearance of $v'$ in a tangent expression means the actual derivative $\widehat v'$ of this polynomial replacement. The constants $E_D(0)=1,E_D'(0)=-1,E_D''(0)=1$ ensure that the value and first two derivatives of $\widehat v$ match at $T$. The value and first derivative of $\widehat w$ also match. Thus all gradient continuity assertions used above survive exactly.

We now bound the error introduced by the polynomial replacement.
Set $\rho=32/m$ and $\chi=32\delta$.
Each required value or derivative of $v$ and $w$ changes by at most
$b\tau\le\tau$; the corresponding contributions from the regularizer
and the future correction are bounded by $\rho\tau$ and $\chi\tau$.
For $\mathcal D\in\{\mathrm{Id},\partial_x,\partial_y,
\partial_{xx},\partial_{xy},\partial_{yy}\}$, collecting coefficients
in the profile formulas and using $|u|,|q|\le1$ and $a\ge1$ gives
\[
\begin{aligned}
\bigl|\mathcal D\widehat F_\delta-\mathcal D F_\delta\bigr|
  &\le (a+a^2+\rho)\tau < 4\tau,\\
\bigl|\mathcal D\widehat K_\delta-\mathcal D K_\delta\bigr|
  &\le (2a^2+2a/m+\rho+\chi)\tau < 4\tau.
\end{aligned}
\]
The first estimate holds outside the flat band; inside it, the sharper
bound $(1+\rho)\tau$ holds. The strict inequalities follow from
$a\le33/32$, $1/m\le1/32$, $\rho\le1$, and $\chi\le1/64$.
Derivative estimates are initially taken away from the piecewise boundaries.

These errors are smaller than the analytic sign margins.
Indeed, $s\le L$, $\delta=1/(64m)$, and
$\tau\le\delta3^{-L}/1024$ imply
$4\tau\le8e^{-s}/m$ and $8\tau\le8\delta e^{-s}$.
Consequently, for $P\in\{F_\delta,K_\delta\}$ and $i,j\in\{x,y\}$,
\begin{equation}\label{eq:rational-margins}
\begin{aligned}
\partial_i\widehat P
  &\ge 16e^{-s}/m-4\tau
   \ge 8e^{-s}/m,\\
-\partial_{ij}\widehat P
  &\ge 16e^{-s}/m-4\tau
   \ge 8e^{-s}/m,\\
\partial_i\widehat F_\delta-\partial_i\widehat K_\delta
  &\ge 16\delta e^{-s}-8\tau
   \ge 8\delta e^{-s}.
\end{aligned}
\end{equation}
The upper gradient bound is $4+4\tau<5$.
Continuity extends the first-order inequalities to the boundaries.
The Hessian inequalities hold almost everywhere, and absolute continuity
of the coordinate gradients yields the corresponding finite-difference
inequalities across the boundaries.

For the future value inequality, approximation alone is unnecessary. The exact signs in \cref{eq:taylor-signs} imply $0<E_D(s)\le E_D(0)=1$ and $\widehat r_m(s)\ge0$. The new $\widehat v$ is concave because its second derivative on the tail is $-bE_D''<0$, with matching junction derivatives. The tangent argument gives $\widehat F_\delta\le\widehat F$. The difference $\widehat K-\widehat F$ is the original nonnegative quadratic before $T$ and identically zero afterwards. Adding $32\delta E_D(s)>0$ proves $\widehat K_\delta\ge\widehat F_\delta$. Moreover, $\widehat F_\delta(0,0)=0$, so monotonicity proves nonnegativity of both profiles.

The flat-band identity is exact independently of the error estimates. Whenever $|u|\le\delta$, one has $z=s,q=0$, and consequently
\[
\widehat F_\delta(x,y)=\widehat v(x+y)+\widehat r_m(x+y).
\]
All coefficients and breakpoints are fixed rationals chosen before the hidden partition. A query uses rational normalized counts and exact comparisons to select its pieces. Polynomial evaluation and canonical reduction return one well-defined rational answer. With $T,m,\tau$ fixed, the degree and coefficient denominators are constants. Each answer has $O_{T,m,\tau}(\log(k+1))$ bits and can be computed in polynomial time given $A$. In the lower-bound parameter order, a target gap $\zeta$ is fixed first and determines $T,m,\tau$; since $n=(m+1)k$, the same statement is the more informative $O_\zeta(\log n)$ answer-length bound. Thus the construction defines a rational value oracle, rather than a procedure that separately rounds answers of a real-valued oracle.

Two exact finite facts are useful in later applications. At $(1,0)$, only the unchanged quadratic and linear scalar pieces are used by the future branch, and
\[
\widehat K_\delta(1,0)
=1+\frac{32}{m}(1-E_D(1))+32\delta E_D(1)\ge1.
\]
At $(0,0)$ its value is $c+32\delta=c+1/(2m)$. If $k\ge64m$, every current singleton lies in the flat band, irrespective of its group, and its common value is exactly
\begin{equation}\label{eq:hard-singleton}
\sigma_k=\widehat v(1/k)+\frac{32}{m}(1-E_D(1/k)).
\end{equation}
In particular $0\le\sigma_k\le5/k$. The flat current value at $A$ need not equal its unmodified tangent value. Since $\delta<T$ in the main parameter range, its exact expression is
$\widehat F_\delta(1,0)=v'(0)-b\delta+b\delta^2/2+\widehat r_m(1)$.
This distinction is why the feasible-comparator argument uses the unflattened future branch.

\subsection{Concentration, Adaptive Transcripts, and Parameter Order}\label[appendix]{app:transcript}

Choose $A$ uniformly among the $k$-subsets of $n=(m+1)k$ current identifiers. For a fixed query set $S$, $Z=|A\cap S|$ is the number of marked elements when sampling $k$ identifiers without replacement. Hoeffding's inequality \citep{Hoe63} gives
\[
\Prob\{|u_A(S)|>\delta\}
\le2\exp\left(-\frac{2m^2\delta^2}{(m+1)^2}k\right)
\le2e^{-\delta^2k/2}.
\]
No bound on $|S|$ is used. The statement therefore covers all queries supported on arrived elements, including infeasible queries.

To handle adaptive queries, condition on the algorithm's complete random tape and run the capped algorithm against the exact reference oracle. Its queried sets and pre-arrival output are then fixed independently of $A$. Except with probability $2(Q+1)e^{-\delta^2k/2}$, all lie in the flat band. Induction up to the first possible differing answer couples the reference execution to the actual execution, including their final current outputs. The cap can be imposed on all transcripts even if the original complexity promise only concerns valid instances.

Averaging over the random tape preserves this probability bound. The final-update value bound is pathwise once the current output is fixed. It therefore permits unlimited post-arrival queries and arbitrary post-arrival computation. Every output remains feasible for the same optimum, and all hidden instances have that same optimum by permutation symmetry. Averaging over $A$ then yields a fixed hard instance with the claimed expected ratio. Its function and arrival order are fixed before the algorithm's random bits.

For a target gap $\zeta$, first choose rational $T$ such that $R(T)-\bconst<\zeta/4$, then fixed $m$ such that $76/m<\zeta/4$, and then fix $\tau,N,D$ by \cref{eq:rational-parameters}. None of these choices depends on $k$, the hidden set, or the algorithm's random tape. The exponent constant is $c_\zeta=1/(8192m^2)>0$. Finally choose $k$ sufficiently large for the prescribed query and recourse bounds. The explicit estimate in \cref{eq:finite-lower-ratio} shows that polynomially many queries and $o(k)$ changes cannot attain $\bconst+\zeta$. Conversely, a fixed improvement forces a linear number of final-update changes or an exponential number of pre-arrival queries. Since $n=(m+1)k=\Theta_\zeta(k)$, the latter is exponential also in the ground-set size for fixed $\zeta$.

\subsection{Expected Resources and Almost-Sure Termination}\label[appendix]{app:expected-resources}

Write $\epsilon_k=e^{-k/(8192m^2)}$ and let $N_A$ and $\Delta_A=|S_{n+1}\triangle S_n|$ denote the actual pre-arrival query count and final symmetric difference on instance $A$. These variables need not have deterministic bounds. We first prove \cref{cor:expected-hard-instance}, whose query promise holds on every valid instance, and then distinguish a weaker promise restricted to the hard family.

\begin{proof}[Proof of \cref{cor:expected-hard-instance}]
The reference profile $G(s)=\widehat v(s)+\widehat r_m(s)$ is nondecreasing and concave on $[0,m+1]$, with $G(0)=0$. Indeed the rationalization preserves the scalar derivative signs, including $E_D'<0$ and $E_D''>0$, and the junction derivatives match. Hence, for any fixed nonnegative rational $\gamma$, the function
\[
f_{\rm ref}(S)=G\!\left(\frac{|S\cap X|}{k}\right)
                 +\gamma\,\mathbf 1_{\{r\in S\}}
\]
is a normalized nonnegative monotone submodular function on the full ground set. Its current answers are exactly the reference answers. It has the same fixed polynomial degree and logarithmic answer-length bound as the hard family. The algorithm's uniform promise therefore gives an almost-surely terminating reference execution with pre-arrival query count $N_{\rm ref}$ satisfying $\E N_{\rm ref}\le\overline Q$.

Fix the complete random tape outside the null set on which this reference execution does not terminate. Its $N_{\rm ref}$ queried sets and its final current output form a finite list independent of $A$. The fixed-set bound \cref{eq:hypergeo} and a union bound give conditional failure probability at most
\[
\min\{1,2(N_{\rm ref}+1)\epsilon_k\}.
\]
Averaging over the tape bounds the failure probability by $2(\overline Q+1)\epsilon_k$. Off the failure event, induction up to the first different answer couples the entire pre-arrival execution to the real one, and the real output $S_n$ is balanced. This argument conditions on an almost-surely finite transcript; it requires neither a deterministic query cap nor an expected running-time bound.

The balanced-payoff estimate and the global marginal bound now give, on each good real path,
\[
f_A(S_{n+1})\le R(T)+38/m+5\Delta_A/k.
\]
The actual final output is feasible even on bad paths, so its approximation ratio there is at most one. The common optimum of the hard family is at least $1-5/k$. Averaging over uniform $A$ and the random tape, and bounding the good-event contribution of the nonnegative $\Delta_A$ by its unconditional expectation, yields
\[
\E_{A,\omega}\frac{f_A(S_{n+1})}{\OPT_k(f_A)}
\le\frac{R(T)+38/m+5\overline C/k}{1-5/k}
       +2(\overline Q+1)\epsilon_k.
\]
Some fixed $A$ has expected ratio at most this average. Its choice precedes the random tape, proving \cref{eq:expected-lower-ratio}. In fact, only the query and termination promises need apply to the reference instance; the expected-recourse bound is used solely on members of the hard family.
\end{proof}

If the resource promises hold only on the hard family, the reference execution need not obey them or even terminate. The following separate statement handles that distinction. An almost-surely terminating algorithm here has a standard implementation whose number of computation and random-bit steps before an output is finite almost surely; the expectation of that number may be infinite.

\begin{proposition}[Expected resources on the hard family]\label{prop:family-expected-hard}
Fix the parameters of \cref{prop:finite-hard-instance}. Suppose an algorithm terminates almost surely with feasible outputs on every member of the finite hard family and satisfies $\E N_A\le\overline Q$ and $\E\Delta_A\le\overline C$ on each member. Then some fixed $A$ satisfies
\begin{equation}\label{eq:family-expected-lower}
\frac{\E f_A(S_{n+1})}{\OPT_k(f_A)}
\le\frac{R(T)+38/m+5\overline C/k}{1-5/k}
 +\inf_{H\in\mathbb Z_{\ge0}}
 \left\{\frac{\overline Q}{H+1}+2(H+1)\epsilon_k\right\}.
\end{equation}
The infimum is at most $2\sqrt{2\overline Q\epsilon_k}+2\epsilon_k$, and hence is $O(\sqrt{(\overline Q+1)\epsilon_k})$ with an absolute constant. Post-arrival queries and computation are unrestricted subject to almost-sure termination.
\end{proposition}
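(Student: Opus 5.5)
The proof truncates the reference execution at query budget $H$ and then reuses the argument of \cref{cor:expected-hard-instance}. Fix an integer $H\ge0$ and the algorithm's complete random tape $\omega$. Run the algorithm against the reference oracle $G$, but stop recording once $H$ pre-arrival queries have been asked. Because the reference run need not terminate or obey any promise, we do not use its output. We use only its first at most $H$ queried sets, which form a finite list determined by $\omega$ and independent of $A$. Call a real execution on instance $A$ \emph{short} if $N_A\le H$.

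The bad event for $(A,\omega)$ is the union of two events. The first is $N_A>H$. The second is that one of the at most $H$ truncated reference queries, or the reference current output when the reference run makes at most $H$ queries, has imbalance exceeding $\delta$. On the complement of the second event, induction up to the first differing answer shows that the real and reference executions agree for as long as the reference run asks at most $H$ queries.

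We must also handle the case where the real run is short but the reference run asks more than $H$ queries. The real run is then coupled to the reference run through its whole pre-arrival execution. That execution terminates and outputs after at most $H$ queries, so the reference run does the same. Hence on a short path outside the second event, the reference run also has at most $H$ queries, and its output $S_n$ is a fixed set independent of $A$. We add this set to the union bound. The resulting list has at most $H+1$ fixed sets, so \cref{eq:hypergeo} gives a probability bound of $2(H+1)\epsilon_k$ for the second event.

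For the first event we use Markov's inequality instance by instance: $\Prob(N_A>H)\le\overline Q/(H+1)$ for every member of the family. Averaging over uniform $A$ preserves this bound. This is the only place the promise on the hard family enters, and it is why the tradeoff term is $\overline Q/(H+1)$ instead of the $\overline Q$-linear term of the corollary.

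On a good path the output $S_n$ is balanced. The pathwise bound from the proof of \cref{prop:finite-hard-instance} then gives $f_A(S_{n+1})\le R(T)+38/m+5\Delta_A/k$. We divide by the common optimum, which is at least $1-5/k$. On bad paths the ratio is at most one by feasibility. We bound the good-event contribution of $\Delta_A$ by $\E\Delta_A\le\overline C$, average over $A$ and $\omega$, and extract one fixed $A$ whose ratio is at most the average, exactly as in the corollary. Since $H$ was arbitrary, we take the infimum over $H$.

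The closing estimate is routine. If $\overline Q=0$, take $H=0$, which gives $2\epsilon_k$. Otherwise choose $H+1=\lceil\sqrt{\overline Q/(2\epsilon_k)}\,\rceil$. Then $\overline Q/(H+1)\le\sqrt{2\overline Q\epsilon_k}$ and $2(H+1)\epsilon_k\le\sqrt{2\overline Q\epsilon_k}+2\epsilon_k$, which gives the stated bound.

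The main obstacle is the truncation step. We must make sure the union-bound list is independent of $A$ even though the reference run may be infinite. Adding the reference output only when the reference run stops within $H$ queries handles this, together with the observation that a short real run forces a short reference run on the coupled event.
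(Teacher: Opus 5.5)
Your proof is correct and follows the same skeleton as the paper's. Both truncate at $H$ queries, union-bound over at most $H+1$ reference sets that are fixed by the tape and independent of $A$, apply Markov to $\{N_A>H\}$, and reuse the pathwise value bound.

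The one genuine difference is how a reference run that might stall is handled.
\begin{itemize}
\item The paper also caps the simulation at $B$ computation steps and gives a stopped run the empty output. This makes the reference simulation a terminating procedure. The cost is an extra term $\Prob\{T_A>B\}$, which the paper removes by letting $B\to\infty$ and using almost-sure termination.
\item You drop $B$ and reason directly about the possibly infinite reference execution. Its first at most $H$ queries and its termination status are well-defined functions of the tape (measurable, since each is a countable union of cylinder events).
\item Your key observation is that, off the union-bound event, a short real run forces the reference run to terminate within $H$ queries with the same output. This is exactly what makes the limit unnecessary.
\end{itemize}
The paper's version makes every simulated object finite, so well-definedness and measurability are immediate. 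Yours needs only one truncation and no limiting argument.

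Two small points. First, a real run with $N_A\le H$ could in principle stall, so ``short'' should mean short \emph{and} terminating. This costs only a null set, by almost-sure termination. Second, take the infimum over $H$ before extracting the fixed $A$. As written, this ordering is harmless: the averaged ratio does not depend on $H$, and the infimum is in fact attained because $2(H+1)\epsilon_k\to\infty$.
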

\begin{proof}
Fix an integer $H\ge0$ and a computation budget $B$. Simulate the algorithm before the last arrival, stopping immediately before a would-be $(H+1)$st query or $(B+1)$st computation step. If stopped, give the simulation the empty final current output. This simulation always terminates and makes at most $H$ queries, on every oracle and every tape. Its fallback need not satisfy any recourse or approximation guarantee.

Run the simulation against $G$. Conditional on the tape, its queried sets and final output are at most $H+1$ fixed sets independent of $A$. Except with probability $2(H+1)\epsilon_k$, all lie in the flat band. On this event's complement, the reference simulation and the simulation against $f_A$ agree, including their stopping decisions and final outputs.

Let $T_A$ be the actual number of computation steps before the original algorithm's pre-arrival output. When $N_A\le H$ and $T_A\le B$, the simulation against $f_A$ has not been stopped, so its output is the actual $S_n$. Thus the probability that the actual output is not certified balanced is at most
\[
2(H+1)\epsilon_k+\Prob_{A,\omega}\{N_A>H\}
                         +\Prob_{A,\omega}\{T_A>B\}.
\]
Here and below $A$ is uniform over the finite hard family. Since $N_A$ is integer valued, Markov's inequality bounds the middle term by $\overline Q/(H+1)$. Almost-sure termination on this finite family gives $\Prob_{A,\omega}\{T_A>B\}\to0$ as $B\to\infty$, without a running-time moment bound. The computation cap is needed only to make the reference simulation well defined if it could otherwise stall without another query.

Apply the same pathwise value estimate as above using the actual $\Delta_A$, and use feasibility on bad paths. This bounds the original algorithm's average expected ratio by the right side of \cref{eq:family-expected-lower} with a fixed $H$, plus $\Prob_{A,\omega}\{T_A>B\}$. The average ratio is independent of both caps. First let $B\to\infty$, then take the infimum over $H$, and finally choose one $A$ with expected ratio at most the average. This order preserves an obliviously fixed instance for the optimized bound.

For the explicit estimate, set
\[
H+1=\max\left\{1,\left\lceil\sqrt{\overline Q/(2\epsilon_k)}\right\rceil\right\}.
\]
Substitution gives $\overline Q/(H+1)+2(H+1)\epsilon_k
\le2\sqrt{2\overline Q\epsilon_k}+2\epsilon_k$, including $\overline Q=0$.
\end{proof}

Consequently a fixed improvement over $\bconst$ requires linear expected final recourse or exponentially many expected pre-arrival queries under either expectation promise. The stronger exceptional term of \cref{cor:expected-hard-instance} uses its uniform valid-instance query guarantee; \cref{prop:family-expected-hard} does not assume that guarantee outside the hard family.

\section{Principal Prices with Persistent Matroid-Rank Access}\label[appendix]{app:principal}

Suppose $f(S)=\sum_{a=1}^m w_a r_a(S)$, where $w_a\ge0$ are rational and each $r_a$ is a matroid rank function on the full ground set.  We make the representation and its cost explicit.  At a current ground set $X$ of size $n$, the input contains $m$ persistent component identifiers, binary encodings of the numerators and denominators of the $w_a$, and one rank oracle $\mathcal O_a$ per component.  A call $\mathcal O_a(S)$, for $S\subseteq X$, returns the integer $r_a(S)$ and costs one component-rank query; consequently, evaluating the aggregate value $f(S)$ costs $m$ component-rank queries and polynomial-bit arithmetic.  The total number of components and the total bit length of the weights are part of the input size.  After an arrival, the same identifier $a$ exposes the restriction of the same full matroid to the enlarged current ground set.  Running time and oracle complexity below are polynomial in $n,k,m$, the weight-encoding length, and $1/\varepsilon$.  This represented access is stronger than a promise that the aggregate function has some hidden MRS decomposition.

The principal partition and its density ordering are classical \citep{Fuj09}. The use of these densities as water levels is also present in online submodular assignment \citep{HJP+24}. We include the needed base and supergradient facts for completeness. The additional statement proved here is that prices computed from the current partition certify every future contraction of the same persistent matroid. This uniform contraction certificate is what permits the current-only online implementation.

\begin{theorem}\label{thm:represented-mrs}
In this access model, for every rational $\varepsilon\in(0,\aconst)$ there is a randomized polynomial-time $(\aconst-\varepsilon)$-approximation with hard symmetric recourse at most $4\lceil2/\varepsilon\rceil+2$. A fixed polynomial-time improvement above $\aconst$ would imply $\mathrm{NP}\subseteq\mathrm{BPP}$.
\end{theorem}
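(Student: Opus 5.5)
The plan is to reuse the slot converter of \cref{alg:slots} and change only the way fractional cores are produced. As for coverage, the converter needs a current-only fractional core $x\in P_\kappa$ satisfying $\E H_R(x)\ge(\aconst-\varepsilon')\max_{|O|\le\kappa}f(O\cup R)$ for every fixed $R$, with arbitrary overlap. It also needs $H_R$ to be concave for every such $R$, and \cref{lem:mrs-concavity} supplies this. Concavity together with \cref{lem:slots} and \cref{eq:slot-interpolation} then gives the migration guarantee with hard recourse $2(c+1)\le4B+2$. Taking $B=\lceil2/\varepsilon\rceil$ and the same error split as in the coverage proof gives $\aconst-\varepsilon$ with the stated recourse. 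The hardness clause follows exactly as in \cref{thm:coverage}. Weighted coverage with explicit sets is a matroid-rank sum in which each atom is a rank-one (uniform) matroid on its covering elements, so the component oracles are explicit. The Max-$k$-Cover gap reduction \citep{Fei98} with repeated runs and Hoeffding's inequality then yields $\mathrm{NP}\subseteq\mathrm{BPP}$.

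The core comes from principal-partition prices. For each component $a$ and point $x$, I will compute the principal partition of $r_a$ restricted to $X$ with respect to the weights $x$. This is the chain of minimizers of $r_a(S)-x(S)/q$ as $q$ varies, which yields density blocks $X=P_1\sqcup\cdots$ with densities $q_1\ge q_2\ge\cdots$. Every element of block $j$ receives price $w_ae^{-q_j}$, and $G(x)$ is the sum over components. The certificate to prove is
\[
H_R(x)-\aconst f(O\cup R)\ge\ip{G(x)}{x-\one_O}.
\]
I will prove it component by component, with $f=r$ for a single matroid. First, $H_R(x)\ge H_\varnothing(x)$ holds by monotonicity, and I lower-bound $H_\varnothing(x)$ blockwise. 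On block $j$, contracted by the denser blocks, the restricted matroid is ``uniformly dense'', so the Poisson sample has expected rank at least $r(P_j\mid P_{<j})(1-e^{-q_j})$. Second, by the base/supergradient fact, $r(O\cup R)\le r(R)+\sum_j r(O\cap P_j\mid P_{<j})$, and each term is at most $\min\{|O\cap P_j|,\ r(P_j\mid P_{<j})\}$. Third, I use $\ip{G}{x}=\sum_j e^{-q_j}x(P_j)=\sum_j e^{-q_j}q_j\,r(P_j\mid P_{<j})$. The inequality then reduces, block by block, to the scalar inequality $1-e^{-q}-\aconst\rho\ge qe^{-q}-\rho' e^{-q}$, which is exactly the atom case analysis of \cref{lem:coverage-prices}.

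The future term $R$ is the delicate part. It must be absorbed without knowing $R$, through $r(O\cup R)\le r(O)+r(R\mid\cdot)$. The contribution of $R$ must also be matched in $H_R(x)$, since elements spanned by $R$ only lose marginal. I expect this to be the main obstacle, and the proof should mirror the case $r=1$ in \cref{lem:coverage-prices}. The approach I will try is to compare, in every block, the contracted matroid $r/(R)$ with $r$. Contraction lowers block ranks by $\Delta_j\ge0$ with $\sum_j\Delta_j=r(R\mid\varnothing)$ restricted appropriately. Each unit of $\Delta_j$ costs at most $e^{-q_j}\le$ the gain of $r(R)$ counted at coefficient $1-\aconst=e^{-1}$; this would need $e^{-q}q\le e^{-1}$, which is true.

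Computationally, the principal partition is found with polynomially many rank queries by parametric submodular minimization. Only $O(n)$ breakpoints occur, and they can be located by exact rational Newton/Dinkelbach steps. Since $G$ may be discontinuous at breakpoints, I will not seek stationarity. Instead I will run projected online linear optimization as in \cref{app:price-averaging}, with exact rational prices up to dyadic rounding of $e^{-q}$ to coordinate error $aM$, $a=\eta/(16\kappa)$. By \cref{eq:price-regret} and concavity, the average iterate is then a fractional core with loss $3\eta/8$, which completes the reduction to the slot converter.
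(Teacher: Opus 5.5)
Your outer reduction is sound and matches the paper: the slot converter with \cref{lem:mrs-concavity}, projected averaging of possibly discontinuous prices, principal-partition prices $e^{-q_j}$ computed by parametric submodular minimization, and the Max-$k$-Cover reduction through rank-one components. The gap is in the certificate, which carries the whole theorem. Your comparator step $r(O\cup R)\le r(R)+\sum_j r(O\cap P_j\mid P_{<j})$ is false even for $R=\varnothing$. Diminishing returns gives $r(O\cap P_j\mid O\cap P_{<j})\ge r(O\cap P_j\mid P_{<j})$, so the block sum is a \emph{lower} bound on $r(O)$. For example, take $U_{2,3}$ on $\{a,b_1,b_2\}$ with $x=(2,0.1,0.1)$. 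The chain is $\{a\}$ with $q_1=2$, followed by $\{b_1,b_2\}$ with $d_2=1$ and $q_2=0.2$. Yet $O=\{b_1,b_2\}$ has rank $2$, while your bound gives $1$. An independent set may put more than $d_j$ elements into block $j$; only the prefix constraints $|J\cap F_\ell|\le r(F_\ell)$ hold. You must therefore replace the per-block cap by summation by parts against the nonincreasing weights $\max\{\aconst-e^{-q_j},0\}$, as in \cref{lem:nested-rank-majorization}. This is exactly where the density ordering enters.

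Second, you cannot decouple the future as in your first two steps. Suppose $R$ spans a single block of rank $d=r(R)$ and $O$ is a basis of that block. Then using $H_R\ge H_\varnothing$ together with any bound of the form $r(O\cup R)\le r(R)+r(O)$ fails. Even with $H_\varnothing\le d$ you would need $1-2\aconst\ge(q-1)e^{-q}$, which is false for $q\ge1$. The comparator must be measured in $M/R$. Take $J\subseteq O$ independent in $M/R$ with $|J|=r(O\cup R)-r(R)$, and apply majorization to the contracted rank. Its prefix increments are $n_j=r(F_j\cup R)-r(F_{j-1}\cup R)$ rather than $d_j$. The matching Poisson bound $H_R(x)\ge r(R)+\sum_jn_j(1-e^{-q_j})$ needs an idea your sketch does not contain. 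Each contracted block minor $N_j=(M/(F_{j-1}\cup R))|B_j$ must have a base point dominated by $z^{(j)}=x|_{B_j}/q_j$. The paper obtains this in \cref{lem:base-decrease} from $r_{N_j}(B_j)-r_{N_j}(A)\le r_{L_j}(B_j)-r_{L_j}(A)$ and a capped-polymatroid argument. It then applies \cref{lem:scaled-base} and a diminishing-returns telescoping over the blocks. Only after both repairs does your accounting close. Using $\sum_j(d_j-n_j)\le r(R)$ and $\min\{1-e^{-q},e^{-1}\}\ge qe^{-q}$, each unit of lost block rank costs $q_je^{-q_j}\le e^{-1}$, not $e^{-q_j}$. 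That cost is paid by the term $(1-\aconst)r(R)$.
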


\subsection{The Current Principal Partition}

Fix one current matroid $M$ with rank $r$ and independent-set polytope
\[
P(r)=\{z\ge0:z(S)\le r(S)\text{ for every }S\subseteq X\}.
\]
For $x\ge0$, define the concave perspective potential
\begin{equation}\label{eq:principal-potential}
\mathcal P_r(x)=\max_{z\in P(r)}\sum_{i\in X}z_i(1-e^{-x_i/z_i}),
\end{equation}
where the summand at $z_i=0$ is its limiting value zero. This is not the ordinary Poisson extension of rank. The perspective is jointly concave, so partial maximization over the convex set $P(r)$ preserves concavity in $x$.

Start with the flat $F_0$ of current loops.  Given $F_{j-1}$, put
\[
 \rho_{j-1}(A)=r(F_{j-1}\cup A)-r(F_{j-1})
 \qquad(A\subseteq X\setminus F_{j-1}).
\]
If some remaining load is positive, choose the inclusionwise maximal nonempty maximizer of $x(A)/\rho_{j-1}(A)$.  If all remaining loads vanish, take the entire remainder as one final zero-density block.  Write the selected block as $B_j$, set $F_j=F_{j-1}\cup B_j$, and let
\[
d_j=r(F_j)-r(F_{j-1}),\qquad q_j=x(B_j)/d_j.
\]
The denominator is positive for every nonempty remaining set because $F_{j-1}$ is a flat.  The following lemma records the density ordering, the zero-load case, and the base-polytope facts that we use later.

\begin{lemma}[Principal-chain structure]\label{lem:principal-chain}
Every $F_j$ is a flat, every $d_j$ is positive, and
$q_1\ge q_2\ge\cdots\ge q_s\ge0$.
Indeed, consecutive positive densities are strictly decreasing.  For every block there is a base point $z^{(j)}$ of
$L_j=(M/F_{j-1})|B_j$
such that $x|_{B_j}=q_jz^{(j)}$.  In the zero-density case this means that $x|_{B_j}=0$ and $z^{(j)}$ may be any base point of $L_j$.  The concatenation $z|_{B_j}=z^{(j)}$ and $z|_{F_0}=0$ is a base point of $M|X$ and satisfies $z(F_j)=r(F_j)$ for every $j$.
\end{lemma}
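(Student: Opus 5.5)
We prove the four assertions in the order stated, by induction on $j$, using only matroid closure, submodularity of the contracted rank $\rho_{j-1}$, and the polymatroid greedy characterization of bases.

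\emph{Flatness and positivity.} $F_0$, the closure of $\varnothing$, is a flat. Assume $F_{j-1}$ is a flat. Every element of $X\setminus F_{j-1}$ is then a nonloop of $M/F_{j-1}$, so $\rho_{j-1}(A)>0$ for nonempty $A$; hence $d_j>0$ and each ratio is well defined. Suppose some $e\notin F_j$ satisfies $r(F_j+e)=r(F_j)$. Then $\rho_{j-1}(B_j+e)=\rho_{j-1}(B_j)$, while $x(B_j+e)\ge x(B_j)$. So $B_j+e$ is also a maximizer when $q_j>0$, contradicting inclusionwise maximality. In the zero-density case $B_j$ is the whole remainder, so $F_j=X$. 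Thus $F_j$ is a flat. (Maximizers are closed under union by the standard submodularity argument applied to $x(A)-q\rho(A)$, so the maximal maximizer is unique.)

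\emph{Density ordering.} Let $A\subseteq X\setminus F_j$ be nonempty. If $x(A)/\rho_j(A)\ge q_j$, then since $\rho_{j-1}(B_j\cup A)=d_j+\rho_j(A)$, the mediant inequality gives
\[
\frac{x(B_j\cup A)}{\rho_{j-1}(B_j\cup A)}\ge q_j .
\]
This contradicts the maximality of $B_j$. Hence $q_{j+1}<q_j$ whenever $q_j>0$. Nonnegativity of the densities is immediate.

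\emph{Block bases.} The main step is producing $z^{(j)}$. Put $y=x|_{B_j}/q_j$ for $q_j>0$. Every $A\subseteq B_j$ satisfies
\[
y(A)\le \rho_{j-1}(A)=r_{L_j}(A),
\]
because otherwise $A$ would have density exceeding $q_j$. We also have $y(B_j)=d_j=r_{L_j}(B_j)$. Therefore $y$ lies in the independent-set polytope of $L_j$ and attains full rank, so it is a base point; set $z^{(j)}=y$. In the zero case, take any base of $L_j$.

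\emph{Concatenation.} Concatenating the block bases gives $z(F_j)=\sum_{i\le j}d_i=r(F_j)$ by telescoping. Feasibility in $P(r)$ follows from a chain argument. For any $S$, write $S_i=S\cap B_i$. Then
\[
z(S)=\sum_i z^{(i)}(S_i)\le\sum_i\bigl[r(F_{i-1}\cup S_i)-r(F_{i-1})\bigr].
\]
By submodularity, each summand is at most $r\bigl(S\cap F_i\bigr)-r\bigl(S\cap F_{i-1}\bigr)$, comparing the sets $F_{i-1}\subseteq F_{i-1}\cup S_i$ with $S\cap F_{i-1}\subseteq S\cap F_i$. Summing telescopes to $r(S)$. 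Together with $z(X)=r(X)$, this shows $z$ is a base point of $M|X$.

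I expect the bookkeeping in this last submodularity telescoping to be the only delicate point.
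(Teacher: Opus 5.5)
Your proof is correct and follows essentially the same route as the paper. The block bases come from density maximality and flatness from inclusionwise maximality. Strict decrease comes from comparing a union with $B_j$; your mediant argument over an arbitrary nonempty $A\subseteq X\setminus F_j$ slightly generalizes the paper's comparison with $B_{j+1}$ and also covers the step into a zero-density block. The concatenation uses the same diminishing-returns telescoping, where you use implicitly that $r(S\cap F_0)=0$ and $F_s=X$.
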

\begin{proof}
Suppose first that $q_j>0$.  For every $A\subseteq B_j$, maximality of the density gives
\[
 x(A)\le q_j\rho_{j-1}(A),
 \qquad x(B_j)=q_j\rho_{j-1}(B_j)=q_jd_j.
\]
Thus $z^{(j)}=x|_{B_j}/q_j$ obeys all independent-set-polytope inequalities of $L_j$ and has total mass $d_j$, so it is a base point.  If an element outside $F_j$ were in the closure of $F_j$, adding it to $B_j$ would either increase the numerator without increasing the denominator or preserve both.  The first alternative contradicts maximum density and the second contradicts inclusionwise maximality.  Hence $F_j$ is a flat.

Let $B_{j+1}$ be the next block.  Viewed before contracting $B_j$, the union $B_j\cup B_{j+1}$ has rank increment $d_j+d_{j+1}$.  If $q_{j+1}>q_j$, its density is larger than $q_j$; if $q_{j+1}=q_j>0$, it is a strictly larger maximizer.  Both are impossible.  Therefore the positive densities decrease strictly.  If the maximum density is zero, nonnegativity of $x$ and the absence of contraction loops imply that every remaining coordinate has zero load.  Taking the whole remainder terminates the construction, gives $F_s=X$, and permits an arbitrary base point of the final minor.

It remains to verify the concatenation claim rather than invoke the usual face decomposition of a matroid base polytope.  For $A\subseteq X$, let $A_j=A\cap B_j$ and $A_{\le j}=A\cap F_j$.  The block-base inequalities and diminishing returns give
\begin{align*}
 z(A)
 &\le \sum_j\bigl(r(F_{j-1}\cup A_j)-r(F_{j-1})\bigr)\\
 &\le \sum_j\bigl(r(A_{\le j})-r(A_{\le j-1})\bigr)
 =r(A).
\end{align*}
Here $A\cap F_0$ consists only of loops.  Moreover, $z(X)=\sum_jd_j=r(X)$, and the same calculation with whole prefixes gives $z(F_j)=r(F_j)$.  Hence $z$ is a base point.
\end{proof}

We will use the following elementary majorization statement twice: first for the supergradient and then for the future certificate.

\begin{lemma}[Nested-rank majorization]\label{lem:nested-rank-majorization}
Let $\rho$ be a matroid rank function, let
$F_0\subset F_1\subset\cdots\subset F_s=X$ satisfy $\rho(F_0)=0$, put
$B_j=F_j\setminus F_{j-1}$ and $D_j=\rho(F_j)-\rho(F_{j-1})$, and let $u\ge0$ satisfy $u(F_\ell)\le\rho(F_\ell)$ for every $\ell$.  If $\lambda_1\ge\cdots\ge\lambda_s\ge0$, then
\begin{equation}\label{eq:nested-rank-majorization}
 \sum_j\lambda_j u(B_j)\le\sum_j\lambda_jD_j.
\end{equation}
In particular, this applies to every $u\in P(\rho)$ and to the incidence vector of every independent set.
\end{lemma}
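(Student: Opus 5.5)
Abel summation reduces \eqref{eq:nested-rank-majorization} to the prefix constraints. Set $\lambda_{s+1}=0$ and, for $u\ge0$, write $U_\ell=u(F_\ell\setminus F_0)=\sum_{j\le\ell}u(B_j)$ and $R_\ell=\rho(F_\ell)-\rho(F_0)=\sum_{j\le\ell}D_j$. Summation by parts gives
\[
\sum_{j=1}^s\lambda_j u(B_j)=\sum_{\ell=1}^{s}(\lambda_\ell-\lambda_{\ell+1})U_\ell,
\qquad
\sum_{j=1}^s\lambda_j D_j=\sum_{\ell=1}^{s}(\lambda_\ell-\lambda_{\ell+1})R_\ell .
\]
Every coefficient $\lambda_\ell-\lambda_{\ell+1}$ is nonnegative, because the multipliers are nonincreasing and $\lambda_s\ge0$. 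It therefore suffices to prove $U_\ell\le R_\ell$ for each $\ell$.

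That prefix bound follows directly from the hypotheses. Since $u\ge0$ and $\rho(F_0)=0$,
\[
U_\ell\le u(F_\ell)\le\rho(F_\ell)=R_\ell .
\]
Multiplying by the nonnegative coefficients and summing gives \eqref{eq:nested-rank-majorization}. Only the nonnegativity of $u$ on $F_0$ is needed to pass from $u(F_\ell)$ to $U_\ell$; no constraint on $u(F_0)$ is required. The block $B_0$ does not appear in the sum, so indexing starts at $j=1$, and $D_j$ telescopes to $R_\ell$.

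The ``in particular'' clause is immediate. For $u\in P(\rho)$, the defining inequalities include $u(F_\ell)\le\rho(F_\ell)$ and $u\ge0$. The incidence vector $\one_I$ of an independent set $I$ lies in $P(\rho)$, since $|I\cap S|=\rho(I\cap S)\le\rho(S)$ for every $S$.

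No step is a genuine obstacle. The only point needing care is the sign convention in the summation by parts: the boundary term must use $\lambda_{s+1}=0$, and the final coefficient $\lambda_s$ must be nonnegative. Strict nesting of the chain is not used.
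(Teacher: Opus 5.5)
Your proof is correct and follows essentially the same route as the paper: Abel summation against the nonincreasing, nonnegative multipliers reduces the claim to the prefix bounds $u(F_\ell)\le\rho(F_\ell)$. The only cosmetic difference is that the paper uses $u(F_0)=0$ (from the hypothesis at $\ell=0$), whereas you bypass it via $u(F_\ell\setminus F_0)\le u(F_\ell)$, which is marginally more general.
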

\begin{proof}
Since $u(F_0)=0$, summation by parts gives
\[
 \sum_j\lambda_j u(B_j)
 =\lambda_su(F_s)+\sum_{\ell<s}(\lambda_\ell-\lambda_{\ell+1})u(F_\ell).
\]
All coefficients are nonnegative.  Substituting $u(F_\ell)\le\rho(F_\ell)$ and applying the same identity to the rank increments $D_j$ proves \eqref{eq:nested-rank-majorization}.
\end{proof}

Define $g_i=e^{-q_j}$ for $i\in B_j$, and set $g_i=0$ for $i\in F_0$.
These are current-only prices. They are supergradients even at zero coordinates, where differentiability need not hold.

\begin{lemma}\label{lem:principal-supergradient}
The principal partition gives
\[
\mathcal P_r(x)=\sum_jd_j(1-e^{-q_j}),
\qquad
\mathcal P_r(y)\le\mathcal P_r(x)+\ip{g}{y-x}\quad(y\ge0).
\]
\end{lemma}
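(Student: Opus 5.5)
Writing $\phi(z,x)=z(1-e^{-x/z})$ for the perspective summand, the plan has three steps. First, the base point $z$ of \cref{lem:principal-chain} is a feasible witness for the value formula. Second, every feasible $z'$ obeys a linear upper bound with coefficients $g$. Third, the value formula upgrades that bound to the supergradient inequality. Both the upper bound and the verification use \cref{lem:nested-rank-majorization} with the nonincreasing weights produced by the principal chain.

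\emph{Lower bound on $\mathcal P_r(x)$.} Take the concatenated base point $z$ from \cref{lem:principal-chain}, which lies in $P(r)$. On a block with $q_j>0$ we have $x_i=q_jz_i$. Each summand is therefore $z_i(1-e^{-q_j})$, and it remains so when $z_i=0$, since then $x_i=0$ and the limiting value is zero. On a zero-density block, and on $F_0$, every summand vanishes, consistent with $d_j(1-e^{0})=0$. Summing over a block gives $z(B_j)=d_j$, because $z(F_j)=r(F_j)$ for every $j$. Hence $\mathcal P_r(x)\ge\sum_jd_j(1-e^{-q_j})$.

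\emph{Pointwise tangent inequality.} For fixed $z_i>0$ the map $y\mapsto\phi(z_i,y)$ is concave in $y$. The jointly concave, positively homogeneous perspective also has a supergradient in $(z,y)$. At a point with $y=qz$ this supergradient is $\bigl(1-e^{-q}-qe^{-q},\,e^{-q}\bigr)$. Because $\phi$ is homogeneous of degree one, the tangent plane passes through the origin, which gives
\[
\phi(z',y')\le z'(1-e^{-q}-qe^{-q})+e^{-q}y'
\]
for all $z'\ge0$ and $y'\ge0$; the case $z'=0$ follows by continuity. For $i\in B_j$ we apply this with $q=q_j$, so the coefficient of $y'_i$ is exactly $g_i=e^{-q_j}$.

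On the loops $F_0$, every $z'\in P(r)$ has $z'_i=0$, so those terms vanish and $g_i=0$ is consistent. Now fix $y\ge0$ and any $z'\in P(r)$, and sum the pointwise inequality over blocks:
\[
\sum_i\phi(z'_i,y_i)\le\sum_j\lambda_j z'(B_j)+\ip{g}{y},
\qquad \lambda_j=1-e^{-q_j}-q_je^{-q_j}.
\]
The function $q\mapsto1-e^{-q}-qe^{-q}$ has derivative $qe^{-q}\ge0$, so it is nondecreasing. Since the densities satisfy $q_1\ge q_2\ge\cdots\ge q_s$, we get $\lambda_1\ge\cdots\ge\lambda_s\ge0$. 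The chain $F_j$ starts with $r(F_0)=0$, and $z'\in P(r)$ gives $z'(F_\ell)\le r(F_\ell)$. Therefore \cref{lem:nested-rank-majorization} applies and yields $\sum_j\lambda_jz'(B_j)\le\sum_j\lambda_jd_j$. Maximizing over $z'$ then gives
\[
\mathcal P_r(y)\le\sum_j\lambda_jd_j+\ip{g}{y}.
\]

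\emph{Conclusion.} Set $y=x$ and compute $\ip{g}{x}=\sum_je^{-q_j}x(B_j)=\sum_jq_jd_je^{-q_j}$. Then
\[
\sum_j\lambda_jd_j+\ip{g}{x}=\sum_jd_j(1-e^{-q_j}).
\]
This matches the lower bound and proves the value formula. Rewriting the general bound with this identity gives $\mathcal P_r(y)\le\mathcal P_r(x)+\ip{g}{y-x}$.

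I expect the main obstacle to be the boundary cases of the tangent inequality. When $z_i=0$, the limiting summand is zero and the inequality holds by continuity. Zero-density blocks give $q=0$, $\lambda=0$ and $g_i=1$; there $\phi(z',y')\le y'$ holds directly. Finally, nondifferentiability of $\mathcal P_r$ at points with zero coordinates plays no role, since the argument only uses the explicit linear upper bound.
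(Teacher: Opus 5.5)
Your proposal is correct and follows essentially the same route as the paper. The paper likewise uses the supporting-line bound $z(1-e^{-u/z})\le e^{-q}u+a(q)z$ with $a(q)=1-(1+q)e^{-q}$, applies \cref{lem:nested-rank-majorization} with the nonincreasing weights $a(q_j)$, and obtains equality at $y=x$ from the concatenated block base point of \cref{lem:principal-chain}.
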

\begin{proof}
Let $a(q)=1-(1+q)e^{-q}$. The supporting-line inequality for the exponential perspective is
\[
z(1-e^{-u/z})\le e^{-q}u+a(q)z\qquad(u,z\ge0).
\]
It is tight at $u=qz$ and extends by continuity to $z=0$. Since $a(q_j)$ are nonnegative and nonincreasing along the chain, \cref{lem:nested-rank-majorization} shows
\[
\max_{z\in P(r)}\sum_j a(q_j)z(B_j)=\sum_j a(q_j)d_j.
\]
The upper bound is the lemma, and the concatenated block base from \cref{lem:principal-chain} attains equality. Therefore $\mathcal P_r(y)\le\ip{g}{y}+\sum_j a(q_j)d_j$, with equality at $y=x$. This proves both assertions.
\end{proof}

\subsection{A Certificate for Every Future Contraction}

We give the rank lemmas explicitly to separate current computation from analysis of the unknown future.  We use the following pulled-back minor convention.  For a matroid $M$ on $E$ and arbitrary $A,D\subseteq E$, the notation $(M/A)|D$ denotes the matroid on the labelled coordinate set $D$ with rank
\begin{equation}\label{eq:pulled-back-contraction}
 r_{(M/A)|D}(S)=r_M(A\cup S)-r_M(A)\qquad(S\subseteq D).
\end{equation}
When $A\cap D=\varnothing$, this is the usual contraction followed by restriction.  In general, every element of $A\cap D$ is retained as a loop.  This convention lets current and contracted objects live on the same coordinate set.

\begin{lemma}[Common-extension base domination]\label{lem:base-decrease}
Let $M$ be a matroid, let $B\cap C=\varnothing$, and let $R$ be arbitrary, possibly intersecting $B$.  Define the two matroids on $B$
\[
 L=(M/C)|B,
 \qquad
 N=(M/(C\cup R))|B
\]
using \eqref{eq:pulled-back-contraction}.  Then, for every $A\subseteq B$,
\begin{equation}\label{eq:common-extension-rank-loss}
 r_N(B)-r_N(A)\le r_L(B)-r_L(A).
\end{equation}
Consequently, for every base point $z$ of $L$, there is a base point $\widetilde z$ of $N$ with $0\le\widetilde z\le z$.
\end{lemma}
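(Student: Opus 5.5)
We prove the rank-loss inequality first and then derive the base domination from it by a polymatroid argument. Write both sides through $r=r_M$ using \eqref{eq:pulled-back-contraction}. The right side is
\[
r_L(B)-r_L(A)=r(C\cup B)-r(C\cup A),
\]
and the left side is
\[
r_N(B)-r_N(A)=r(C\cup R\cup B)-r(C\cup R\cup A).
\]
Put $U=C\cup A$, $W=B\setminus A$, and $Y=R$. The claim becomes
\[
r(U\cup Y\cup W)-r(U\cup Y)\le r(U\cup W)-r(U).
\]
This is exactly submodularity (diminishing returns of the set $W$ over $U\subseteq U\cup Y$), valid for arbitrary overlap of $R$ with $B$. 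Elements of $R\cap W$ already lie in $U\cup Y$ and contribute nothing on the left, so no case split is needed.

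For the consequence, fix a base point $z$ of $L$. We seek $\widetilde z\in P(r_N)$ with $\widetilde z\le z$ and $\widetilde z(B)=r_N(B)$. The set $\{y:0\le y\le z,\ y\in P(r_N)\}$ is the independence polytope of the polymatroid $S\mapsto\min_{T\subseteq S}\bigl(r_N(T)+z(S\setminus T)\bigr)$, which is the classical truncation of a polymatroid by a vector. Its maximal vectors have total mass
\[
\min_{T\subseteq B}\bigl(r_N(T)+z(B\setminus T)\bigr).
\]
It therefore suffices to show that this mass is at least $r_N(B)$, i.e.\ that $z(B\setminus T)\ge r_N(B)-r_N(T)$ for every $T\subseteq B$.

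Since $z$ is a base point of $L$, we have $z(B)=r_L(B)$ and $z(T)\le r_L(T)$. Hence
\[
z(B\setminus T)\ge r_L(B)-r_L(T)\ge r_N(B)-r_N(T),
\]
where the last step is \eqref{eq:common-extension-rank-loss} with $A=T$. A maximal vector of the truncated polytope is then a point $\widetilde z$ with $0\le\widetilde z\le z$, $\widetilde z\in P(r_N)$, and $\widetilde z(B)=r_N(B)$, so $\widetilde z$ is a base point of $N$.

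The main point requiring care is this min–max step, which is the polymatroid intersection/truncation theorem (Edmonds). If a self-contained argument is preferred, a greedy decrease works instead. Start from $y=z$ and repeatedly lower a coordinate inside a violated tight set of $r_N$, preserving $y\ge0$. A standard uncrossing of tight sets shows that the process terminates with total mass equal to the displayed minimum. I would cite Edmonds' theorem rather than reprove it.
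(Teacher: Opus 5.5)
Your proof is correct and follows the paper's route. The rank-loss inequality comes from the same diminishing-returns step, and the base domination comes from the same cut condition $z(B\setminus T)\ge r_N(B)-r_N(T)$ applied to maximizing total mass over the capped polytope $P(r_N)\cap\{0\le w\le z\}$. The only difference is that you cite Edmonds' truncation min--max theorem, whereas the paper proves the needed direction inline: it takes a maximizer of $w(B)$ and shows that the union of its tight sets is tight.
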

\begin{proof}
For $A\subseteq B$, diminishing returns, with $C\cup A\subseteq C\cup R\cup A$ and the added set $B$, gives
\[
\begin{split}
r_N(B)-r_N(A)
 &=r_M(C\cup R\cup B)-r_M(C\cup R\cup A)\\
 &\le r_M(C\cup B)-r_M(C\cup A)
 =r_L(B)-r_L(A),
\end{split}
\]
which proves \eqref{eq:common-extension-rank-loss}.  Since $z$ is a base point of $L$,
\begin{equation}\label{eq:capped-base-cut}
 z(B\setminus A)=r_L(B)-z(A)
 \ge r_L(B)-r_L(A)
 \ge r_N(B)-r_N(A).
\end{equation}

Maximize $w(B)$ over $w\in P(r_N)$ with $0\le w\le z$; this nonempty capped polymatroid is compact, so a maximizer exists. At a maximizer, every coordinate with $w_i<z_i$ belongs to a tight rank set, or that coordinate could be increased. Tight rank sets are closed under union: feasibility and rank submodularity force equality throughout
\[
w(A)+w(D)=w(A\cup D)+w(A\cap D)
\le r_N(A\cup D)+r_N(A\cap D)
\le r_N(A)+r_N(D)
\]
whenever $A$ and $D$ are tight. Hence the union $A$ of the tight sets containing uncapped coordinates is tight. All coordinates outside $A$ are at their caps, so $w(B)=r_N(A)+z(B\setminus A)\ge r_N(B)$. Feasibility forces equality. If every coordinate is capped, take $A=\varnothing$ in the same argument.
\end{proof}

\begin{lemma}\label{lem:scaled-base}
If $z$ is a base point of a rank-$d$ matroid $N$, then $H_N(qz)\ge d(1-e^{-q})$ for every $q\ge0$.
\end{lemma}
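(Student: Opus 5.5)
The plan is a one-variable differential inequality along the ray $t\mapsto tz$. We use $z$ being a base point only through the linear constraints $z(S)\le r_N(S)$ and $z(X)=d$; we never decompose $z$ into bases. Write $X$ for the ground set of $N$ and put
\[
\varphi(t)=H_N(tz)=\E\, r_N(Z_t),
\]
where $Z_t$ contains each $i$ independently with probability $1-e^{-tz_i}$. Then $\varphi(0)=0$, and the claim is $\varphi(q)\ge d(1-e^{-q})$.

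\emph{Step 1: the derivative of the Poisson extension.} For any monotone set function $g$ and intensity $x$,
\[
\partial_iH(x)=e^{-x_i}\,\E\, g(i\mid Z_{-i}).
\]
Since $i\notin Z$ with probability $e^{-x_i}$ independently of $Z_{-i}$, and $g(i\mid Z)=0$ when $i\in Z$, this equals $\E\, g(i\mid Z)$. By the chain rule,
\[
\varphi'(t)=\E\Bigl[\sum_i z_i\, r_N(i\mid Z_t)\Bigr].
\]
This is smooth, since $\varphi$ is a finite sum of products of exponentials.

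\emph{Step 2: a pointwise base-point inequality.} The key claim is that for every $S\subseteq X$,
\[
\sum_i z_i\, r_N(i\mid S)\ge d-r_N(S).
\]
Matroid marginals are $0/1$. Moreover, $r_N(i\mid S)=0$ exactly when $i$ lies in the closure $\operatorname{cl}(S)$. Therefore the left side equals $z(X\setminus\operatorname{cl}(S))$. Because $z$ lies in the independent-set polytope, $z(\operatorname{cl}(S))\le r_N(\operatorname{cl}(S))=r_N(S)$. Because $z$ is a base point, $z(X)=d$. Subtracting gives the claim.

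\emph{Step 3: Grönwall.} Apply Step 2 with $S=Z_t$ and take expectations in Step 1. This gives $\varphi'(t)\ge d-\varphi(t)$. Hence
\[
\frac{d}{dt}\bigl(e^t(\varphi(t)-d)\bigr)\ge0,
\]
so $e^t(\varphi(t)-d)\ge\varphi(0)-d=-d$. That is, $\varphi(t)\ge d(1-e^{-t})$ for all $t\ge0$. Taking $t=q$ yields the lemma.

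The only real content is Step 2, where the base-point property converts the marginal mass off the closure into a rank deficit. I expect no serious obstacle there; the one point to state carefully is the Poisson derivative identity in Step 1 when some $z_i=0$. In that case the coordinate contributes nothing to $\varphi'$, and the identity is still valid.
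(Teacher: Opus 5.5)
Your proposal is correct and follows essentially the same route as the paper. Both arguments take the absent-coordinate Poisson derivative along the ray $tz$, bound the $z$-mass outside the closure by $z(X\setminus\operatorname{cl}(Z))\ge d-r_N(Z)$ using the base-point constraints, and integrate $\varphi'\ge d-\varphi$ from $\varphi(0)=0$.
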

\begin{proof}
Let $h(t)=H_N(tz)$. Given the random set $Z$, elements of positive rank marginal are exactly those outside its closure $C$. Their total $z$ mass is at least $d-r_N(C)=d-r_N(Z)$, because $z(C)\le r_N(C)$ and $z(B)=d$. The absent-coordinate form of the Poisson derivative therefore gives $h'(t)\ge d-h(t)$. Integrating from $h(0)=0$ proves the claim.
\end{proof}

The main accounting is as follows: future contraction can destroy some rank in the current principal blocks, but the rank already supplied by the future pays for that loss. The current block prices therefore remain valid without recomputing a partition for the future.

\begin{theorem}[Universal principal-price certificate]\label{thm:principal-certificate}
For every $x\ge0$, every $O\subseteq X$, and every fixed set $R$ in the same full matroid, with arbitrary $R\cap X$,
\begin{equation}\label{eq:principal-certificate}
H_R(x)-\aconst r(O\cup R)\ge\ip{g}{x-\one_O}.
\end{equation}
\end{theorem}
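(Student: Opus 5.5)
The plan is to reduce \cref{eq:principal-certificate} to a per-block scalar inequality along the current principal chain of \cref{lem:principal-chain}, to control the future's effect on each block with \cref{lem:base-decrease,lem:scaled-base}, and to handle the comparator with \cref{lem:nested-rank-majorization} applied in the contraction by $R$.

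\emph{Step 1: lower-bound the response block by block.} Put $e_j=r(R\cup F_j)-r(R\cup F_{j-1})$ and let $N_j=(M/(F_{j-1}\cup R))|B_j$ in the pulled-back convention \cref{eq:pulled-back-contraction}. Then $N_j$ has rank $e_j$, and $F_0$ consists of loops, so $e_0$-type terms vanish. For the Poisson set $Z$, telescope $r(R\cup Z)$ over the chain. In the $j$th increment, replace the earlier random part $Z\cap F_{j-1}$ by all of $F_{j-1}$; by diminishing returns this can only lower the increment. This gives
\[
H_R(x)\ \ge\ r(R)+\sum_j H_{N_j}\bigl(x|_{B_j}\bigr)=r(R)+\sum_j H_{N_j}\bigl(q_jz^{(j)}\bigr).
\]
Since $z^{(j)}$ is a base point of $L_j=(M/F_{j-1})|B_j$, \cref{lem:base-decrease} (with $C=F_{j-1}$) yields a base point $\widetilde z\le z^{(j)}$ of $N_j$. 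Monotonicity of the response and \cref{lem:scaled-base} then give $H_{N_j}(q_jz^{(j)})\ge e_j(1-e^{-q_j})$.

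\emph{Step 2: the comparator side.} Choose $I\subseteq O\setminus R$ independent in $M/R$ with $|I|=r(O\cup R)-r(R)$. Since $g\ge0$, we have $\ip g{\one_O}\ge g(I)$. Write $t_j=|I\cap B_j|$. Also $\ip gx=\sum_jq_je^{-q_j}d_j$. It therefore suffices to show
\[
(1-\aconst)r(R)+\sum_j\Bigl[e_j(1-e^{-q_j})-q_je^{-q_j}d_j+t_j\bigl(e^{-q_j}-\aconst\bigr)\Bigr]\ \ge\ 0.
\]

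\emph{Step 3: pay for rank destroyed by the future.} Write $-q_je^{-q_j}d_j=-q_je^{-q_j}e_j-q_je^{-q_j}(d_j-e_j)$. Since $d_j\ge e_j$ and $\sum_j(d_j-e_j)=r(X)-\bigl(r(R\cup X)-r(R)\bigr)\le r(R)$, and since $qe^{-q}\le 1/e=1-\aconst$, the second part is absorbed by $(1-\aconst)r(R)$. This is the accounting ``the future pays for the rank it destroys.''

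\emph{Step 4: the remaining sum.} What is left is $\sum_j\bigl[e_j\,a(q_j)+t_j(e^{-q_j}-\aconst)\bigr]$, where $a(q)=1-(1+q)e^{-q}\ge0$. This is the step I expect to be the main obstacle, because $t_j$ is not bounded by $e_j$ blockwise. Blocks with $e^{-q_j}\ge\aconst$ contribute nonnegatively and can be dropped. For the rest, set $\lambda_j=(\aconst-e^{-q_j})_+$. The densities $q_j$ are nonincreasing, so $\lambda_j$ is nonincreasing and nonnegative. Apply \cref{lem:nested-rank-majorization} to the rank function $\rho$ of $M/R$ on $X$, the chain $F_j$ (here $\rho(F_0)=0$ because current loops remain loops, and $\rho$'s chain increments are exactly the $e_j$), and $u=\one_I$. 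This gives $\sum_j\lambda_jt_j\le\sum_j\lambda_je_j$, so it suffices to check $a(q)\ge\aconst-e^{-q}$ for each $q\ge0$. That inequality is equivalent to $qe^{-q}\le1/e$, which holds for all $q\ge0$. The zero-density final block and the coordinates in $R\cap X$ behave as in \cref{lem:overlap}: such coordinates are loops in $N_j$ and in $M/R$, so none of the steps requires disjointness.
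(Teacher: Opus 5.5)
Your proposal is correct and follows the paper's proof essentially step for step. The shared steps are the telescoped blockwise bound $H_R(x)\ge r(R)+\sum_j n_j(1-e^{-q_j})$ via base domination and the scaled-base lemma, the independent set in $M/R$ handled by nested-rank majorization with weights $(\aconst-e^{-q_j})_+$, and the accounting $\sum_j(d_j-n_j)\le r(R)$ together with $qe^{-q}\le e^{-1}$. The only difference is cosmetic: you split $d_j=e_j+(d_j-e_j)$ before using $a(q)\ge(\aconst-e^{-q})_+$, whereas the paper packages the same estimate as $n_j\min\{1-e^{-q_j},1-\aconst\}\ge n_jq_je^{-q_j}$.
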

\begin{proof}
On the coordinate set $X$, use the pulled-back contracted rank $r^R(S)=r(S\cup R)-r(R)$. This is the rank of $M/R$ restricted to $X\setminus R$, with every element of $X\cap R$ retained as a loop. Thus all block sets below remain subsets of the same coordinate set even in the presence of overlap. Put $c=r(R)$ and
\[
n_j=r(F_j\cup R)-r(F_{j-1}\cup R),\qquad 0\le n_j\le d_j.
\]
For each $j$, set $C_j=F_{j-1}$ and define, on the common labelled set $B_j$,
\[
 L_j=(M/C_j)|B_j,
 \qquad
 N_j=(M/(C_j\cup R))|B_j.
\]
Their ranks are $d_j$ and $n_j$, respectively.  Elements of $B_j\cap R$ are loops of $N_j$.  By \cref{lem:principal-chain,lem:base-decrease}, $N_j$ has a base point $\widetilde z^{(j)}\le z^{(j)}$. When $q_j>0$, $x|_{B_j}=q_jz^{(j)}\ge q_j\widetilde z^{(j)}$, so monotonicity and \cref{lem:scaled-base} give expected $N_j$-rank at least $n_j(1-e^{-q_j})$. When $q_j=0$ this lower bound is zero anyway.

For completeness, let $S_j=S\cap B_j$ and $S_{\le j}=S\cap F_j$ for a deterministic $S\subseteq X$.  Diminishing returns gives
\begin{align*}
r^R(S)
 &=\sum_j\bigl(r(R\cup S_{\le j})-r(R\cup S_{\le j-1})\bigr)\\
 &\ge\sum_j\bigl(r(R\cup F_{j-1}\cup S_j)-r(R\cup F_{j-1})\bigr)
 =\sum_j r_{N_j}(S_j).
\end{align*}
Taking expectation under independent Poisson sampling and applying the preceding block bounds yields
\begin{equation}\label{eq:principal-rank-lower}
H_R(x)\ge c+\sum_jn_j(1-e^{-q_j}).
\end{equation}

Choose $J\subseteq O$ independent in $M/R$ with $|J|=r(O\cup R)-c$. Since $g\ge0$, replacing $O$ by $J$ makes the desired inequality harder. The block weights $\aconst-e^{-q_j}$ are nonincreasing along the chain. Discarding negative weights and telescoping the nested rank constraints on $J$ gives
\[
\sum_{i\in J}(\aconst-g_i)
\le\sum_j\max\{\aconst-e^{-q_j},0\}|J\cap B_j|
\le\sum_j n_j\max\{\aconst-e^{-q_j},0\}.
\]
The last inequality is precisely \cref{lem:nested-rank-majorization} for the contracted rank $r^R$: its prefix increments along the same chain are the $n_j$.  Notice that $F_0$ consists of loops also under contraction, so an independent $J$ contains no element of $F_0$.
Using $\ip{g}{x}=\sum_jd_jq_je^{-q_j}$ and \cref{eq:principal-rank-lower}, the left side of \cref{eq:principal-certificate} minus its right side is at least
\[
(1-\aconst)c+
\sum_j n_j\min\{1-e^{-q_j},1-\aconst\}
-\sum_jd_jq_je^{-q_j}.
\]
Both $1-e^{-q}$ and $1-\aconst=e^{-1}$ are at least $qe^{-q}$. Moreover
\[
\sum_j(d_j-n_j)=r(X)+r(R)-r(X\cup R)\le c.
\]
The displayed lower bound is therefore at least
$(1-\aconst)c-\sum_j(d_j-n_j)q_je^{-q_j}\ge0$, as required.
\end{proof}

Sum \cref{eq:principal-certificate} over the supplied components to obtain prices $G(x)=\sum_a w_ag^{(a)}(x)$ with $0\le G_i(x)\le f(\{i\})$ and
\[
H_R(x)-\aconst f(O\cup R)\ge\ip{G(x)}{x-\one_O}.
\]
Apply the projected averaging scheme of \cref{app:price-averaging}. Every fixed-set contraction of a full MRS function has concave Poisson extension \citep{DRY11}. For overlap, write $r_a(S\cup R)=r_a(R)+r_a^R(S)$ on the common coordinate set $X$, retaining $R\cap X$ as loops. Both the certificate and concavity therefore hold for the common overlapping set used by the slot algorithm. With $\bar x=I^{-1}\sum_sx^s$ and
$P_R:=\max_{O\subseteq X,\,|O|\le\kappa}f(O\cup R)$,
\[
H_R(\bar x)\ge I^{-1}\sum_s H_R(x^s)
\ge\aconst P_R-\eta M.
\]
This core does not maximize the ordinary current Poisson value. Its robustness is supplied by the principal-price inequality.

\subsection{Exact Current-Rank Computation and the Online Theorem}

Maximum-density blocks can be found using polynomially many submodular minimizations \citep{IFF01}. To make the bit model explicit, clear denominators of the current coordinates: write $v_i=Dx_i\in\mathbb Z_{\ge0}$. After removing loops, the maximum density in these units is
\[
q^*=\max_{S\ne\varnothing}\frac{v(S)}{r(S)}.
\]
Its reduced denominator is at most $n$. For rational $q$, minimizing $qr(S)-v(S)$ determines whether $q<q^*$: the minimum is negative exactly in that case. Binary search in $[0,v(X)+1]$ to width below $1/(8n^2)$ identifies $q^*$ uniquely among rationals of denominator at most $n$, and exact rational reconstruction recovers it.

At equality, minimize
\[
q^*r(S)-v(S)-\frac{|S|}{\operatorname{den}(q^*)(n+1)}.
\]
The perturbation is smaller than the gap between distinct unperturbed values, so it selects the maximum-cardinality minimizer. The union of minimizers is a minimizer by submodularity, making this the unique maximal one. Contract this block, divide its density by $D$, and repeat at most $n$ times. Zero remaining loads give a single zero-density remainder.

Each binary search uses $O(\log(v(X)+1)+\log n)$ submodular minimizations and there are at most $n$ blocks. Before each minimization call, clear the polynomial-bit denominators in its rational objective; this produces an integer-valued submodular objective of polynomial encoding length without changing its minimizers. For component $a$, one value query to that objective uses one call to $\mathcal O_a$ plus polynomial-bit arithmetic, so a strongly polynomial submodular-minimization algorithm makes $\poly(n)$ component-rank calls per minimization \citep{IFF01}. Repeating the construction over all $m$ represented components therefore costs a number of component-rank calls polynomial in $n,m$ and the coordinate-encoding length; there is no oracle call that asks for or recovers an aggregate decomposition. Projection, rational reconstruction, and the bounded number of averaging updates have polynomial bit complexity in the same parameters and the supplied weight encodings.

Prices $e^{-q_j}$ are approximated downward with the fixed absolute precision requested by the averaging routine. Since $q_j\le\kappa$ when positive, elementary range reduction and Taylor bounds suffice in work polynomial in the output precision. A loop of component $a$ contributes exactly zero. Approximating each nonloop component price to error $a$ gives weighted coordinate error at most $a\sum_{b:r_b(\{i\})=1}w_b=a f(\{i\})\le aM$.  Thus summing the $m$ approximated component prices requires arithmetic polynomial in $m$, the total weight bit length, and the requested precision.

The slot conversion of \cref{sec:structure} applies to the resulting deterministic fractional core. Its snapshot computation uses only current component-rank answers, and its deterministic tie-breaking and iterates never inspect the categorical slots. Use static error $\eta=\varepsilon/16$, slot total-variation error $\rho=\varepsilon/16$, and $B=\lceil2/\varepsilon\rceil$. The coefficient is at least $(\aconst-\eta)(1-2/B)-\rho\ge\aconst-\varepsilon$, and the hard recourse is $4B+2$. For $k<2B$, recompute ordinary greedy.  The number of components, their weight bits, and every call to a component-rank oracle are charged as specified at the start of this appendix, so the entire snapshot and online conversion are polynomial in the represented input size and $1/\varepsilon$. This proves the algorithmic part of \cref{thm:represented-mrs}.

Maximum coverage is a sum of explicit rank-one matroids, one per atom. The final-output reduction from \cref{thm:coverage} therefore proves conditional computational optimality for this represented class as well. It is not a hardness theorem for recovering an unknown decomposition, nor a new lower bound on recourse.

\end{document}